\newcommand\eat[1]{}
\newlength{\wordlength}
\newcommand{\wordbox}[3][c]{\settowidth{\wordlength}{#3}\makebox[\wordlength][#1]{#2}}
\newcommand{\set}[1]{\{#1\}}
\newcommand{\midd}{\mathbin{:}}
\newcommand{\eqclass}[2][]{\ifthenelse{\equal{#1}{}}{[#2]}{[#2]_{\sim_{#1}}}}
\newcommand{\pref}{\succsim\xspace}
\newcommand{\Pref}[1][]{
	\ifthenelse{\equal{#1}{}}{\mathrel \succsim}{\mathop{\succsim_{#1}}}
}                                          
\newcommand{\sPref}[1][]{                  
	\ifthenelse{\equal{#1}{}}{\mathrel P}{\mathop{P_{#1}}}
}                                          
\newcommand{\Indiff}[1][]{                 
	\ifthenelse{\equal{#1}{}}{\mathrel I}{\mathop{I_{#1}}}
}
\newcommand{\prefset}[1][]{\ifthenelse{\equal{#1}{}}{\mathcal{R}}{\mathcal{R}_{#1}}}
\title{Cake Cutting Algorithms \\for Piecewise Constant and Piecewise Uniform Valuations\\
}
\author{%
	Haris Aziz\inst{1} \and Chun Ye\inst{2}}
\institute{%
NICTA and UNSW, Sydney Australia NSW 2033\\
	\email{haris.aziz@nicta.com.au}\and 
	Columbia University,
	New York, NY 10027-6902, USA \\
		\email{cy2214@columbia.edu}
	}
\begin{document}

\maketitle

\begin{abstract}
	Cake cutting is one of the most fundamental settings in fair division and mechanism design without money.
In this paper, we consider different levels of three fundamental goals in cake cutting: fairness, Pareto optimality, and strategyproofness. In particular, we present robust versions of envy-freeness and proportionality that are not only stronger than their standard counter-parts but also have less information requirements.  
We then focus on cake cutting with piecewise constant valuations and present three desirable algorithms: 
\emph{CCEA (Controlled Cake Eating Algorithm)}, \emph{MEA (Market Equilibrium Algorithm)} and \emph{CSD (Constrained Serial Dictatorship)}.
CCEA is polynomial-time, robust envy-free, and non-wasteful. It relies on parametric network flows and recent generalizations of the probabilistic serial algorithm. For the subdomain of piecewise uniform valuations, we show that it is also group-strategyproof.  
Then, we show that there exists an algorithm \emph{(MEA)} that is polynomial-time, envy-free, proportional, and Pareto optimal. MEA is based on computing a market-based equilibrium via a convex program and relies on the results of \citet{RePo98a} and \citet{DPSV08a}. Moreover, we show that MEA and CCEA are equivalent to mechanism 1 of Chen et. al. \cite{{CLPP12a}} for piecewise uniform valuations. 
We then present an algorithm \emph{CSD} and a way to implement it via randomization that satisfies strategyproofness in expectation, robust proportionality, and unanimity for piecewise constant valuations. For the case of two agents, it is robust envy-free, robust proportional, strategyproof, and polynomial-time.
Many of our results extend to more general settings in cake cutting that allow for variable claims and initial endowments. We also show a few impossibility results to complement our algorithms. The impossibilities show that the properties satisfied by CCEA and MEA are maximal subsets of properties that can be satisfied by any algorithm for piecewise constant valuation profiles.
%
\end{abstract}


\section{Introduction}


	Cake cutting is one the most fundamental topics in fair division~\citep[see \eg][]{Proc12a,BrTa96a,RoWe98a}. It concerns the setting in which a cake is represented by an interval $[0,1]$ and each of the $n$ agents has a value function over the cake. The main aim is to divide the cake fairly. 
	The framework is general enough to encapsulate the important problem of allocating a heterogeneous divisible good among multiple agents with different preferences.
	The cake cutting problem applies to many settings including the division of rent among housemates, disputed land between land-owners, and work among co-workers. It is especially useful in scheduling the use of a valuable divisible resource such as server time.

	Within the cake cutting literature, the most important criteria of a fair allocation are \emph{envy-freeness} and \emph{proportionality}.  In an envy-free allocation, each agent considers his allocation at least as good as any other agent's allocation. An envy-free allocation is guaranteed to exist~\citep[see \eg][]{Stro80a, Su99a}. In a proportional allocation, each agent gets at least $1/n$ of the value he assigns to the cake. A desirable aspect of envy-freeness is that it implies proportionality.\footnote{This statement holds with the additional assumption of \emph{full allocation}: that every portion of the cake that is desired by at least one agent is allocated to some agent.  Otherwise, the empty allocation satisfies envy-freeness, but not proportionality.}

	Computation of a fair allocation of cake is one of the fundamental problems in algorithmic economics. \citet{BrTa95a}  designed an envy-free cake cutting algorithm for an arbitrary number of players. Although their algorithm is guaranteed to eventually terminate, its running time is unbounded. Moreover, the algorithm can divide the cake into infinitely small segments. 
	Since the result of \citet{BrTa95a}, researchers have examined restricted value density functions and proposed envy-free algorithms for them.  
In order to ascertain the running time of a cake cutting algorithm, it is important to know the computational model and input to the problem. In most of the literature~\citep[see \eg][]{RoWe98a}, it is assumed that the value an agent ascribes to any segment of the cake can be queried or evaluated via an oracle. While the classical literature uses this query model, computer scientists recently looked at the problem from the point of view of full report, as is common in mechanism design.
Throughout the paper we focus on \emph{piecewise constant value density functions} and \emph{piecewise uniform value density functions}. 
Piecewise constant value density functions are one of the most fundamental class of value functions. Piecewise uniform valuations are a restricted class of piecewise constant valuations. 
	\citet{CLPP10a, CLPP12a} presented a discrete, strategyproof, polynomial-time, envy-free and Pareto optimal algorithm for piecewise uniform valuations. They stated that generalizing their results for piecewise constant valuations is an open problem. They also presented an envy-free and proportional algorithm that satisfies strategyproofness by resorting to randomization.


In this paper, we consider three of the most enduring goals in mechanism design and fair division: fairness, Pareto optimality and strategyproofness. Since many fair division algorithms need to be deployed on a large scale, we will also aim for algorithms that are \emph{computationally} efficient. Our main research question in this paper is as follows: \emph{among the different levels of fairness, Pareto optimality, strategyproofness, and efficient computability, what are the maximal set of properties that can be satisfied simultaneously for piecewise constant and piecewise uniform valuations?} Our main contribution is a detailed study of this question including the formulation of a number of desirable cake cutting algorithms satisfying many of the properties.

In the case where Pareto optimality cannot be satisfied, we also consider a weaker notion of efficient called \emph{non-wastefulness}. Non-wastefulness dictates that every portion of the cake that is desired by at least one agent is allocated to some agent who desires it.

For fairness, we not only consider the standard notions envy-freeness and proportionality but we also propose the concept of \emph{robust fairness} --- in particular \emph{robust envy-freeness} and \emph{robust proportionality}. The main idea of an allocation being robust envy-free is that even if an agent readjusts his value density function, as long as the ordinal information of the function is unchanged, then the allocation remains envy-free. The main advantages of robust envy-freeness are less information requirements and envy-freeness under uncertainty. 
\footnote{Although full information is a standard assumption in cake cutting, it can be argued that it is unrealistic that agents have exact utility value for each segment of the cake. Even if they do report exact utility values, they may be uncertain about these reports. Robust fairness bypasses these issues.}

For strategic properties, we consider three notions of truthfulness namely strategyproofness, weak group-strategyproofness and group-strategyproofness. In most of the cake-cutting literature, an algorithm is considered `strategyproof' if truth-telling is a maximin strategy~\citep{Bram08a} and it need not be dominant strategy incentive compatible. When we refer to strategyproofness, we will mean the reporting the truthful valuations is a dominant strategy. This stronger notion of strategyproofness has largely been ignored in cake-cutting barring a few recent exceptions~\citep{CLPP10a, CLPP12a, MoTa10a,MaNi12a}. 	\footnote{\citet{Proc12a} writes that the ``design of strategyproof cake cutting algorithms is largely an open problem.''}

	We present \emph{CCEA (Controlled Cake Eating Algorithm)} for piecewise constant valuations and show that it is polynomial-time and robust envy-free and robust proportional. 
	CCEA depends on a reduction to the  generalizations~\citep{AtSe11a,KaSe06a} of the \emph{PS (probabilistic serial)} algorithm introduced by \citet{BoMo01a} in the context of random assignments.\footnote{The CC algorithm of \citet{AtSe11a} is a generalization of the EPS algorithm \citep{KaSe06a} which in turn is a generalization of PS algorithm of \citet{BoMo01a}.} The algorithm relies on solving the parametric network flows~\citep[see \eg][]{GGT89a}.
	We show that the algorithm can handle variable claims and private endowments for piecewise constant valuations and also satisfies group-strategyproofness under piecewise uniform valuations.

		If we insist on Pareto optimality, then we show that there exists an algorithm which we refer to as the \emph{MEA (Market Equilibrium Algorithm)}  that is discrete, polynomial-time Pareto optimal, envy-free, and proportional for piecewise constant valuations. The algorithm relies on the Walras equilibrium formulation of \citet{RePo98a} for finding an $\alpha$-envy-free for general cake cutting valuations and the result of \citet{DPSV08a} that market equilibrium for Fischer markets with linear utilities can be computed in polynomial time. Both CCEA and MEA not only coincide on piecewise uniform valuations but are also group-strategyproof.
	
	Although CCEA and MEA are desirable algorithms, they are not strategyproof for piecewise constant valuations. %
		           We present another algorithm called \textit{CSD (Constrained Serial Dictatorship)} which is strategyproof in expectation, robust proportional, and satisfies unanimity. For the important case of two agents\footnote{Many fair division problems involve disputes between two parties.}, it is polynomial-time,  and robust envy-free. To the best of our knowledge, it is the first cake cutting algorithm for piecewise constant valuations that satisfies strategyproofness, proportionality, and unanimity at the same time. CSD requires some randomization to achieve strategyproof in expectation. However, CSD is discrete in the sense that it gives the same utility guarantee (with respect to the reported valuation functions) over all realizations of the random allocation. Although CSD uses some essential ideas of the well-known \emph{serial dictatorship} rule for discrete allocation, it is significantly more involved.

	

	%
	%
	%
	Our main technical results are as follows.


	\begin{theorem}\label{th:maintheorem1}
		For cake cutting with piecewise constant valuations, 
		there exists an algorithm (CCEA) that is discrete, polynomial-time, robust envy-free, and non-wasteful. 
		%
		\end{theorem}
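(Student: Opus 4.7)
The plan is to describe CCEA explicitly and then verify the four properties in turn. First, I would reduce the cake cutting instance to a finite structure: since each $v_i$ is piecewise constant, the union of all breakpoints partitions $[0,1]$ into a finite set of subintervals $I_1, \ldots, I_m$ on which every agent's value density is constant. This turns the problem into one where agents have (cardinal) values over the finitely many ``slots'' $I_1, \ldots, I_m$, and the output of the algorithm will be a fractional allocation specifying how much of each $I_k$ is given to each agent.

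Second, I would define CCEA as a continuous-time simultaneous eating process, in the style of the probabilistic serial (PS) algorithm and its generalizations (EPS of \citet{KaSe06a} and CC of \citet{AtSe11a}). All agents eat at a common unit rate, and at each moment every agent consumes (fractionally, splitting the rate across subintervals) from the set of still-available subintervals that are currently most preferred by that agent according to $v_i$. The process continues until every subinterval that some agent still desires has been exhausted. A phase ends when the set of currently top-ranked still-available subintervals for some agent changes, which happens only finitely many times. Inside a single phase, computing how long the phase lasts and the corresponding eating rates reduces to a feasibility/capacity question on a bipartite graph between agents and subintervals, which is exactly where the parametric max-flow machinery of \citet{GGT89a} enters: each phase is solved by a single parametric network flow computation, and this is what underpins the polynomial runtime.

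Third, I would verify the four properties. \emph{Discrete and polynomial-time}: the entire allocation is described by the fractions of each $I_k$ assigned to each agent, and there are at most $O(nm)$ phases (each phase ends with either a subinterval being exhausted or an agent's top set changing); each phase is solved by a polynomial-time parametric max-flow, so the total running time is polynomial in $n$ and $m$, and $m \le \sum_i (\text{number of breakpoints of } v_i)$ is polynomial in the input. \emph{Non-wasteful}: by construction agents only eat subintervals they positively value, and the process continues until no agent desires any remaining piece, so every positively-valued piece is allocated to an agent who values it. \emph{Robust envy-freeness}: this follows from the standard PS-style stochastic-dominance argument. At every moment, agent $i$ is eating from the set of subintervals that are top-ranked for $i$ among those still available, so at every moment $i$'s allocation first-order stochastically dominates (with respect to $i$'s ordinal ranking over subintervals) the allocation being accumulated by any other agent $j$. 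This ordinal dominance implies that for every cardinal valuation $v_i'$ consistent with $i$'s ordinal ranking over the $I_k$, agent $i$ weakly prefers its own bundle to $j$'s bundle, which is exactly robust envy-freeness.

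The main obstacle I anticipate is the polynomial-time claim: carefully bounding the number of phases, showing that the transition events (an interval being exhausted, or an agent's top set changing because some subinterval became unavailable) can be detected in polynomial time, and reducing the per-phase computation cleanly to parametric network flow in the presence of possibly many agents tied on the same top set. The robust envy-free argument, by contrast, is essentially a direct translation of the PS stochastic-dominance proof to the cake setting, and non-wastefulness is immediate from the eating rule.
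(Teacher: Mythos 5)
Your proposal is correct and takes essentially the same route as the paper: there, CCEA also partitions the cake at the union of the agents' breakpoints, treats the resulting subintervals as (partial) houses, and runs the CC/EPS simultaneous-eating machinery based on parametric max-flows, with robust envy-freeness obtained by the same equal-eating-rate SD-dominance argument, non-wastefulness by the same ``agents eat only desired pieces until every desired piece is consumed'' observation, and polynomial time by bounding the number of parametric flow computations by $O(nm)$. The only difference is presentational: the paper formally reduces to the random-assignment setting and invokes the known properties of the Controlled Consuming algorithm of Athanassoglou and Sethuraman (and of EPS), whereas you re-derive the eating process and its properties directly on the cake.
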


			\begin{theorem}\label{th:maintheorem3}
		For cake cutting with with piecewise constant valuations, there exists an algorithm (MEA)  that is discrete, polynomial-time, Pareto optimal, and envy-free.
\end{theorem}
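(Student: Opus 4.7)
The plan is to reduce the piecewise constant cake cutting problem to a linear Fisher market and invoke an existing polynomial-time equilibrium algorithm. First I would collect all breakpoints of the agents' piecewise constant value density functions, and let them partition $[0,1]$ into atomic intervals $I_1,\ldots,I_m$, where by construction each $v_i$ takes a constant value $v_{ij}$ on $I_j$. The number $m$ is polynomial in the input size. Then I would build a Fisher market with $m$ divisible goods (good $j$ with supply $|I_j|$), $n$ agents each endowed with one unit of money, and linear utility $u_i(x_i)=\sum_j v_{ij}x_{ij}$.

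By the result of \citet{DPSV08a}, an equilibrium price vector together with an allocation $(x_{ij})$ for this linear Fisher market can be computed in polynomial time, and all numbers are rational with polynomially bounded bit length. I would convert the equilibrium allocation back into a cake allocation by further subdividing each $I_j$ into consecutive subintervals of lengths $x_{ij}\cdot|I_j|$ and handing each such subinterval to the corresponding agent $i$; the output is therefore discrete, consisting of at most $nm$ pieces. To obtain the two desired properties I would invoke the classical consequences of a competitive equilibrium from equal incomes (CEEI): since every agent has the same budget, each other agent's bundle lies in every agent's own budget set, so utility maximization immediately yields envy-freeness, and Pareto optimality follows from the first welfare theorem for the Fisher market. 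The link to the framework of \citet{RePo98a}, who used a Walrasian equilibrium to obtain $\alpha$-envy-free allocations for general cake cutting, is what motivates the reduction; the piecewise constant structure here lets us replace approximation by an exact, discrete computation.

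The main obstacle I would expect is transferring Pareto optimality from the finite market back to the cake: a priori the cake setting admits a much richer space of allocations, because each $I_j$ can be split in arbitrary measurable ways among the agents, so the first welfare theorem for the finite market does not directly rule out a cake-level Pareto improvement. The argument is that for any candidate cake allocation one can aggregate, for each atomic interval $I_j$ and agent $i$, the Lebesgue measure of the portion assigned to $i$; since $v_i$ is constant on $I_j$, this aggregation preserves every agent's utility exactly. Hence any Pareto improvement at the cake level would descend to a Pareto improvement on the Fisher market allocation, contradicting CEEI optimality. The remaining checks---bit length of the DPSV output, and polynomial-time assembly of the final partition---are routine.
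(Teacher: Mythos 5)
Your proposal is correct and follows essentially the same route as the paper: discretize at the breakpoints, identify the problem with a linear Fisher market with equal incomes, invoke \citet{DPSV08a} for polynomial-time computation, and get envy-freeness from the equal-budget (CEEI) argument. The only cosmetic difference is that the paper derives Pareto optimality directly from optimality of the Nash-product (Eisenberg--Gale) objective, whereas you use the first welfare theorem together with the observation that per-interval measures capture all cake-level utilities; these are equivalent here, and your explicit handling of the cake-to-market descent is a point the paper treats only implicitly.
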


	\begin{theorem}\label{th:maintheorem2}
		For cake cutting with piecewise uniform valuations,
		there exists algorithms (CCEA and MEA) that are discrete, 
	group strategyproof, polynomial-time, robust envy-free and Pareto optimal. 
		\end{theorem}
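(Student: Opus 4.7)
The plan is to leverage the two preceding theorems together with the fact that piecewise uniform valuations form a strict subclass of piecewise constant valuations. Theorem~\ref{th:maintheorem1} already yields that CCEA is discrete, polynomial-time, robust envy-free, and non-wasteful on the smaller domain, and Theorem~\ref{th:maintheorem3} yields that MEA is discrete, polynomial-time, Pareto optimal, and envy-free. The remaining items to establish for the piecewise uniform case are therefore (a) Pareto optimality of CCEA, (b) robust envy-freeness of MEA, and (c) group strategyproofness of both algorithms.

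For (a) and (b) the natural route is to show that on piecewise uniform valuations CCEA and MEA produce \emph{the same} allocation, and moreover that this common allocation coincides with Mechanism~1 of \cite{CLPP12a}. The equivalence proceeds by observing that when each value density is uniform on its support, the parametric network flow driving CCEA produces rates that agree with the rates coming from the KKT conditions of the Eisenberg--Gale-style convex program underlying MEA: both reduce to a water-filling procedure over the intervals, indexed by the subsets of agents that desire them. Given this equivalence, Pareto optimality of CCEA follows from that of MEA, and robust envy-freeness of MEA follows from that of CCEA. Alternatively, one can give self-contained arguments. For (a), on piecewise uniform valuations a non-wasteful envy-free allocation that equalises lengths across equally-desired pieces (as CCEA does by construction) admits no Pareto improvement: any reallocation of a positive-length sub-interval either leaves a desiring agent strictly worse off or transfers mass to an agent that does not desire it. For (b), the full ordinal information of a piecewise uniform valuation determines the valuation up to a positive scaling, so envy-freeness upgrades to robust envy-freeness automatically.

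For (c), the main technical challenge is group strategyproofness. The shortest path is to invoke the equivalence with Mechanism~1 of \cite{CLPP12a} and inherit group strategyproofness from there. For a self-contained argument, the plan is to characterise the utility each agent $i$ receives under CCEA as a combinatorially determined share of the union of its desired intervals, depending only on the support structure of the reports. One then argues that any coalition $C$ that jointly misreports either shrinks or shifts the pool of mass collectively available to $C$: shrinking strictly hurts some member, while shifting can only transfer utility between members. Hence any strict gain for one member of $C$ must be paid for by a strict loss for another, ruling out a profitable joint deviation.

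The step I expect to be the main obstacle is the clean formalisation of the equivalence between CCEA, MEA, and the Chen et al.\ mechanism on piecewise uniform valuations, because each is described in quite different primitives (parametric network flows, a convex program, and a direct combinatorial procedure), and aligning them requires writing out the KKT and flow-optimality conditions and identifying the common water-filling solution. Once this identification is in place, the rest of the properties fall out immediately from the preceding theorems and from known results about the Chen et al.\ mechanism.
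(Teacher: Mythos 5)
Your decomposition of the remaining obligations is essentially right, and two of your shortcuts are sound and even a bit more elementary than the paper's route: for a piecewise uniform density the ordinal equivalence class $\hat{V_i}$ consists exactly of the positive scalings of $v_i$ (same support, constant on it), so envy-freeness upgrades to robust envy-freeness for free; and non-wastefulness alone already implies Pareto optimality in this subdomain, since each utility is proportional to the desired length captured and the sum of captured desired lengths is maximal under any non-wasteful allocation. The genuine gap is in item (c), which is the crux of the theorem. You propose to ``inherit group strategyproofness'' from Mechanism~1 of \citet{CLPP12a} via the equivalence with CCEA, but \citet{CLPP10a,CLPP12a} only establish \emph{individual} strategyproofness of their mechanism; there is no group strategyproofness result there to inherit. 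In the paper the implication runs in the opposite direction: the equivalence (Lemma~\ref{prop: equivalence}) is used so that a new, direct proof that CCEA is group strategyproof (Proposition~\ref{prop:gsp1}) simultaneously yields GSP of Mechanism~1. The paper also points out that GSP does not follow automatically from the known group strategyproofness of the CC/egalitarian rule for dichotomous preferences, because a deviating coalition in cake cutting can report arbitrary piecewise uniform supports and thereby change the interval structure itself, which lies outside the fixed-object assignment model; so simply citing a known GSP result cannot close this step either.

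Your fallback ``self-contained'' sketch is too coarse to substitute for that missing proof. The paper's argument is an induction over the bottleneck sets $X_1, X_2, \ldots$ computed on the true profile: first, no member of $X_1$ can end up strictly worse off under the coalition's deviation (using the characterization $len(X_1)=\min_{S\subseteq N} len(\bigcup_{j\in S} D_j)/|S|$ and the fact that the first bottleneck set of the manipulated instance contains no manipulator); second, no member of $X_1$ can end up strictly better off (a length-conservation argument confined to $\bigcup_{i\in X_1} D_i$); third, no member of $X_1$ joins the coalition and $X_1$ remains the first bottleneck set of the manipulated instance, which licenses deleting $X_1$ and its cake and recursing. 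Your ``any coalition either shrinks or shifts the pool of mass available to it'' claim does not engage with this structure: coalition members may lie in different bottleneck sets with different per-capita shares, so a single aggregate mass argument does not show that a strict gain by one member forces a strict loss by another, and without the bottleneck-wise analysis the conclusion that no profitable joint deviation exists is unsupported. You also flag the CCEA/MEA/Mechanism-1 identification as the main obstacle; that identification is indeed proved in the paper (via bottleneck-derived prices shown to be the equilibrium prices of the Eisenberg--Gale-type program), but the harder and genuinely new ingredient for this theorem is the group strategyproofness proof you are trying to bypass.
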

		
		\begin{theorem}\label{th:maintheorem1b}
									For cake cutting with piecewise constant valuations, there exists a randomized implemention of an algorithm (CSD) that is (ex post) robust proportional, (ex post) symmetric, and (ex post) unanimous and strategyproof in expectation. For two agents, it is polynomial-time, robust proportional and robust envy-free.

			%
			\end{theorem}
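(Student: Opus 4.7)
The plan is to describe the CSD algorithm explicitly and then verify each claimed property in turn. CSD draws a uniformly random ordering $\sigma$ of the $n$ agents, forms the common refinement $P$ of the agents' piecewise-constant partitions, and processes agents in the order $\sigma(1),\dots,\sigma(n)$. When it is the turn of agent $\sigma(k)$, she greedily claims her most-preferred remaining sub-intervals of $P$, but the amount she may take from any single piece is capped so that each of the remaining $n-k$ agents is still left at least $1/n$ of that piece's length; in particular the cap per piece for every dictator is $1/n$ of its length. This per-piece reservation is the ``constraint'' in the name of the algorithm and is the mechanism through which proportionality and strategyproofness are reconciled with the serial-dictatorship skeleton.

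For ex post robust proportionality, I would show by induction on $k$ that when agent $\sigma(k)$'s turn arrives, at least $1/n$ of the length of every piece in $P$ is still unallocated. Thus agent $\sigma(k)$ can claim $1/n$ of each of her positively-valued pieces, summing to at least a $1/n$ fraction of her total valuation. Because this lower bound is phrased purely in terms of piecewise lengths of positively-valued pieces, it is preserved under any cardinalization consistent with the reported ordinal structure, establishing robust proportionality in every realization of $\sigma$. Ex post symmetry and unanimity then follow from the anonymity of the cap rule: two agents with identical reported preferences end up with identical per-piece allocations (up to the random ordering), and if all reports coincide every agent receives exactly $1/n$ of every piece.

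For strategyproofness in expectation, I would adapt the standard Random Serial Dictatorship argument. Conditional on the realized $\sigma$ and agent $i$'s position $k$, the set of pieces still available when agent $i$ acts depends only on the reports of the preceding $k-1$ agents, not on $i$'s own report; given that set and the per-piece cap, agent $i$'s best greedy choice with respect to her true valuation is to claim $1/n$ of her truly most-preferred pieces first, which is exactly what truthful reporting produces. Averaging over $\sigma$ yields strategyproofness in expectation. For the two-agent case, $n=2$ forces the cap to $1/2$, and by symmetry of the cap rule the outcome is independent of the order: each agent receives $1/2$ of every piece she values positively. This gives deterministic robust proportionality, and since the two allocations have the same length on every jointly-valued piece, neither agent envies the other under any cardinalization consistent with the reported ordinal structure, yielding robust envy-freeness; the whole procedure runs in polynomial time as it is a single sweep over $P$.

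The main obstacle I anticipate is the strategyproofness step, specifically ruling out manipulations of the common refinement itself. An agent might consider reporting a finer partition or a permuted ordinal ranking to enlarge her effective cap or to distort the available pieces seen by later dictators. I would address this by verifying that the cap on agent $i$'s turn is $1/n$ of the length of each piece of $P$ and is independent of agent $i$'s own reported cardinal values: any refinement introduced by $i$ merely subdivides her own share without increasing her total length, and any reordering of her preferences changes only the order in which she claims her own pieces and so cannot increase her true utility. Making this independence airtight for each realization of $\sigma$, so that the expectation argument bites agent-by-agent, is the technically delicate part of the proof.
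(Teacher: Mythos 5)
There is a genuine gap: the algorithm you describe is not the paper's CSD, and the difference breaks one of the properties the theorem asserts. In the paper, the ``constraint'' on each dictator is a \emph{total-length budget}: when agent $\sigma(k)$'s turn comes she takes a maximum-value piece of total length exactly $1/n$ from the remaining cake, placed wherever she likes (see CRSD and Algorithm~\ref{algo:CSD}); the final allocation averages these picks over \emph{all} $n!$ orderings, and only the placement of each agent's fractional share inside each refined interval is randomized. Your version instead caps each dictator at $1/n$ of \emph{each piece} of the common refinement. That cap destroys ex post unanimity: with $n=2$, $v_1>0$ only on $[0,1/2]$ and $v_2>0$ only on $[1/2,1]$, unanimity requires agent $1$ to receive all of $[0,1/2]$, but under your rule the first mover may take only half of it (indeed your own two-agent claim, ``each agent receives $1/2$ of every piece she values positively,'' is incompatible with unanimity, and is also false under the reservation version of your cap for pieces desired by a single agent, where the outcome does depend on the order and part of the cake is wasted). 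The paper's CSD gets unanimity precisely because the budget is global, so when the agents' favorite $1/n$-length regions are disjoint every dictator takes her entire favorite region in every permutation.

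Separately, your strategyproofness argument misses the step that the paper identifies as the delicate one. Strategyproofness of the serial phase (each dictator maximizes over the remaining cake) is easy; the problem is converting the resulting \emph{fractions} of reported-profile intervals into physical subintervals. If this conversion is deterministic, an agent can misreport to steer \emph{which} physical subinterval she receives, because her true density need not be constant on intervals induced by the deviated report --- this is exactly why the deterministic uniform rule fails (Proposition~\ref{prop: uniform allocation is not SP}) and why CSD with deterministic placement is not strategyproof (Remark~\ref{remark:csd-not-sp}). The paper repairs this with the uniform random rotation subroutine (Algorithm~\ref{CSD: subroutine}), whose point is that a random fraction $p$ of an interval $J$ has expected value $p\,V_i(J)$ for \emph{every} integrable true density, so the RSD-style expectation argument goes through against the true valuation. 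Your proposal asserts that the cap is independent of the agent's own report, but that addresses only the size of her share, not its location, so the manipulation channel the paper closes remains open in your argument. (A smaller point: the paper's CSD enumerates all $n!$ orderings and is polynomial-time only for two agents, whereas drawing a single random ordering as you do would change the ex post guarantees, which in the paper hold because the lengths $p_{ij}=count(i,J_j)/n!$ are deterministic.)
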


\noindent Our positive results are complemented by the following impossibility results. These impossibility results suggest the properties satisfied by CCEA and MEA are maximal subsets of properties that can be satisfied by any algorithm for piecewise constant valuation profiles.
 
\begin{theorem} \label{PE fair impossibility}
For piecewise constant valuation profiles with at least two agents, there exists no algorithm that is both Pareto optimal and robust proportional.
\end{theorem}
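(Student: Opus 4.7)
The plan is to exhibit, for each $n \geq 2$, a single piecewise constant valuation profile on the cake $[0,1]$ for which no allocation is simultaneously Pareto optimal and robust proportional. Fix $\epsilon \in (0, 1/n)$ and let agent $1$ have density $1$ everywhere, agent $2$ have density $0$ on $[0, 1-\epsilon]$ and density $1/\epsilon$ on $[1-\epsilon, 1]$, and (if $n \geq 3$) agents $3,\ldots,n$ also have density $1$ everywhere; each agent then has total value $1$.

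First I would argue by a standard swap argument that in every Pareto optimal allocation agent $2$'s total length is at most $\epsilon$. If agent $2$ held a positive-measure subset $Y \subseteq [0, 1-\epsilon]$, transferring $Y$ to agent $1$ would strictly increase agent $1$'s utility (by $|Y|$) while leaving agent $2$'s utility unchanged, since $v_2 \equiv 0$ on $[0, 1-\epsilon]$. So, up to a measure-zero set, agent $2$'s allocation lies in $[1-\epsilon, 1]$ and hence has length at most $\epsilon$.

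To derive a contradiction with robust proportionality, I would consider the readjusted density $v_2'$ equal to $1$ on $[0, 1-\epsilon]$ and $1+\delta$ on $[1-\epsilon, 1]$, which preserves agent $2$'s strict ordinal ranking for every $\delta > 0$. Under $v_2'$, agent $2$'s allocation is worth at most $(1+\delta)\epsilon$, whereas her total value is $(1-\epsilon) + (1+\delta)\epsilon = 1 + \delta\epsilon$. Proportionality under $v_2'$ would demand $(1+\delta)\epsilon \geq (1 + \delta\epsilon)/n$, i.e.\ $n\epsilon + (n-1)\delta\epsilon \geq 1$; since $\epsilon < 1/n$, this fails for all sufficiently small $\delta > 0$, witnessing a failure of robust proportionality. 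The main delicate point is fixing the interpretation of robust proportionality so that admissible readjustments need only preserve the ordinal ranking of the pieces (so a piece of density zero may acquire an arbitrarily small positive density); once that is pinned down, the argument reduces to an elementary swap plus a one-line inequality, and combining the two steps yields the impossibility for every $n \geq 2$.
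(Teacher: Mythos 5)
There is a genuine gap, and it is exactly at the point you flag as ``the main delicate point.'' The paper defines the ordinal equivalence class as $\hat{V_i}=\{v_i'\mid v_i(x)\geq v_i(y)>0 \iff v_i'(x)\geq v_i'(y)>0,\ \forall x,y\}$. Taking $y$ with $v_i(y)=0$ and $x=y$ in this biconditional forces $v_i'(y)=0$: a region where the true density is zero must keep density zero in every admissible readjustment. Your key readjusted density $v_2'$, which assigns value $1$ to $[0,1-\epsilon]$ where $v_2\equiv 0$, is therefore not in $\hat{V_2}$, so the inequality you derive is not a violation of robust proportionality as defined in the paper. Worse, the chosen profile cannot be repaired: since each of your agents' densities is constant on its support, every $v_i'\in\hat{V_i}$ is just a positive scaling on that support, so robust proportionality collapses to ordinary proportionality for this profile, and a Pareto optimal, proportional allocation does exist (e.g., give agent $2$ a length $\epsilon/n$ subinterval of $[1-\epsilon,1]$ and split the rest equally among the others). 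Your Pareto-swap step is fine, but the counterexample itself fails under the paper's definition.

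The paper's proof avoids this by using a two-agent profile in which both agents have strictly positive density on both halves of the cake, with $v^1_a>v^1_b$, $v^2_a>v^2_b$ and $v^1_a/v^1_b>v^2_a/v^2_b$. There the ordinal class genuinely contains densities with the ratio of the two positive values pushed arbitrarily high or arbitrarily close to $1$, and robust proportionality then forces each agent to receive exactly half the length of each half-interval; Pareto optimality, via a trade exploiting the differing ratios, requires instead that agent $1$ gets none of the right piece or agent $2$ gets none of the left piece, a contradiction. If you want to salvage your style of argument, you need a profile where the conflict arises between two regions on which the deviating agent's density is \emph{positive} but differently ranked relative to the other agents, rather than between a desired and an undesired region.
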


\begin{theorem} \label{PE SP Prop impossibility}
For piecewise constant valuation profiles with at least two agents, there exists no algorithm that is strategyproof, Pareto optimal, and proportional. 
\end{theorem}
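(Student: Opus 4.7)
\noindent\emph{Proof sketch.} The plan is to assume $f$ is a mechanism that is simultaneously strategyproof, Pareto optimal, and proportional on piecewise constant profiles, and to derive a contradiction on a specific two-agent profile. Take the cake $[0,1]$ with $v_1 = 2$ on $[0,1/2]$ and $1$ on $[1/2,1]$, and $v_2 = 1$ on $[0,1/2]$ and $2$ on $[1/2,1]$. First I would run the two-agent Pareto analysis: letting $\mu$ and $\nu$ denote agent~$1$'s Lebesgue measure in $[0,1/2]$ and $[1/2,1]$ respectively, the ratio $v_1/v_2$ is larger on $[0,1/2]$ than on $[1/2,1]$, so a trade exchanging some of agent-$1$'s $[1/2,1]$ for some of agent-$2$'s $[0,1/2]$ is strictly Pareto improving whenever $\nu > 0$ and $\mu < 1/2$. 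Hence every PO allocation falls into Case~A ($\nu = 0$) or Case~B ($\mu = 1/2$); combined with the proportional threshold $3/4$ for each agent, Case~A gives $\mu \in [3/8, 1/2]$ and Case~B gives $\nu \in [0, 1/8]$. The maximum agent-$2$ value of $9/8$ is attained uniquely at Case~A with $\mu = 3/8$, which fixes agent-$1$'s value at $3/4$.

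The next step is to use strategyproofness and a family of deviations to pin $f(v_1,v_2)$ onto this extreme allocation. I would have agent~$2$ misreport $\widetilde v_2 = k$ on $[0,1/2]$ and $2$ on $[1/2,1]$ with parameter $k \in (2,4)$. Repeating the analysis under $(v_1,\widetilde v_2)$, Case~B becomes infeasible because the higher reported share $k/4 + 1/2$ cannot be met when $\mu = 1/2$, and Case~A tightens to $\mu \in [3/8,\, 1/4 + 1/(2k)]$. Since the true-value function remains $v_2(A_2) = 3/2 - \mu$, we obtain $v_2(f_2(v_1,\widetilde v_2)) \geq 5/4 - 1/(2k)$ regardless of how $f$ breaks ties. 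Strategyproofness then yields $v_2(f_2(v_1,v_2)) \geq 5/4 - 1/(2k)$ for every $k \in (2,4)$; taking the supremum as $k \to 4^-$ gives $v_2(f_2(v_1,v_2)) \geq 9/8$. Combined with the upper bound $9/8$, equality holds and $f(v_1,v_2)$ must be the Case~A allocation with $\mu = 3/8$, so that $v_1(f_1(v_1,v_2)) = 3/4$.

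The contradiction then follows from a single deviation by agent~$1$: take $\widetilde v_1 = 3/2$ on $[0,1/2]$ and $1$ on $[1/2,1]$. The ratio analysis again produces Case~A or Case~B, and proportionality under $\widetilde v_1$ (share $5/8$) tightens Case~A to $\mu \geq 5/12$. Under the true $v_1$, this yields $v_1(A_1) = 2\mu \geq 5/6$ in Case~A and $v_1(A_1) = 1 + \nu \geq 1$ in Case~B, so $v_1(f_1(\widetilde v_1,v_2)) \geq 5/6$ for every PO+proportional output. Strategyproofness then forces $v_1(f_1(v_1,v_2)) \geq 5/6$, contradicting the value $3/4$ established above. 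The main obstacle in this plan is calibrating the deviations so that the minimum possible payoff under the second lie strictly exceeds the maximum payoff fixed by the first; this is exactly why the agent-$2$ deviation must be taken as a limit $k \to 4^-$ (collapsing the PO+proportional set to a single extreme point) while a fixed agent-$1$ deviation of density $3/2$ already produces the strict gap $5/6 > 3/4$.
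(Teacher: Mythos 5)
Your argument is correct, but it takes a genuinely different route from the paper. The paper's proof is a one-line reduction: it invokes \citet{Schu97a} (Theorem~3), by which every strategyproof and Pareto optimal mechanism on such linear domains is a dictatorship, and a dictatorship clearly violates proportionality (indeed even symmetry). You instead give a self-contained two-agent construction: your Case~A/Case~B characterization of the Pareto optimal and proportional outcomes at the base profile checks out (including the infeasibility of Case~B under the inflated report with $k>2$ and the tightened bound $\mu\le 1/4+1/(2k)$ in Case~A), the family of misreports by agent~2 legitimately pins the truthful outcome to the extreme allocation with $v_1$-payoff $3/4$, and the single misreport $\widetilde v_1$ guarantees agent~1 at least $5/6$ at every Pareto optimal and proportional outcome of the manipulated profile, giving the contradiction. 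What each approach buys: the paper's citation is much shorter and delivers the stronger conclusion (incompatibility with symmetry as well), but it leans on an external characterization for exchange economies whose transfer to cake cutting with agent-reported breakpoints is left implicit; your argument is elementary, stays entirely inside the cake-cutting model, and exhibits the manipulations explicitly. One simplification worth noting: the limit $k\to 4^-$ is unnecessary --- any fixed $k\in(3,4)$ already gives $v_2(f_2(v_1,v_2))\ge 5/4-1/(2k)>1$, which excludes Case~B at the truthful profile (where agent~2's value is at most $1$) and forces $v_1(f_1(v_1,v_2))=2\mu\le 1/2+1/k<5/6$, which is all the second deviation needs.
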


\begin{theorem} \label{SP Rob Prop Non W impossibility}
For piecewise constant valuation profiles with at least two agents, there exists no algorithm that is strategyproof, robust proportional, and non-wasteful. 
\end{theorem}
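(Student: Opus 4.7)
The plan is to follow the template of Theorem~\ref{PE SP Prop impossibility}, replacing Pareto optimality with non-wastefulness and proportionality with robust proportionality, and to exhibit two piecewise constant profiles differing in one agent's report on which no strategyproof allocation can simultaneously satisfy robust proportionality and non-wastefulness. I would work with two agents and split the cake $[0,1] = A \cup B$ with $|A| = |B| = 1/2$, and for a small parameter $\epsilon \in (0,1)$ consider the profile $P$ given by $v_1 = (1,0)$ (density $1$ on $A$, $0$ on $B$) and $v_2 = (\epsilon, 1)$.

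In $P$, non-wastefulness forces $b_1 = 0$ and $b_2 = 1/2$ (only agent~$2$ desires $B$), and $a_1 + a_2 = 1/2$. Robust proportionality applied to agent~$1$'s uniform valuation on $A$ yields $a_1 \geq 1/4$, leaving the algorithm free to pick some $\alpha := a_1 \in [1/4, 1/2]$. Next I would consider the deviation $v_2 \mapsto v_2' = (2,1)$, which flips agent~$2$'s top level to $A$. The non-wastefulness structure is unchanged, but robust proportionality on agent~$2$'s new top level gives $a_2 \geq 1/4$; combined with $a_1 \geq 1/4$ this uniquely forces $a_1 = a_2 = 1/4$ at $P'$. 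Strategyproofness for agent~$2$ on the deviation $v_2 \to v_2'$ then compares true utilities $\epsilon(1/2-\alpha) + 1/2 \geq \epsilon/4 + 1/2$ at $P$ versus $P'$, forcing $\alpha = 1/4$.

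Thus the algorithm is pinned to the allocation $(a_1, a_2, b_1, b_2) = (1/4, 1/4, 0, 1/2)$ at $P$, giving agent~$1$ true utility exactly $1/4$. To close the argument I would construct a mirror profile $\tilde P$ in which agent~$1$ can, by reporting an appropriately expanded valuation (such as $v_1'' = (1, \epsilon')$ or $v_1'' = (\epsilon', 1)$ combined with a matching tweak of $v_2$), force robust proportionality on agent~$2$'s top level to bind in a way that lets agent~$1$ legitimately obtain a strictly larger share of $A$ under true preferences. Strategyproofness applied to this second deviation then contradicts the forced $\alpha = 1/4$ from the first part.

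The main obstacle is this final step: in order to rule out every algorithm rather than just some, the companion profile must be engineered so that non-wastefulness plus robust proportionality together \emph{force} (not merely allow) agent~$1$ to receive strictly more than $1/4$ of $A$, leaving no algorithmic slack. I would experiment with asymmetric partitions (for example adding a third region that only agent~$2$ desires, to tighten agent~$2$'s proportionality constraint) and with exploiting the fine-grained level structure that distinguishes piecewise constant from piecewise uniform valuations, since the analogous statement for piecewise uniform is achievable by the Chen et.~al.\ mechanism and thus any valid construction must genuinely use a secondary density level.
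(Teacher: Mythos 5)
Your first two profiles are fine as far as they go (at $P=((1,0),(\epsilon,1))$ non-wastefulness and robust proportionality, plus strategyproofness against the deviation $v_2\mapsto(2,1)$, do pin the allocation to $a_1=a_2=1/4$, $b_1=0$, $b_2=1/2$), but the argument has a genuine gap exactly where you flag it, and the gap is not just a missing computation: the scaffold you built cannot be closed. Robust proportionality and non-wastefulness only ever impose \emph{lower} bounds on what each agent must receive, so an upper bound on $a_2$ can only come from a lower bound on $a_1$ derived from agent $1$'s \emph{own report}; whatever expanded valuation $v_1''$ agent $1$ announces (top level on a subpiece $T\subseteq A$, secondary levels on the rest of $A$ or on $B$), the induced constraints are of the form ``at least half of $T$'', ``at least $1/4$ of $A$'', ``at least $1/2$ in total'', and an algorithm can always satisfy all of them while giving agent $1$ exactly $1/4$ of $A$ (padding with $B$, which is worthless to him under his true valuation). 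So no unilateral report of agent $1$ with $v_2=(\epsilon,1)$ fixed \emph{forces} $a_1>1/4$, and symmetrically agent $2$ cannot be forced above what he already holds at $P$, since he already has all of $B$ and exactly the maximal share of $A$ that any top-level bump of length at most $1/2$ could extract (half of it, which is at most $1/4$). Hence no contradiction is derivable from your base profile $P$, with either agent as deviator; any ``matching tweak of $v_2$'' destroys the fixed-opponent requirement that strategyproofness comparisons need.

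The paper's proof closes this by arranging, before the final deviation, a profile in which the restricted-support agent's desired set coincides \emph{exactly} with his pinned allocation, and the full-support agent currently holds \emph{none} of a region he truly values at the top level. Concretely: in profile 1 both agents report $a$ on $[0,0.5]$ and $b$ on $[0.5,1]$ with $a>b>0$, and robust proportionality alone forces each agent to get half of each half, say agent $1$ gets $[0,0.25]\cup[0.5,0.75]$; in profile 2 agent $1$ reports positive value only on $[0,0.25]\cup[0.5,0.75]$, and strategyproofness (in both directions between profiles 1 and 2) plus non-wastefulness pins the allocation to be the same; in profile 3 agent $2$ deviates by reporting $a+\epsilon$ on $[0,0.25]$, $a$ on $[0.25,0.5]$, $b$ on $[0.5,1]$, whereupon robust proportionality forces him to receive half of $[0,0.25]$ while non-wastefulness still hands him all of $[0.25,0.5]\cup[0.75,1]$ --- a strict gain over his profile-2 utility, contradicting strategyproofness. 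The structural ingredient your construction lacks is precisely this intermediate profile: the deviator must be the agent with full support, and the other agent's reported desired set must equal his allocation so that the forced incursion into it (via the $\epsilon$-bump and robustness) is a net strict gain rather than something the algorithm can absorb by reshuffling slack, as it can in your $P$.
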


As a consequence of CCEA and MEA, we generalize the positive results regarding piecewise uniform valuations in \citep{CLPP10a,CLPP12a} and piecewise constant valuations in \citep{CLPP11a} in a number of ways such as handling richer cake cutting settings, handling more general valuations functions, achieving a stronger envy-free concept, or a stronger strategyproofness notion.  
Moreover, we prove that CCEA and MEA --- two different algorithms --- are equivalent in the domain of piecewise constant valuations. Furthermore, we show that both CCEA and MEA are generalizations of the main mechanism in \citep{CLPP12a, CLPP10a}.
We also show which combinations of properties are impossible to satisfy simultaneously.
Some of our main results are summarized in Table~\ref{table:summary:cake}.

		\begin{table}[t!]
		\centering
			\scalebox{0.8}{
	\centering
	\begin{tabular}{l|lccccccccccc}
	\toprule
	&Restriction&DISC&R-EF&EF&R-PROP&PROP&GSP&W-GSP&SP&PO&UNAN&POLYT\\
	Algorithms&&&\\
	\midrule
	CCEA&-&+&+&+&+&+&-&-&-&-&+&+\\
	CCEA&pw uniform&+&+&+&+&+&+&+&+&+&+&+\\

		\midrule
	MEA&&+&-&+&-&+&-&-&-&+&+&+\\
	MEA&pw uniform&+&+&+&+&+&+&+&+&+&+&+\\
	
	\midrule

		CMSD&-&-&-&-&+&+&-&-&+&-&+&-\\
		CMSD&pw uniform&-&-&-&+&+&-&-&+&+&+&-\\

		CMSD&2 agents&-&+&+&+&+&-&-&+&-&+&+\\
	%
	%
	%
%
	%
	

	\bottomrule
	\end{tabular}
	}

		\caption{Properties satisfied by the cake cutting algorithms for pw (piecewise) constant valuations:  DISC (discrete), R-EF (robust envy-freeness), EF (envy-freeness), R-PROP (robust proportionality), PROP (proportionality), GSP (group strategyproof), W-GSP (weak group strategyproof), SP (strategyproof), UNAN (unanimity), PO (Pareto optimal) and POLYT (polynomial-time). 
		}
	\label{table:summary:cake}
	\end{table}

After presenting our main results, we enrich the cake cutting domain in two ways. We allow agents to have initial endowments of the cake. Moreover we consider the more general setting in which agents may have different claims or entitlements to the cake. We show that many of our results carry over to these more general settings.

	\section{Preliminaries}

	\subsection{Cake cutting setting}
	We consider a cake which is represented by the interval $[0,1]$.
	A \emph{piece of cake} is a finite union of disjoint subintervals of $[0,1]$. The length of an interval $I=[x,y]$ is  $len(I)=y-x$.  As usual, the set of agents is $N=\{1,\ldots, n\}$. Each agent has a piecewise continuous \emph{value density function} $v_i:[0,1]\rightarrow [0,\infty]$. The value of a piece of cake $X$ to agent $i$ is $V_i(X)=\int_{X}v_i(x)dx=\sum_{I\in X}\int_Iv_i(x)dx$. As generally assumed, valuations are non-atomic ($V_i([x,x])=0$)
and additive: $V_i(X\cup Y)=V_i(X)+V_i(Y)$ where $X$ and $Y$ are disjoint. 
	The basic cake cutting setting can be represented by the set of agents and their valuations functions, which we will denote as \emph{a profile of valuations}. In this paper we will assume that each agent's valuation function is private information for the agent that is not known to the algorithm designer. Each agent reports his valuation function to the designer and the designer then decides how to make the allocations based on the reported valuations.

Later on we will also consider two important extensions of cake cutting: claims and private endowments.	We will assume that agents have the following \emph{claims} on the cake respectively: $c_1,\ldots, c_n$. In the original cake cutting problem agents have equal claims. 
	Each agent $i\in N$ has a \emph{private endowment} $\omega(i)$ which is a segment of the cake privately owned by $i$.  The cake is assembled by joining the pieces $\omega(1), \ldots, \omega(n)$.
	Therefore the cake cutting setting in its full generality can be represented as a quadruple $(N,v,\omega,c)$.

	An allocation is a partitioning of the cake into $n$ pieces of cake  $X_1,\ldots, X_n$ such that the pieces are disjoint (aside from the interval boundaries) and $X_i$ is allocated to agent $i\in N$. A cake cutting algorithm takes as input $(N,v,\omega,c)$ and returns an allocation. 

	\subsection{Preference functions}

In this paper we will only consider \emph{piecewise uniform} and \emph{piecewise constant} valuations functions. A function is \emph{piecewise uniform} if the cake can be partitioned into a finite number of intervals such that for some constant $k_i$, either $v_i(x)=k_i$ or $v_i(x)=0$ over each interval. A function is \emph{piecewise constant} if the cake can be partitioned into a finite number of intervals such that $v_i$ is constant over each interval. In order to report his valuation function to the algorithm designer, each agent will specify a set of points $\{d_1, ... , d_m\}$ that represents the consecutive points of discontinuity of the agent's valuation function as well as the constant value of the valuation function between every pair of consecutive $d_j$'s.


	For a function $v_i$, we will refer by $\hat{V_i}=\{v_i'\midd v_i(x)\geq v_i(y)>0 \iff  v_i'(x)\geq v_i'(y)>0 ~~\forall x,y\in [0,1]\}$ as the set of density functions \emph{ordinally equivalent} to $v_i$. 
	Note that if an algorithm takes as input $v_i$ and returns the same output for any $v_i'\in \hat{V_i}$, then it is oblivious to the exact cardinal information of $v_i$.

	\subsection{Properties of allocations}
	%
           An \emph{allocation} is a partition of the interval $[0,1]$ into a set $\{X_1,\ldots, X_n, W\}$, where $n$ is the number of agents and $X_i$ is a piece of cake that is allocated to agent $i$. And $W$ is the piece of the cake that is not allocated. All of the fairness and efficiency notations that we will discuss next are with respect to the reported valuation functions.
	In an \emph{envy-free allocation}, we have $ V_i(X_i)\geq V_i(X_j)$ for each pair of agent $i,j\in N$,. An allocation is \emph{individually rational} if $V_i(X_i)\geq V_i(\omega(i))$. In a \emph{proportional} allocation, each agent gets at least $1/n$ of the value he has for the entire cake. An allocation satisfies \emph{symmetry or equal treatment of equals} if any two agents with identical valuations get same utility. Clearly, envy-free implies proportionality and also symmetry. An allocation $X$ is \emph{Pareto optimal} if no agent can get a higher value without some other agent getting less value. Formally, $X$ is Pareto optimal if there does not exists another allocation $Y$ such that $V_i(Y_i)\geq V_i(X_i)$ for all $i\in N$ and $V_i(Y_i)> V_i(X_i)$ for some $i\in N$. For any $S \subseteq [0,1]$, define $D(S) = \{i \in N | V_i(S) > 0\}$. An allocation $X$ is \emph{non-wasteful} if for all $S \subseteq [0,1]$, $S \subseteq \cup_{i \in D(S)} X_i$. In other words, an allocation is non-wasteful if every portion of the cake desired by at least one agent is allocated to some agent who desires it.



	We now define robust analogues of the fairness concepts defined above. 
		An allocation satisfies \emph{robust proportionality} if for all $i,j\in N$ and for all $v_i'\in \hat{V_i}$, $\int_{X_i}v_i'(x)dx\geq 1/n\int_{0}^{1}v_i'(x)dx$.
		An allocation satisfies \emph{robust envy-freeness} if for all $i,j\in N$ and for all $v_i'\in \hat{V_i}$, $\int_{X_i}v_i'(x)dx\geq \int_{X_j}v_i'(x)dx$.
			Notice that both robust envy-freeness and robust proportionality would require each agents to get a piece of cake of the same length if every agent desires the entire cake.
		



	We give an example of piecewise constant value density function and demonstrate how the standard concept of envy-freeness is not robust under uncertainty. 

	\begin{example}[A cake-cutting problem with piecewise constant valuations]\label{example:piecewise}
	Consider the cake cutting problem in Figure~\ref{figure:piecewise}.
	An allocation in which both agents get regions in which their value density function is the highest is envy-free. Agent 1 gets utility one for his allocation and has the same utility for the allocation of agent 2.  However if its probability density function is slightly lower in region $[0,0.1]$, then agent 1 will be envious of agent 2.
	\end{example}
\vspace{-3em}
	 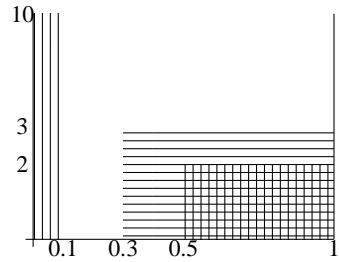
\begin{figure}
	 \centering
	 	\[	
	 			\begin{tikzpicture}[scale=0.2]
	 				\centering
	 				\draw[-] (0,-.5) -- (0,15);     
	 				\draw[-] (-.5,0) -- (20,0);

	 				\draw[-] (20,15) -- (20,0);


	 							\fill[pattern=vertical lines,] (0,0) rectangle (2,15);
	 							\fill[pattern=vertical lines] (10,0) rectangle (20,5);
	 							\fill[pattern=horizontal lines] (6,0) rectangle (20,7.5);

	 							\draw (2,-.6) node(c){\small $0.1$};
	 							\draw (6,-.6) node(c){\small $0.3$};
	 							\draw (20/2,-.6) node(c){\small $0.5$};
	 						%
	 							\draw (20,-.6) node(c){\small$1$};

	 				\draw(-.7,15) node(z){\small $10$};
	 				\draw(-.7,7.5) node(z){\small $3$};
	 				\draw(-.7,5) node(z){\small $2$};

	 				\end{tikzpicture}
	 		\]
	 \caption{Example of a cake cutting problem with piecewise constant value density functions. The area with vertical lines is under the value density function of agent $1$ and the area with horizontal lines is under the value density function of agent $2$.} 
	\label{figure:piecewise}
	 \end{figure}

	 \begin{figure}
	 \centering
	 	\[	
	 			\begin{tikzpicture}[scale=0.20]
	 				\centering
	 				\draw[-] (0,-.5) -- (0,15);     
	 				\draw[-] (-.5,0) -- (20,0);

	 				\draw[-] (20,15) -- (20,0);


	 							\fill[pattern=vertical lines,] (0,0) rectangle (2,10);
	 							\fill[pattern=vertical lines] (10,0) rectangle (20,5);
	 							\fill[pattern=horizontal lines] (6,0) rectangle (20,10);

	 							\draw (2,-.6) node(c){\small $0.1$};
	 							\draw (6,-.6) node(c){\small $0.3$};
	 							\draw (20/2,-.6) node(c){\small $0.5$};
	 						%
	 							\draw (20,-.6) node(c){\small$1$};

	 				\draw(-.7,15) node(z){\small $10$};
	 				\draw(-.7,7.5) node(z){\small $3$};
	 				\draw(-.7,5) node(z){\small $2$};

	 				\end{tikzpicture}
	 		\]
	 \caption{Example of a cake cutting problem with piecewise constant value density functions. The area with vertical lines is under the value density function of agent $1$ and the area with horizontal lines is under the value density function of agent $2$. The valuation functions of both agents are ordinally equivalent to the ones in Figure~\ref{figure:piecewise}.}
	\label{figure:piecewise-b}
	 \end{figure}
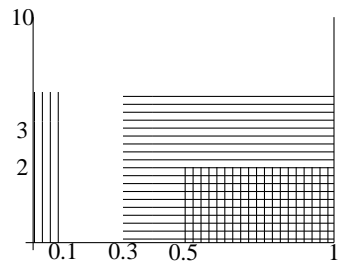

	\newpage	
	\subsection{Properties of cake cutting algorithms}
	A \emph{deterministic cake cutting algorithm} is a mapping from the set of valuation profiles to the set of allocations. A \emph{randomized cake cutting algorithm} is a mapping from the set of valuation profiles to a space of distributions over the set of allocations. The output of the algorithm in this case for a specific valuation profile is a random sample of the distributional function over the set of allocation for that profile. 

	An algorithm (either deterministic or randomized) satisfies property $P$ if it always returns (with probability 1) an allocation that satisfy property $P$. A deterministic algorithm is \emph{strategyproof} if no agent ever has an incentive to misreport in order to get a better allocation. The notion of strategyproofness is the one well-established in social choice~\citep[see \eg][]{Gibb73a}  and much stronger than the notion of `strategyproofness' used in some of the cake-cutting literature~\citep[see \eg][]{Bram08a}.
	By strategyproof, we mean truthful as has been used in \citep{CLPP10a}. 
	Similarly, a deterministic algorithm is \emph{group-strategyproof} if there exists no coalition $S\subseteq N$ such that members of $S$ can misreport their preferences so that each agent in $S$ gets as preferred an allocation and at least one agent gets a strictly better payoff. A deterministic algorithm is \emph{weak group-strategyproof} if there exists no coalition $S\subseteq N$ such that members of $S$ can misreport their preferences so that each agent in $S$ gets a strictly more preferred allocation. 
	A randomized algorithm is \emph{strategyproof in expectation} if the expected utility from the random allocation that every agent receives in expectation under a profile where he reported truthfully is at least as large as the expected that he receives under a profile where he misreports while fixing the other agents' reports.

	We say that a cake cutting algorithm satisfies \emph{unanimity}, if when each agent's most preferred $1/n$ length of the cake is disjoint from another agent, $1/n$ length of the cake, then each agent gets their most preferred piece of cake of length $1/n$.
	


	\subsection{Relation between the properties of cake cutting algorithms}

	In this subsection, we recap the main properties of cake cutting algorithms: \begin{inparaenum}[i)]
\item proportionality,
\item robust proportionality,
\item envy-freeness,
\item robust envy-freeness,
\item symmetry,
\item non-wastefulness,
\item Pareto optimality,
\item unanimity
\item strategyproofness,
\item weak group strategyproofness,
\item group strategyproofness,
\item polynomial-time.
	\end{inparaenum}

\begin{remark}
For cake cutting, 
\begin{inparaenum}[\itshape a\upshape)]	
\item Envy-freeness and non-wastefulness $\implies$ proportionality;
\item Robust proportionality $\implies$ proportionality;
\item Robust envy-freeness $\implies$ envy-freeness;
\item Robust envy-freeness and non-wastefulness $\implies$ robust proportionality; 
\item Group strategyproofness $\implies$ weak group strategyproofness $\implies$ strategyproofness;
\item Pareto optimality $\implies$ non-wastefulness;
\item Pareto optimality $\implies$ unanimity;
\item two agents, proportionality $\implies$ envy-freeness;
\item two agents, robust proportionality $\implies$ robust envy-freeness.
	\end{inparaenum}

\end{remark}


\subsection{The free disposal assumption}
We may assume without lost of generality that every part of the cake is desired by at least one agent.  If that is not the case, then we can discard the parts that are desired by no one and rescale what is left so that we get a $[0,1]$ interval representation of the cake. Notice that this procedure preserves the aforementioned properties of fairness, efficiency and truthfulness. The free disposal assumption that we are making is necessary to ensure strategyproofness for piecewise uniform valuation functions. See \citep{CLPP12a} for a discussion on the necessity of this assumption. \\


\noindent Before we present our algorithms, we will first take a detour to the literature on random assignments, as some of the algorithms in the random assignment literature are closely related to our algorithms.

	\section{A detour to random assignments}
	An assignment problem is a triple $(N, H, \pref)$ such that $N=\{1,\ldots, n\}$ is a set of agents, $H=\{h_1, \ldots, h_n\}$ is a set of houses and $\pref=(\pref_1,\ldots, \pref_n)$ is the preference profile in which $\pref_i$ denotes the preferences of agent $i$ over houses $H$. A \emph{deterministic assignment} is a one-to-one mapping from $N$ to $H$. 
	A \emph{random allocation} is a probability distribution over $H$. 
	A random assignment $p$ gives a random allocation to each agent. It can be represented by a bistochastic matrix $p$ in which the $i$th row is denoted by $p_i$ and all $i\in A$, and $h\in H$, $p_{ih}\geq 0$, $\sum_{j\in A}p_{jh}=\sum_{h'\in H}p_{ih'}=1$.\footnote{The bistochasticity of the matrix $p$ holds when there are the same number of agents as there are objects, which can be assumed without lost of generality by adding dummy agents or objects} The term $p_{ih}$ denotes the probability with which agent $i$ gets house $h$.
An assignment problem has commonalities with cake cutting with piecewise constant valuations. They also have some fundamental differences. For example, in cake cutting, the agents do not have continuous constant valuations over pre-determined segments of the cake.

	Given two random assignments $p$ and $q$, $p_i \succsim_i^{SD} q_i$ i.e.,  a player $i$ weakly SD prefers $p_i$ to $q_i$ if for all $h$,
	$	\sum_{h_j\in \set{h_k\midd h_k\succsim_i h}}p_{ih_j} \ge \sum_{h_j\in \set{h_k\midd h_k\succsim_i h}}q_{ih_j}.$ 
	Another way to see the SD relation is as follows. A player $i$ weakly SD prefers allocation $p_i$ to $q_i$ if for all vNM utilities consistent with his ordinal preferences, $i$ gets at least as much expected utility under $p_i$ as he does under $q_i$.
	Furthermore $p \succsim^{SD} q$, i.e., $p$ \emph{stochastically dominates} $q$ if $p_i \succsim_i^{SD} q_i$ for all $i\in N$ and $q\neq p$. 

	An algorithm satisfies \emph{SD-efficiency} if each returned assignment is Pareto optimal with respect to the SD-relation~\citep[see \eg][]{Yilm09a}. 
	An algorithm satisfies \emph{SD envy-freeness} if each agent (weakly) SD prefers his allocation to that of any other agent. SD envy-freeness is a very demanding notion of fairness. The reader may be able to notice that our notion of robust envy-freeness in cake cutting is based on a similar idea as SD envy-freeness. 
	We will consider random allocations as fractional allocations and random assignments as fractional assignments. Viewing the probability of getting a house simply as getting a fraction of the house is especially useful when some houses are not complete but only partial. In this vein, the definition of SD dominance should also be considered from the perspective of fractional allocations rather than probability distributions.

	%


	The most basic assignment problem concerns $n$ agents having strict preferences over $n$ objects. 
	For this basic setting a simple yet ingenious \emph{PS (probabilistic serial)} algorithm introduced by \citet{BoMo01a} and ~\citet{CrMo01a} which uses the \emph{simultaneous eating algorithm (SEA)}. Each house is considered to have a divisible probability weight of one, and agents simultaneously and with the same eating rate consume the probability weight of their most preferred house until no house is left. 
	The random allocation allocated to an agent by PS is then given by the amount of each object he has eaten until the algorithm terminates. The main result of \citet{BoMo01a} was that the fractional assignment returned by the PS algorithm is SD envy-free and SD-efficient.

	The PS algorithm has been extended in various ways. The EPS (extended PS algorithm) of \citet{KaSe06a} generalized PS to the case for indifferences using parametric network flows. EPS also generalized the \emph{egalitarian rule} of \citet{BoMo04a} for dichotomous preferences. \citet{Yilm09a} and \citet{AtSe11a} extended the work of \cite{KaSe06a} to propose PS generalization which also takes care of private endowments where $e_i$ indicates the endowment of agent $i$. For the case of endowments, \citet{Yilm09a} introduced the idea of justified envy-freeness. An assignment $p$ satisfies justified envy-freeness if for all $i,j\in N$,  $p_i \mathrel{\succsim_i^{SD}} p_{j}$ or $\neg(p_i \mathrel{\succsim_{j}^{SD}} e_{j}.)$
	The algorithms in \citep{Yilm09a,AtSe11a} satisfy justified envy-freeness in the presence of private endowments. 
	In our algorithm CCEA, we rely on the full power of the controlled-consuming (CC) algorithm of \citet{AtSe11a}  which combines almost all the desirable features of other extensions of PS. In particular, we use the following fact. \emph{There exists an extension of the PS algorithm which can simultaneously handle indifferences in preferences, unacceptable objects in preferences, allocation of multiple objects to agents, agent owning fractions of houses, partial houses being available, and still returns an assignments which satisfies SD justified envy-freeness and SD-efficiency.} In addition, if there are no private endowments, then the extension can also handle variable eating rates.
	The Controlled-Consuming (CC) algorithm of \citet{AtSe11a}  can handle the case where each agent owns fractions of the complete houses. 
	We also require that for some houses, only an arbitrary fraction of the house is available to the market. This can be handled by a modification to CC~\citep[page 30, ][]{AtSe11a}. Finally we require the agents to want to be allocated as many houses as possible. This does not require any modification to CC.
	In the absence of endowments but presence of variable eating rates, CC is equivalent to the EPS algorithm that can also cater for variable eating rates~\citep[Section 6.4, ][]{KaSe06a}. 

	\section{CCEA --- Controlled Cake Eating Algorithm}

	CCEA is based on CC (Controlled Consuming) algorithm of \citet{AtSe11a}. Since the original PS algorithm utilized the simultaneous eating algorithm, hence the name  \emph{Controlled Cake Eating Algorithm.}
	CCEA first divides the cake up into disjoint intervals each of whose endpoints are consecutive points of discontinuity of the agents' valuation functions. We will refer to these intervals as \emph{intervals induced by the discontinuity points}. The idea is to form a one-to-one correspondence of the set of cake intervals with a set of houses of an assignment problem. Since intervals may have different lengths, we consider the house corresponding to the interval with the maximum length as a complete house where as intervals corresponding to other houses are partial houses. The preferences of agents over the houses are naturally induced by the relative height of the piecewise constant function lines in the respective intervals. If an agent $i$ owns a sub-interval $J_j$, then in the housing setting, $e_{ih_j}$ is set to $size(h_j)$ and not to one. The reason is that an agent can only own as much of the house as exists. The technical heart of the algorithm is in CC (Controlled Consuming) algorithm of \citet{AtSe11a}. We recommend the reader to Section 3.2 of \citep{AtSe11a} in which an illustrative example on CC is presented. Once CC has been used to compute a fractional assignment $p$, it is straightforward to compute a corresponding cake allocation. If an agent $i$ gets a fraction of house $h_j$, then in the cake allocation agent $i$ gets the same fraction of subinterval $J_j$.

	\begin{algorithm}[h]
	  \caption{ Controlled Cake Eating Algorithm (CCEA) to compute a robust envy-free allocation for piecewise constant value functions. 
	}
	  \label{RAR-algo}
	\renewcommand{\algorithmicrequire}{\wordbox[l]{\textbf{Input}:}{\textbf{Output}:}} 
	 \renewcommand{\algorithmicensure}{\wordbox[l]{\textbf{Output}:}{\textbf{Output}:}}
	\begin{algorithmic}
		\REQUIRE  Piecewise constant value functions.
		\ENSURE  A robust envy-free allocation.
	\end{algorithmic}
	\algsetup{linenodelimiter=\,}
	  \begin{algorithmic}[1] 
			\normalsize
	\STATE Divide the regions according to agent value functions. Let $\mathcal{J}=\{J_1, \ldots, J_m\}$ be the set of subintervals of $[0,1]$ formed by consecutive marks.
	\STATE Consider $(N',H,\succsim , size(\cdot))$ where

	\begin{itemize}	
	\item $H=\{h_1,\ldots, h_m\}$ where $h_i=J_i$ for all $i\in \{1,\ldots, m\}$ 
	\item  $\pref$ is defined as follows: $h\succsim_i h'$ if and only if $v_i(x)\geq v_i(y)$ for $x\in h$ and $y\in h'$; \\
	$\pref_{n+1}$ is defined in the way that each house in the market is unacceptable to agent $n+1$.
	\item $size(h_j)=1$ for $h_j\in \arg \max_{j\in \{1,\ldots, m\}}(len(J_j))$;\\

	 $size(h_j)=\frac{len(J_j)}{(\max_{j\in \{1,\ldots, m\}}(len(J_j)))}$ \\for all $h_j\notin \arg \max_{j\in \{1,\ldots, m\}}(len(J_j))$;
	\end{itemize}

	\STATE $p\longleftarrow CC(N',H,\succsim, size(\cdot))$
	\STATE For interval $J_j$, agent $i$ is an allocated subinterval of $J_j$, denoted by $J^{i}_j$, which is of length $p_{ih_j}/size(h_j)\times len(J_j)$. For example, if $J_j = [a_i, b_i]$, then one possibility of $J^{i}_{j}$ can be $[a_i + \sum_{n=1}^{i-1} p_{ih_j}/size(h_j)\times len(J_j), \ a_i + \sum_{n=1}^{i} p_{ih_j}/size(h_j)\times len(J_j) ]$.  \\
	$X_i\longleftarrow \bigcup_{j=1}^m J^{i}_j$ for all $i\in N$
	  \RETURN $X=(X_1,\ldots, X_n)$ 
	 \end{algorithmic}
	\label{algo:f}
	\end{algorithm}

	In Example~\ref{example:transform}, we show how CCEA transforms a cake cutting problem with piecewise constant valuations into a random assignment problem.

	\begin{example}[Illustration of CCEA]\label{example:transform}
	We examine how CCEA runs on the cake cutting problem in Figure~\ref{figure:piecewise}. 
	Firstly,  let $\mathcal{J}=\{J_1, \ldots, J_4\}$ be the set of subintervals of $[0,1]$ formed by consecutive points of discontinuity are identified: $J_1=[0,0.1], J_2=[0.1,0.3], J_3=[0.3,0.5]$, and $J_4=[0.5,1]$. $J_2$ is discarded because it is desired by no agent. In set $\{h_1, h_3,h_4\}$, each house $h_j$ corresponds to subinterval $J_j$. The preferences of the agents over $H$ are inferred from their valuation function height in the subintervals so that $h_1 \succ_1 h_4 \succ_1 h_3$ and $h_3 \sim_2 h_4 \succ_2 h_1.$ We also set the number of units of each house that is available. Since $J_4$ is the biggest interval, we consider $h_4$ as complete house. So, $size(h_4)=1$, $size(h_1)=0.2$, and $size(h_3)=0.4$. 
	If we run CC over the housing market instance with the specified set of agents, houses, fraction of houses available to the market, and agent preferences, then the assignment returned by CC is as follows:
	$p_{1h_1}=0.2$, $p_{1h_3}=0$, $p_{1h_4}=0.6$, $p_{2h_1}=0$, $p_{2h_3}=0.4$, and $p_{2h_4}=0.4$. The house assignment $p$ can be used to divide the subintervals among the agents: $X_1=\{[0,0.1], [0.7,1]\}$ and $X_2=\{[0.3,0.5], [0.5,0.7]\}$.
	\end{example}

CCEA satisfies the strong fairness property of robust envy-freeness.

	\begin{proposition}\label{prop:CCEA is robust envy-free}
	For piecewise constant valuations, CCEA satisfies robust envy-freeness and non-wastefulness.
	\end{proposition}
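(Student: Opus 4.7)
The plan is to push both properties through the CCEA reduction from cake cutting to a housing instance, and then invoke the SD envy-freeness and SD-efficiency guarantees of the CC algorithm from \citep{AtSe11a}. Writing $c_{ij}$ for the constant value of $v_i$ on subinterval $J_j$ and $L = \max_j len(J_j)$, so that $size(h_j) = len(J_j)/L$, the step of CCEA that constructs $J^{i}_{j}$ gives $len(J^{i}_{j}) = (p_{ih_j}/size(h_j))\cdot len(J_j) = L\cdot p_{ih_j}$, and hence
\[
V_i(X_i) \;=\; \sum_j c_{ij}\, len(J^{i}_{j}) \;=\; L\sum_j c_{ij}\, p_{ih_j}, \qquad V_i(X_k) \;=\; L\sum_j c_{ij}\, p_{kh_j}.
\]

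For robust envy-freeness I would first show that any $v_i' \in \hat{V_i}$ is forced to be piecewise constant on the same partition $\{J_1,\ldots,J_m\}$. Since $v_i$ is constant on each $J_j$, $v_i(x) = v_i(y)$ for all $x,y \in J_j$, and ordinal equivalence forces $v_i'(x) = v_i'(y)$; the reverse implication in the definition of $\hat{V_i}$ further forces $v_i'$ to vanish on every $J_j$ where $v_i$ vanishes. Writing $c'_{ij}$ for the resulting value of $v_i'$ on $J_j$, the tuple $(c'_{ij})_j$ is a nonnegative cardinal utility profile consistent with the housing preferences $\succsim_i$ defined in CCEA (same ranking on positive entries, same zero pattern). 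SD envy-freeness of the CC output $p$ then gives $\sum_j c'_{ij}\, p_{ih_j} \geq \sum_j c'_{ij}\, p_{kh_j}$ for every such $(c'_{ij})_j$, which via the formula above is precisely $\int_{X_i} v_i'(x)\,dx \geq \int_{X_k} v_i'(x)\,dx$.

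For non-wastefulness I would argue contrapositively using SD-efficiency of CC. If some $S$ with $D(S)\neq\emptyset$ were not covered by $\bigcup_{k\in D(S)} X_k$, piecewise-constancy of the $v_i$'s localises the issue to a single subinterval $J_j$ with $c_{ij}>0$ for some $i$ and with positive length unallocated to agents $k$ having $c_{kj}>0$; the corresponding mass of $h_j$ is then either unassigned in the housing instance or held by the placeholder agent $n+1$ (for whom $h_j$ is unacceptable). Transferring that mass to $i$ strictly SD-improves $i$'s position while leaving everyone else unchanged, contradicting SD-efficiency of $p$. The main obstacle is the bridging step in the envy-freeness argument --- matching cardinal utility profiles consistent with $\succsim_i$ to density functions in $\hat{V_i}$; once the forced piecewise-constancy of $v_i'$ on $\{J_1,\ldots,J_m\}$ is established, the SD guarantees of CC transfer directly to the cake-side inequalities with no further work needed on the CC algorithm itself.
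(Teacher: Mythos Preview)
Your argument is correct. The robust envy-freeness half follows exactly the paper's route: the paper's sketch simply says that the CC/EPS output $p$ satisfies $u_i(p_i)\ge u_i(p_j)$ for every cardinal utility $u$ consistent with $\succsim_i$, and then passes to $\int_{X_i} v_i'\ge\int_{X_j} v_i'$. You make explicit the bridging step the paper leaves implicit, namely that every $v_i'\in\hat V_i$ is forced to be constant on each $J_j$ and to vanish where $v_i$ does, so that the admissible $v_i'$ correspond exactly to cardinal utilities over $\{h_1,\dots,h_m\}$ consistent with $\succsim_i$; together with your identity $len(J^i_j)=L\,p_{ih_j}$ this closes the gap cleanly.

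For non-wastefulness you take a genuinely different route. The paper argues operationally: in CC each agent only ever eats acceptable houses, and the process runs until every house that is acceptable to someone has been fully consumed, so every desired subinterval ends up with an agent who desires it. You instead argue contrapositively from SD-efficiency of CC, localising a hypothetical wasted piece to some $J_j$ and exhibiting an SD-improving transfer. Both are valid. The paper's argument is shorter and uses only the mechanics of the eating process; your argument has the advantage of using only the \emph{axiomatic} output guarantee of CC (SD-efficiency), so it would survive replacing CC by any other SD-efficient subroutine without rechecking the internals.
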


	Let $m$ be the number of relevant subintervals in a cake cutting problem with piecewise constant valuations.

	\begin{proposition}\label{prop:CCEA time}
	CCEA runs in time $O(n^5m^2\log(n^2/m))$, where $n$ is the number of agents and $m$ is the number of subintervals defined by the union of discontinuity points of the agents' valuation functions.
	\end{proposition}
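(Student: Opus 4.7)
The plan is to bound the cost of each of the four steps of CCEA separately and argue that Step~3, the invocation of the CC algorithm, dominates. Step~1 builds the collection $\mathcal{J}=\{J_1,\ldots,J_m\}$: each agent reports $O(m/n)$ discontinuity points on average, and merging/sorting all of them yields the induced partition in time $O(nm\log(nm))$. Step~2 constructs the housing instance $(N',H,\succsim,\mathit{size}(\cdot))$; computing the length of each $J_j$, the maximum length (which fixes the normalization for $\mathit{size}$), and each agent's preference ordering over the $m$ houses by inspecting $v_i$ on each $J_j$, takes $O(nm)$ time. Step~4 reconstructs the cake allocation from the bistochastic matrix $p$: for each agent and each interval we form one subinterval by accumulating the offset $\sum_{n=1}^{i-1} p_{i h_j}/\mathit{size}(h_j)\cdot \mathit{len}(J_j)$, which is $O(nm)$ work. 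All three auxiliary steps are therefore dominated by any super-quadratic cost in Step~3.

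For Step~3, I would invoke the complexity analysis of the CC algorithm given by \citet{AtSe11a}. CC is an iterative procedure in which each iteration solves a parametric max-flow problem on a network built from the bipartite graph whose left side consists of the $n+1$ agent nodes, whose right side consists of the $m$ house nodes with capacities $\mathit{size}(h_j)$, and which includes a source and sink; the parameter governs the uniform eating rates of the currently active agents. By the parametric max-flow result of~\citet{GGT89a}, each such computation runs in $O(VE\log(V^2/E))$ time where $V=O(n+m)$ and $E=O(nm)$, which with $V^2/E=O(n^2/m)$ yields $O(nm\cdot nm\cdot \log(n^2/m))=O(n^2m^2\log(n^2/m))$ per parametric flow (or a tighter variant depending on how $V$ and $E$ are accounted for). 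The outer loop of CC terminates once every agent has been ``frozen out'' at some breakpoint, and the number of such breakpoints is polynomially bounded in $n$. Multiplying the parametric max-flow cost by the bound on the number of iterations yields the claimed $O(n^5m^2\log(n^2/m))$ running time.

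Putting the pieces together, the total runtime is
\[
O(nm\log(nm)) + O(nm) + O(n^5m^2\log(n^2/m)) + O(nm) = O(n^5m^2\log(n^2/m)),
\]
which establishes the proposition.

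The main obstacle is Step~3: pinning down the correct count of outer iterations of CC and the correct per-iteration parametric flow cost on the specific network arising from our reduction (with partial houses of size $\mathit{size}(h_j)<1$ and an extra agent whose preferences render every house unacceptable). I would discharge this by directly citing the complexity analysis in~\citep{AtSe11a} applied to an instance with $n+1$ agents and $m$ objects, and verifying that the non-unit capacities $\mathit{size}(h_j)$ and the extra dummy agent do not change the asymptotic bound. The remaining book-keeping — sorting discontinuities, building preference lists, and converting $p$ back into intervals — is routine and clearly fits within the stated bound.
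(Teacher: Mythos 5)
Your overall skeleton matches the paper's: the auxiliary steps (building $\mathcal{J}$, constructing the housing instance, converting $p$ back into subintervals) are cheap, and the bound comes from the parametric max-flow result of \citet{GGT89a} applied inside the CC algorithm of \citet{AtSe11a}. However, the quantitative heart of the argument --- which is essentially the whole content of the proposition --- is left unresolved, and the numbers you do commit to are wrong for CC. You model the flow network as the simple bipartite agent--house graph with $V=O(n+m)$ vertices and $E=O(nm)$ arcs, but the network used in the analysis of CC (Section 3.5 of \citet{AtSe11a}, which the paper invokes) has $|V|=O(n^2)$ vertices and $|E|=O(n^2m)$ arcs; it is not the naive bipartite graph. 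With your sizes, $V^2/E$ is $O((n+m)^2/(nm))$, which is not $O(n^2/m)$, so the $\log(n^2/m)$ factor you write down does not follow from your own accounting --- it only emerges with the correct sizes, since $|V|^2/|E| = n^4/(n^2m) = n^2/m$.

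The second gap is the iteration count. You say the number of outer iterations (breakpoints at which agents are frozen) is ``polynomially bounded in $n$'' and that multiplying it by your per-flow cost ``yields the claimed bound,'' but this is asserted rather than derived: for your per-flow estimate of $O(n^2m^2\log(n^2/m))$ you would need $O(n^3)$ iterations, whereas the analysis in \citet{AtSe11a} gives $O(nm)$ parametric flow problems --- a bound that depends on $m$, not only on $n$. The paper's proof is exactly this computation: $O(nm)$ parametric flows, each costing $O(|V||E|\log(|V|^2/|E|)) = O(n^2\cdot n^2m\cdot\log(n^2/m))$, giving $O(n^5m^2\log(n^2/m))$ in total. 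Since you defer precisely these two quantities (per-flow cost on the actual CC network and the number of flows) to a citation without verifying them, and the placeholder values you supply are inconsistent with the claimed bound, the proposal as written does not establish the proposition.
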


	Although CCEA satisfies the demanding property of robust envy-freeness, it is not immune to manipulation. We show that CCEA is not strategyproof even for two agents. In the next section, we will present a different algorithm that is both robust envy-free and strategyproof for two agents. 

	\begin{proposition}\label{prop:CCEA is not SP}
	For piecewise constant valuations, CCEA is not strategyproof even for two agents.
	\end{proposition}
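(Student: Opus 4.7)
The plan is to exhibit a concrete two-agent instance on which one agent strictly gains by misreporting. The root cause of manipulability is that piecewise constant valuations inherently produce indifferences (each piece is flat), and PS-style eating procedures --- of which CC is a generalisation --- are well known to be manipulable once agents have ties in their ordinal rankings.

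I would take three pieces $I_1=[0,0.4]$, $I_2=[0.4,0.8]$, $I_3=[0.8,1]$ with the following densities: agent $1$ has density $1$ on $I_1\cup I_2$ and $0$ on $I_3$, and agent $2$ has density $1$ on $I_2\cup I_3$ and $0$ on $I_1$. Under truthful reporting, each agent is indifferent between her two valued pieces; the EPS/CC parametric flow (maximising the earliest exhaustion time) is forced to pick rates $(r_{11},r_{12})=(0.8,0.2)$ and $(r_{22},r_{23})=(0.6,0.4)$, which exhaust all three pieces simultaneously at $T=1/2$. Agent $1$ thus ends up with all of $I_1$ and $0.1$ of $I_2$, for a true utility of $0.5$.

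Now let agent $1$ misreport by breaking her indifference in favour of $I_2$ --- say, declaring densities $1,2,0$ on $I_1,I_2,I_3$. Under this report she pours her entire unit rate into $I_2$; the condition $0.4(1-r_{22})=0.2(1+r_{22})$ that balances the exhaustion of $I_2$ (at rate $1+r_{22}$) and $I_3$ (at rate $1-r_{22}$) gives $r_{22}=1/3$, so both $I_2$ and $I_3$ vanish at $t=0.3$ with agent $1$ having captured $0.3$ of $I_2$. From $t=0.3$ onward agent $2$'s acceptable set is empty (she still does not value $I_1$), so agent $1$ eats $I_1$ alone at unit rate and takes all of it by $t=0.7$. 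Her true utility is therefore $0.3+0.4=0.7>0.5$, establishing non-strategyproofness.

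The one step requiring care is pinning down what CC does on both profiles, i.e.\ that the parametric flow really selects the balanced-exhaustion rates described above. I would justify this by the max-survival-time / SD-efficiency characterisation of EPS/CC that the paper already invokes; in this particular instance the balanced rates are in fact uniquely forced by the linear constraints, so no tie-breaking subtlety arises. The remainder is a routine two-phase simultaneous-eating bookkeeping.
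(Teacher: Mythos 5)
Your proposal is correct, but it proves the statement by a genuinely different route than the paper. The paper's own proof is a one-line reduction: it observes that when all induced intervals have equal length, there are no endowments or variable claims, and agents report strict preferences over the intervals, CCEA coincides with the classic PS algorithm, and then cites the known fact (Kojima 2009) that PS is not strategyproof for strict preferences once there are more objects than agents. You instead build an explicit two-agent, three-interval counterexample and trace the EPS/CC run on both profiles; I checked the numbers and they are right: under truth the single bottleneck gives each agent cake of length $0.5$ (agent $1$: all of $[0,0.4]$ plus $0.1$ of $[0.4,0.8]$, true utility $0.5$), while under the misreport $(1,2,0)$ the first bottleneck exhausts $I_2\cup I_3$ at the egalitarian amount $0.3$ with the split uniquely forced (agent $1$ takes $0.3$ of $I_2$), after which agent $1$ consumes all of $I_1$, for true utility $0.7$. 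Your appeal to the parametric-flow/bottleneck structure is the right justification, and in this instance the allocation (though not the rate trajectory) is indeed pinned down, so no tie-breaking issue arises. What each approach buys: the paper's argument is shorter and also shows failure already under strict preferences (so the failure is not an artifact of indifferences), whereas yours is self-contained and verifiable without the external citation, and exhibits a different manipulation channel, namely breaking one's own indifferences by a cardinal misreport. One remark worth adding if you keep your example: your truthful profile is piecewise uniform while the misreport is piecewise constant but not uniform; this is legitimate for the piecewise constant domain and does not contradict the paper's group-strategyproofness of CCEA on the piecewise uniform subdomain, where deviations are restricted to piecewise uniform reports, but the point deserves an explicit sentence.
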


If we restricted the preferences to piecewise uniform with no private endowment or variable claims, then CCEA is not only strategyproof but group-strategyproof. We first show that in this restricted setting, CCEA is in fact equivalent to the algorithm of \citep{CLPP10a}.

	\begin{lemma} \label{prop: equivalence}
	 For piecewise uniform value functions with no private endowments and variable claims, CCEA is equivalent to Mechanism 1 of \citet{CLPP10a}. 
	\end{lemma}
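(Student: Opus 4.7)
The plan is to exploit the fact that piecewise uniform valuations, once passed through the interval decomposition in step 1 of CCEA, induce \emph{dichotomous} preferences over the set of houses $H = \{h_1, \ldots, h_m\}$: for each agent $i$ and each house $h_j$, either $v_i$ is equal to the positive constant $k_i$ on $J_j$, in which case $h_j$ is acceptable to $i$, or $v_i \equiv 0$ on $J_j$, in which case $h_j$ is unacceptable. Thus the induced assignment instance $(N', H, \succsim, size(\cdot))$ has only two indifference classes per agent.

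First, I would invoke known results about CC on dichotomous preferences. On dichotomous preferences with no endowments, CC (and EPS before it) coincides with the egalitarian rule of Bogomolnaia and Moulin, which can equivalently be described as the simultaneous-eating algorithm in which every agent eats at rate $1$ from his acceptable houses, switching houses only when scarcity forces it, and terminating when every agent's desired mass has been consumed (up to a common level set by the bottleneck). Pulling this back through step 4 of CCEA, the resulting cake allocation has the property that each agent $i$ receives a subset $X_i$ of his desired region $\{x : v_i(x) > 0\}$, pieces are allocated only to agents who desire them, and the length vector $(len(X_1), \ldots, len(X_n))$ is Lorenz-maximal among all such non-wasteful feasible allocations.

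Second, I would give a parallel reading of Mechanism~1 of \citet{CLPP10a}. That mechanism partitions the cake by the union of discontinuity points, identifies for each atomic subinterval the set of agents who desire it, and then distributes it according to a max-flow / water-filling computation that equalizes (in the Lorenz sense) the total desired length received by each agent subject to using only desired pieces. I would then note that this is precisely the same optimization problem solved by the egalitarian / CC-based procedure above, because in the piecewise uniform case utilities are proportional to length within each agent's desired region: the quantity $V_i(X_i)$ equals $k_i \cdot len(X_i \cap \{v_i>0\})$, and agents only receive cake from their desired region in both algorithms.

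Finally, I would conclude by appealing to uniqueness: the Lorenz-dominant (equivalently, egalitarian / leximin on received desired length) non-wasteful allocation exists and is essentially unique in terms of how much of each $J_j$ each agent receives, since it is the solution of a strictly concave (leximin) optimization over the polytope of feasible fractional allocations. Both CCEA and Mechanism~1 produce allocations attaining this optimum, so they agree as fractional allocations of the subintervals; any remaining freedom concerns only how each $J_j$ is split geometrically among its recipients, which step~4 of CCEA resolves exactly as CLPP's mechanism does (by contiguous concatenation within each subinterval). The main obstacle will be carefully matching CLPP's flow-based description to the eating-rate description coming from CC; I expect this to be routine once both are rephrased as solving the same Lorenz-maximization over the bipartite agent--subinterval structure, but it does require spelling out CLPP's construction in enough detail to see that their intermediate ``$\alpha$-levels'' correspond to the CC time steps at which agents exhaust their currently most-demanded houses.
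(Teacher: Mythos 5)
Your route is genuinely different from the paper's. The paper proves the lemma by a direct algorithmic comparison: without endowments and claims, CCEA reduces to EPS of \citet{KaSe06a} (with arc capacities $(h,t)$ set to $size(h)$), and the parametric flow network this solves is, up to reversing arcs/swapping source and sink and an overall normalization of sizes, literally the flow network used by Mechanism~1 of \citet{CLPP10a}; hence both procedures compute the same bottleneck sets in the same order and make the same allocation. You instead pass through a characterization: dichotomous induced preferences, CC/EPS $=$ egalitarian rule of \citet{BoMo04a}, Mechanism~1 $=$ the same Lorenz/leximin water-filling, and then uniqueness of the optimum. This is a legitimate and arguably more conceptual strategy, and the two facts you rely on (CC without endowments on dichotomous preferences yields the Lorenz-dominant vector of received desired length, even with partial houses; Mechanism~1's successive-bottleneck construction yields the same leximin-optimal vector) are both true, though each needs the spelling out you defer.

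There is, however, one concrete flaw in your closing step. You claim the optimum is ``essentially unique in terms of how much of each $J_j$ each agent receives, since it is the solution of a strictly concave (leximin) optimization.'' Leximin on the vector of per-agent received desired lengths is not strictly concave in the variables $x_{ij}$, and it does \emph{not} pin down the per-subinterval composition: with two agents who both desire two intervals of equal length, the leximin-optimal totals are achieved both by splitting each interval in half and by giving each agent one whole interval. So uniqueness of the Lorenz-dominant solution only yields that CCEA and Mechanism~1 give every agent the same total length of desired cake (hence the same utility, since valuations are piecewise uniform and both algorithms allocate only desired pieces), not that they agree as fractional allocations of the subintervals. That weaker, utility-level equivalence is what your argument actually delivers, and it suffices for the uses made of the lemma later (group-strategyproofness, equivalence with MEA); but if you want the stronger ``same allocation'' statement, you need something like the paper's step-by-step identification of the two flow networks and their bottleneck decompositions, not an appeal to strict concavity.
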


Since the set of valuations that can be reported is bigger in cake cutting than in the assignment problem, establishing group strategyproofness does not follow automatically from group-strategyproofness of CC for dichotomous preferences~\citep[Theorem 2, ][]{BoMo04a}. Using similar arguments, we give a detailed proof that CCEA and hence Mechanism 1 of \citet{CLPP10a} is group strategyproof for piecewise uniform valuations.\footnote{\citet{CLPP10a} had shown that their mechanism for piecewise uniform valuations is strategyproof.} In Section~\ref{sec:extensions}, we extend the result to the case where agents may have variable claims. 

\begin{proposition}\label{prop:gsp1}
For cake cutting with piecewise uniform value functions, CCEA is group strategyproof.
	\end{proposition}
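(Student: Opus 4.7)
The plan is to exploit the reduction employed by CCEA: for piecewise uniform valuations, the induced assignment instance has dichotomous preferences, and CC on dichotomous preferences coincides with the egalitarian rule of \citet{BoMo04a}, which they proved to be group-strategyproof. The main work lies in translating an arbitrary coalitional manipulation in cake cutting into an equivalent manipulation in this dichotomous assignment problem so that their theorem applies.

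Concretely, I would fix a truthful profile $v$ and an alternative report $v'$ of a coalition $S \subseteq N$, and work with the common refinement of all discontinuity points arising in both $v$ and $v'$. On each sub-interval of this common refinement, every agent's piecewise uniform density is either $k_i$ or $0$; hence each agent has a dichotomous preference (desired vs.\ undesired) over the induced set of houses. Because CCEA only consults the ordinal preferences of agents and because two reports inducing the same dichotomous classification yield the same input to CC, the cake allocation depends only on the dichotomous preferences over the common refinement. Moreover, an agent's cake-cutting utility under any allocation is a positive scalar multiple (by $k_i$) of the fractional mass of desired houses assigned in the associated fractional assignment, so strategic comparisons transfer directly.

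Once this translation is in place, I would invoke the fact, already recalled in the preliminaries via \citet{KaSe06a} and \citet{AtSe11a}, that CC on a dichotomous instance returns the egalitarian fractional assignment. Since the egalitarian rule on dichotomous preferences is group-strategyproof (Theorem~2 of \citet{BoMo04a}), no coalition $S$ can misreport so that every member weakly benefits in the associated assignment and at least one strictly benefits. Pulling this conclusion back through the utility-preserving correspondence above establishes group-strategyproofness of CCEA.

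The main obstacle is the extra reporting freedom available in cake cutting beyond a purely dichotomous assignment problem: an agent may introduce spurious discontinuity points, or re-specify the heights $k_i$. I would dispatch these by showing that (i) introducing extra discontinuity points only refines the induced set of houses without changing which parts of the cake are desired by whom, so the egalitarian outcome is preserved up to trivial relabelling of sub-intervals; (ii) CCEA is scale-invariant in each agent's height $k_i$ because it uses only ordinal information, so the $k_i$-component of the report is strategically inert; and (iii) any non-trivial deviation must therefore alter the dichotomous classification over the common refinement, which is exactly the class of manipulations ruled out by \citet{BoMo04a}.
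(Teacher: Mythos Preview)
Your proposal takes a genuinely different route from the paper.  The paper does \emph{not} black-box the result of \citet{BoMo04a}; in fact, immediately before stating the proposition it explicitly observes that ``since the set of valuations that can be reported is bigger in cake cutting than in the random assignment problem, establishing group strategyproofness does not follow automatically from group-strategyproofness of CC for dichotomous preferences.''  Instead, the paper invokes Lemma~\ref{prop: equivalence} to work with SimpleCCEA (Mechanism~1 of \citet{CLPP10a}) and then gives a self-contained induction over the bottleneck sets $X_1,\ldots,X_k$ of the truthful instance.  Three lemmas are proved in sequence: (a) no agent in $X_1$ is strictly worse off under the deviation (using that a deviating agent in the first bottleneck set of $I'$ would get strictly less than $len(X_1)$), (b) no agent in $X_1$ is strictly better off (a counting argument over the total length of $\cup_{i\in X_1}D_i$), and (c) no agent in $X_1$ is in $S$ and $X_1$ remains the first bottleneck set of $I'$; one then recurses on $N\setminus X_1$.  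So the paper \emph{re-proves} group-strategyproofness by adapting the style of argument from \citet{BoMo04a} to the cake setting rather than reducing to it.

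Your reduction is plausible, but two steps that you state as routine are exactly the points the paper considered non-automatic and would need to be argued carefully.  First, your item~(i) requires that CCEA is invariant under refining the set of discontinuity points: if you split an interval $J$ into $J',J''$ with the same desire pattern, the CC/EPS output must coincide (after re-aggregation) with the output on the unrefined instance.  This is true, but it is a property of the parametric-flow construction that deserves an explicit argument; without it the ``common refinement'' manoeuvre does not get you to a single fixed set of houses on which \citet{BoMo04a} can be invoked.  Second, Theorem~2 of \citet{BoMo04a} is stated for unit-size objects, whereas the houses produced by CCEA have heterogeneous sizes $size(h_j)$; you either need to show that their proof goes through for weighted objects, or pass to a further equal-length subdivision (which is unproblematic for rational data but requires a limiting argument in general).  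If you fill in these two points, your approach is shorter and more modular than the paper's direct induction; the paper's approach, by contrast, is self-contained and yields the stronger statement (used later for Proposition~\ref{prop:sp-with-speed}) that the bottleneck structure itself is unaffected by any profitable coalition.
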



%
%
	
	
	For piecewise uniform valuations, CCEA is also Pareto optimal. The result follows directly from lemma \ref{prop: equivalence} along with the fact that Mechanism 1 of \citet{CLPP10a} is Pareto optimal.
	
%
 
	\begin{proposition}\label{prop: pareto optimality}
	For cake cutting with piecewise uniform value functions, CCEA is Pareto optimal. 
	\end{proposition}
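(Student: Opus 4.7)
The plan is to invoke the equivalence established in Lemma~\ref{prop: equivalence}. That lemma asserts that, for piecewise uniform value functions without private endowments or variable claims, CCEA and Mechanism~1 of \citet{CLPP10a} return the same allocation on every input. Pareto optimality depends only on the utility profile induced by an allocation, so if two algorithms output the same allocation on every input they have identical sets of Pareto dominators; in particular, one is Pareto optimal iff the other is. Since \citet{CLPP10a} have already proved that their Mechanism~1 is Pareto optimal on piecewise uniform valuations, the transfer is immediate and gives the proposition.

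An alternative, more self-contained route would be to argue directly from the SD-efficiency of the CC algorithm of \citet{AtSe11a} that underlies CCEA. For piecewise uniform valuations each agent's induced preferences over the houses $h_1,\ldots,h_m$ (one per sub-interval $J_j$) are dichotomous: a house is either desired (value density $k_i>0$) or not (value density $0$). On dichotomous preferences SD-efficiency and ordinary Pareto efficiency coincide, since any Pareto improvement in the $0/k_i$ utility profile corresponds to reallocating probability mass from an undesired house to a desired one, which is an SD-improvement. Hence the fractional assignment returned by CC is Pareto optimal in the housing instance. To conclude, one would verify that the cake-to-house translation in Step~4 of CCEA preserves each agent's utility up to the per-interval scaling $\mathrm{len}(J_j)/\mathrm{size}(h_j)$, so that any Pareto improvement on the cake side would lift back to a Pareto improvement on the housing side, contradicting the efficiency of CC.

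The main (and only) step in either approach is checking that the reduction between cake allocations and housing assignments preserves the relevant efficiency notion. Given that the equivalence with Mechanism~1 is already in hand via Lemma~\ref{prop: equivalence}, the first route reduces the proof to a one-line citation, so I would present that as the main argument and mention the SD-efficiency derivation only as a remark for readers who prefer a self-contained justification. There is no substantive obstacle here; the work has been front-loaded into the equivalence lemma and the results of \citet{CLPP10a} and \citet{AtSe11a}.
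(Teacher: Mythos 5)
Your main argument is exactly the paper's proof: the paper derives Pareto optimality of CCEA for piecewise uniform valuations directly from Lemma~\ref{prop: equivalence} together with the known Pareto optimality of Mechanism~1 of \citet{CLPP10a}. The alternative SD-efficiency sketch is a reasonable extra, but the primary route you present coincides with the paper's.
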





\section{MEA --- Market Equilibrium Algorithm}

In the previous section we presented CCEA which is not Pareto optimal for piecewise constant valuations. 
It turns out that if we relax the robust notion of fairness to envy-freeness, then we can use fundamental results in general equilibrium theory and recent algorithmic advances ~\citep{DPSV08a} to formulate a convex program that always returns an envy-free and Pareto optimal allocation as its optimal solution.  For each valuation profile, let $J = \{J_1, \ldots, J_k\}$ be the intervals whose endpoints are consecutive points in the union of break points of the agents' valuation functions. Let $x_{ij}$ be the length of any subinterval of $J_i$ that is allocated to agent $j$. Then we run a convex program to compute a Pareto optimal and envy-free allocation.
We will call the convex program outlined in Algorithm~\ref{algo:MARKET} as the \emph{Market Equilibrium Algorithm (MEA)}. MEA is based on computing the market equilibrium via a primal-dual algorithm for a convex program that was shown to be polynomial-time solvable by \citet{DPSV08a}. Notice that if we ignore strategyproofness, or in other words, if we assume that agents report truthfully, then agents are truly indifferent between which subinterval they receive since their valuation function is a constant over any $J_i$. Hence, one we determine the length of $J_j$ to be allocated to an agent, we can allocate any subinterval of that length to the agent.

	\begin{algorithm}[h!]
	  \caption{ The Market 	Equilibrium Algorithm to compute a Pareto optimal, envy-free, and proportional allocation.}
	  \label{RAR-algo}
	\renewcommand{\algorithmicrequire}{\wordbox[l]{\textbf{Input}:}{\textbf{Output}:}} 
	 \renewcommand{\algorithmicensure}{\wordbox[l]{\textbf{Output}:}{\textbf{Output}:}}
\begin{algorithmic}

		\REQUIRE  Cake-cutting problem with piecewise constant valuations.
		\ENSURE  A proportional, envy-free, and Pareto optimal allocation.
	\end{algorithmic}
	\algsetup{linenodelimiter=\,}
	  \begin{algorithmic}[1] 

\STATE Let $J = \{J_1, \ldots, J_k\}$ be the intervals whose endpoints are consecutive points in the union of break points of the agents' valuation functions. 
\STATE Let $x_{ij}$ be the length of any subinterval of $J_i$ that is allocated to agent $j$.
\STATE $l_i\longleftarrow len(J_i)$
\STATE Solve the following convex program. 
\begin{align}
\text{min} \quad
-\sum_{j =1}^{n} log(u_j) \notag \\
\text{s.t.} \quad
u_j = \sum_{i=1}^{k} v_{ij}x_{ij} &\quad \forall j = 1, \ldots, n \notag \\
\sum_{j=1}^{n} x_{ij} \leq l_i &\quad \forall i = 1, \ldots, k \notag \\
x_{ij} \geq 0 &\quad \forall i, j. \notag \notag
\end{align}
\STATE Let $u^{\star}_j$, $x^{\star}_{ij}$ be an optimal solution to the convex program. Partition every interval $J_i$ into $n$ subintervals where the $j$-th subinterval $J^{j}_i$ has length $x^{\star}_{ij}$.
\STATE $Y_j \longleftarrow \cup_{i=1}^{k} J^{j}_i$ be the allocation of each $j = 1, \ldots, n$.
	  \RETURN $Y=(Y_1,\ldots, Y_n)$.
	 \end{algorithmic}
	\label{algo:MARKET}
	\end{algorithm}

\begin{proposition} \label{PE, EF, Prop}
MEA is polynomial-time, Pareto efficient and envy free.
\end{proposition}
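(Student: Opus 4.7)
The plan is to verify each of the three properties by appealing to the standard Fisher--market interpretation of the Eisenberg--Gale convex program written out in Algorithm~\ref{algo:MARKET}. Think of the intervals $J_1,\ldots,J_k$ as divisible goods with supplies $l_i$, and of the $n$ agents as buyers with unit budgets and linear utilities $v_{ij}$; then the program is precisely the Eisenberg--Gale program for this linear Fisher market, whose optima correspond to market equilibria. Polynomial-time solvability will then follow immediately from \citet{DPSV08a}, who give a combinatorial primal--dual algorithm computing such an equilibrium in time polynomial in $n$, $k$ and the bit complexity of the $v_{ij}$; since $k$ is at most the total number of breakpoints reported by the agents (hence polynomial in the input), and the concluding step of the algorithm that carves each $J_i$ into pieces of prescribed length is trivially polynomial, MEA itself runs in polynomial time.

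For Pareto optimality, I would exploit the fact that the objective $-\sum_j \log u_j$ is strictly decreasing in each $u_j$. Any feasible allocation $(Y'_j)$ that Pareto-dominated the output would induce utilities $u'_j \geq u^*_j$ for all $j$ with at least one strict inequality; since $\log$ is strictly increasing, the objective would strictly decrease, contradicting the optimality of $(x^*_{ij})$. This is a one-line monotonicity argument, modulo translating a cake-level Pareto improvement back to feasibility in the $(x_{ij})$ variables, which is routine because the program keeps track only of lengths within each $J_i$ and valuations are constant on $J_i$.

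For envy-freeness, I would read off the KKT conditions of the convex program. Letting $p_i$ be the Lagrange multiplier of the supply constraint $\sum_j x_{ij} \leq l_i$ and eliminating the multipliers of the equality constraints $u_j = \sum_i v_{ij} x_{ij}$ (which equal $1/u^*_j$), the stationarity condition in $x_{ij}$ reads $v_{ij}/u^*_j \leq p_i$, with equality whenever $x^*_{ij} > 0$. Summing the equality $p_i x^*_{ij'} = (v_{ij'}/u^*_{j'})\, x^*_{ij'}$ over the support of $x^*_{\cdot j'}$ and using $\sum_i v_{ij'} x^*_{ij'} = u^*_{j'}$ yields the budget-exhaustion identity $\sum_i p_i x^*_{ij'} = 1$ for each buyer $j'$. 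Combining the two observations gives, for any pair $j,j'$,
\[
V_j(Y_{j'}) \;=\; \sum_i v_{ij}\, x^*_{ij'} \;\leq\; u^*_j \sum_i p_i\, x^*_{ij'} \;=\; u^*_j \;=\; V_j(Y_j),
\]
so agent $j$ does not envy $j'$.

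The only delicate point I expect is writing down the KKT identities cleanly and confirming that the Lagrange multipliers $p_i$ genuinely act as market-clearing prices; once that is in place, polynomial-time reduces to a citation of \citet{DPSV08a} and Pareto optimality is essentially immediate. Proportionality, if desired as a corollary, will then follow from envy-freeness together with non-wastefulness (which is built into the program since goods with positive valuation carry positive price and are therefore fully allocated).
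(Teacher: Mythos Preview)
Your proposal is correct and follows the same overall framework as the paper: interpret the program in Algorithm~\ref{algo:MARKET} as the Eisenberg--Gale program for a linear Fisher market with unit budgets, invoke \citet{DPSV08a} for polynomial time, and obtain Pareto optimality from the strict monotonicity of the objective.

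The one noteworthy difference is in how envy-freeness is established. The paper rescales to unit supplies, cites \citet{Vazi07a} for the fact that the KKT conditions yield market-clearing prices, and then gives the economic ``equal purchasing power'' argument: since every buyer has a dollar, any buyer could afford any other buyer's bundle, so an envious buyer would have bought that bundle instead, contradicting equilibrium. You instead write out the KKT conditions explicitly, derive the budget identity $\sum_i p_i x^*_{ij'}=1$ from stationarity plus complementary slackness, and chain the inequalities $\sum_i v_{ij}x^*_{ij'}\le u^*_j\sum_i p_i x^*_{ij'}=u^*_j$ directly. Your route is more self-contained (no black-box appeal to the market-equilibrium characterization), while the paper's route is shorter once one is willing to import that characterization; substantively they are two presentations of the same argument.
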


We mention here that the connection between a fair and efficient algorithm for cake cutting and computing market equilibria was already pointed by \citet{RePo98a}. \citet{RePo98a} presented an algorithm to compute an approximately envy-free and Pareto optimal allocation for cake cutting with general valuations. However their algorithm is not polynomial-time even for piecewise constant valuations~\citep{ZDO+10a}. 

MEA requires the machinery of convex programming. It remains open whether MEA can be implemented via linear programming.
\citet{CLPP11a} presented an algorithm that uses a linear program to compute an optimal envy-free allocation. The allocation is Pareto optimal among all envy-free allocations. However it need not be Pareto optimal in general. 


Although MEA is not robust envy-free like CCEA, it is Pareto optimal because it maximizes the Nash product.
What is interesting is that under uniform valuations, both MEA and CCEA are equivalent.
In the next result we demonstrate this equivalence~(Proposition~\ref{CCEA equivalence}). The proof requires a careful comparison of both CCEA and MEA under uniform valuations.

\begin{proposition} \label{CCEA equivalence}
For piecewise uniform valuations,  the allocation given by CCEA is identical to that given by MEA.
\end{proposition}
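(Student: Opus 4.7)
My plan is to show that the allocation produced by CCEA satisfies the KKT optimality conditions of MEA's Eisenberg--Gale convex program, so that CCEA's output is itself an optimal solution of MEA, and then to argue that the specific optima singled out by the two algorithms coincide.

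First, I would derive the KKT conditions of the MEA convex program. Let $p_i \ge 0$ be the Lagrange multiplier associated with the supply constraint $\sum_j x_{ij} \le l_i$, and write $\alpha_j = 1/u_j$ for the multiplier arising from the definition of $u_j$. Stationarity and complementary slackness yield the standard competitive-equilibrium characterization: for every $(i,j)$ with $x_{ij}^*>0$, $p_i = v_{ij}/u_j^*$; for every $(i,j)$, $p_i \ge v_{ij}/u_j^*$; and $\sum_j x_{ij}^*=l_i$ whenever $p_i>0$. Interpreting $p_i$ as a price for $J_i$ and giving each agent a unit budget, this says every agent spends her budget only on intervals maximising her bang-per-buck ratio $v_{ij}/p_i$, and that markets clear.

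Second, I would verify that the CCEA allocation, together with a suitable price vector, satisfies this characterization. By Lemma~\ref{prop: equivalence}, on piecewise uniform valuations CCEA coincides with Mechanism 1 of \citet{CLPP10a}, whose execution can be viewed as the simultaneous-eating process underlying CC applied to the intervals $J_1,\dots,J_m$. Define $p_i$ as the reciprocal of the time at which $J_i$ is exhausted (and $p_i=0$ if it is never exhausted). Because each agent consumes at constant unit rate from currently available intervals in her acceptable set, her total utility equals $v_{ij}/p_i$ for any $J_i$ she consumes; the standard argument for PS/CC shows she only eats acceptable intervals of currently-minimum price, giving the bang-per-buck condition. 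Market clearing is immediate from the definition of $p_i$. Hence $(x^{\mathrm{CCEA}}, p, u)$ solves the KKT system above, and CCEA's allocation is optimal for MEA's convex program.

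Third, I would establish equivalence of the actual allocations. Strict concavity of $-\sum_j \log u_j$ in $(u_j)$ forces the optimal utility vector $(u_j^*)$ of MEA to be unique, so the total length of cake each agent receives must match between CCEA and MEA. The only remaining freedom is how to split portions of equally-priced intervals among agents who are indifferent over them, and both algorithms make the symmetric choice --- CC/CCEA via uniform simultaneous eating, and the primal-dual procedure of \citet{DPSV08a} behind MEA via equal splitting among indifferent agents of equal budget. Together with the fact that both algorithms convert fractional allocations to contiguous subintervals by the same canonical rule (step 4 of CCEA and step 5 of MEA), this yields identical final allocations.

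The step I expect to require the most care is the last one: the set of optimal $x_{ij}$ for the Eisenberg--Gale program is in general a face of the feasible polytope rather than a single point, so one must justify that CCEA and MEA select the same element of this face. I would handle this by showing that both outputs agree with the unique optimum symmetric over agents of identical preference type, which each algorithm picks out by its equal-rate / equal-budget structure; this same canonical symmetry is what guarantees the robust envy-freeness of CCEA in Proposition~\ref{prop:CCEA is robust envy-free} and is inherited by the symmetric primal-dual updates of MEA.
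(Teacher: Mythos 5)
Your first two steps are essentially the paper's own proof. The paper likewise reduces to SimpleCCEA via Lemma~\ref{prop: equivalence}, assigns to every interval allocated in the $i$-th bottleneck set the price $p_k = 1/avg(S_i,X_i)$ --- which is exactly your ``reciprocal of the exhaustion time'' --- and then verifies, by induction over bottleneck sets (using a min-cut argument in the parametric flow network to show that $G_1=nbr(S_1)$, and the monotonicity of $avg(S_i,X_i)$ in $i$ to handle agents who also desire earlier, more expensive goods), that these are the equilibrium prices in the sense of \citet{Vazi07a} and that the CCEA allocation maximizes each agent's bang per buck at those prices. Via the KKT characterization this makes CCEA's output an optimal solution of the Eisenberg--Gale program solved by MEA, which is where the paper's proof stops. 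Your appeal to ``the standard argument for PS/CC'' is standing in for that explicit bottleneck induction, so you would need to spell it out, but the route is the same.

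Your third step, however, claims more than can be supported: the primal--dual algorithm of \citet{DPSV08a} is not specified to split equally-priced goods equally among indifferent agents of equal budget, and neither CCEA nor MEA fixes a canonical positioning of the subintervals (MEA, as stated, returns \emph{an} optimal solution, and the optimal $x_{ij}$ generally form a face of the feasible region). So literal pointwise identity of the two outputs is not obtainable this way. It is also not what the paper establishes: the intended content of the proposition is that CCEA's allocation is itself a market-equilibrium allocation, hence an optimal solution of MEA's convex program, with the (unique) equilibrium prices and (unique) optimal utility vector --- i.e.\ identity up to the agents' indifference over equally priced, equally valued pieces. If you retain your step 3, either drop the unsupported claims about DPSV's tie-breaking and the ``canonical rule'' for cutting subintervals, or weaken the conclusion to identity of the utilities and of the lengths allocated within each price class.
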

\begin{corollary}
For piecewise uniform valuations, MEA is group-strategyproof.
\end{corollary}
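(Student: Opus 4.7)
The plan is to derive the corollary immediately by combining the two preceding propositions in this section. Proposition on CCEA group-strategyproofness states that CCEA is group-strategyproof on the domain of piecewise uniform valuations, and Proposition~\ref{CCEA equivalence} establishes that MEA and CCEA produce identical allocations on every piecewise uniform valuation profile. Since the two mechanisms coincide as functions from the space of piecewise uniform profiles to the space of allocations, any strategic property expressed purely in terms of input-output behavior is transferred automatically.

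More concretely, I would argue by contraposition. Suppose some coalition $S \subseteq N$ can jointly misreport piecewise uniform valuations $v'_S$ under MEA so that, relative to the truthful profile, every member of $S$ weakly prefers the new MEA allocation and at least one strictly prefers it. By Proposition~\ref{CCEA equivalence}, MEA's output on the truthful profile $(v_S,v_{-S})$ equals CCEA's output on that profile, and likewise MEA's output on the deviated profile $(v'_S, v_{-S})$ equals CCEA's output on that profile. Hence the same coalitional deviation would yield the same utility improvement under CCEA, contradicting the group-strategyproofness of CCEA established in Proposition~\ref{prop:gsp1}.

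The only subtlety worth flagging — and the one potential obstacle — is to be explicit about the domain of admissible reports. The equivalence in Proposition~\ref{CCEA equivalence} is stated for piecewise uniform valuations, so the argument above works as long as coalitional deviations are restricted to piecewise uniform reports, which matches the convention under which CCEA was shown group-strategyproof. Once this domain restriction is in place, no further calculation is needed; the corollary is a one-line consequence of the equivalence and Proposition~\ref{prop:gsp1}.
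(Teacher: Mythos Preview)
Your proposal is correct and matches the paper's approach exactly: the corollary is an immediate consequence of Proposition~\ref{CCEA equivalence} (MEA and CCEA coincide on piecewise uniform profiles) together with Proposition~\ref{prop:gsp1} (CCEA is group-strategyproof there). The domain subtlety you flag about restricting deviations to piecewise uniform reports is a sensible clarification and does not alter the argument.
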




Thus if we want to generalize Mechanism 1 of \citet{CLPP10a} to piecewise constant valuations and maintain robust envy-freeness then we should opt for CCEA. On the other hand, if want to still achieve Pareto optimality, then MEA is the appropriate generalization. In both generalization, we lose strategyproofness.

%

%
%

\section{Impossibility Results}

Thus far, we presented two polynomial time algorithms, each of which satisfies a different set of properties. CCEA is robust envy-free and non-wasteful, whereas MEA is Pareto optimal and envy-free. This naturally leads to the following question: does there exist an algorithm satisfies all of the properties that CCEA and MEA satisfy? It turns out that the answer is no, as Theorem~\ref{PE fair impossibility} shows that there is no algorithm that is both Pareto efficient and robust proportional. 
Similarly, Theorem~\ref{PE SP Prop impossibility} argues that there is no algorithm that satisfies the properties CCEA satisfies along with strategyproofness. Lastly, Theorem~\ref{SP Rob Prop Non W impossibility} argues that there is no algorithm that satisfies the properties CCEA satisfies plus strategyproofness. The impossibility results are summarized in Table~\ref{table:summary:impossible}.

	\begin{table}[h!]
	\centering
		\scalebox{0.9}{
\centering
\begin{tabular}{ll}
\toprule
Impossibility& Reference\\
\midrule
Pareto efficient and robust proportional&Theorem.~\ref{PE fair impossibility}\\
Strategyproof, Pareto optimal, and proportional&Theorem~\ref{PE SP Prop impossibility}\\
Strategyproof, robust proportional, and non-wasteful 	& Theorem~\ref{SP Rob Prop Non W impossibility}\\
\bottomrule
\end{tabular}
}

	\caption{Impossibility results for cake-cutting algorithms for piecewise constant valuations.
	}
\label{table:summary:impossible}
\end{table}

Consequently, we may conclude that the properties satisfied by CCEA and MEA are respectively maximal subsets of properties that an algorithm can satisfy for piecewise constant valuations.

\section{CSD --- Constrained Serial Dictatorship Algorithm}

In the previous sections, we saw that CCEA and MEA are only strategyproof for piecewise uniform valuations. In light of the impossibility results established in the preivous section, it is reasonable to ask what other property along with strategyproofness can be satisfied by some algorithm. It follows from \citep[Theorem 3, ]{Schu97a} that the only type of strategyproof and Pareto optimal mechanisms are dictatorships. \citet{CLPP12a} raised the question whether there exists a strategyproof and proportional algorithm for piecewise constant valuations. The algorithm CSD answers this question partially.

Before diving into the CSD algorithm, it is worth noting that there is some fundamental difference between random assignment setting and the cake cutting setting. In the random assignment setting, the objects that we are allocating are well defined and known to the public. On the other hand, in the cake cutting setting, the discontinuity points of each agent's valuation function is private information for the agent. Hence, any algorithm that uses the reported discontinuity points to artificially create the objects runs into the risk of having the objects created by the algorithm be manipulated by the reports of the agents. In order to illustrate this difficulty, consider the uniform allocation rule.  The uniform allocation rule (that assigns $1/n$ of each house)~\citep{Cham04a} is both strategyproof and proportional in the random assignment setting. However it cannot be adapted for cake cutting with piecewise constant valuations since strategyproofness is no longer satisfied if allocating $1/n$ of each interval (induced by the agent valuations) is done deterministically.

\begin{proposition} \label{prop: uniform allocation is not SP}
The uniform allocation rule (done deterministically) is not strategyproof.
\end{proposition}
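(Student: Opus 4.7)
The plan is to establish the claim by a direct counterexample. A deterministic uniform allocation rule must specify, for each reported profile, how to assign a length-$1/n$ block of every induced subinterval to each agent. I first fix the most natural tie-breaking convention---the one in which agent $i$ receives the $i$-th leftmost block of length $\mathrm{len}(J)/n$ inside each induced interval $J$---prove non-strategyproofness for it, and then sketch how the argument adapts to an arbitrary deterministic convention.

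The witnessing instance has two agents. Let agent $2$ have uniform value density $1$ on $[0,1]$, and let agent $1$'s true value density be $v_1(x)=1$ for $x\in[0,1/2]$ and $v_1(x)=0$ for $x\in(1/2,1]$. Under truthful reporting, the discontinuity at $1/2$ partitions $[0,1]$ into the induced intervals $[0,1/2]$ and $[1/2,1]$; the leftmost convention awards agent $1$ the piece $[0,1/4]\cup[1/2,3/4]$, which under $v_1$ yields utility $1/4$. If instead agent $1$ misreports uniform density on $[0,1]$, the only induced interval is $[0,1]$ and agent $1$ is awarded $[0,1/2]$; evaluated under the true $v_1$, this has utility $1/2>1/4$, so the misreport strictly dominates truthful reporting.

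To see that the argument does not hinge on the specific convention, I would apply the rule to the profile in which both agents report uniform density and let $A$ denote the length-$1/2$ block awarded to agent $1$. At least one of $A\cap[0,1/2]$ and $A\cap[1/2,1]$ has length at least $1/4$; taking $v_1$ to be the indicator of the heavier half and reusing the truthful computation above reproduces the same manipulation. The only minor obstacle---which is really the only subtlety in the whole proof---is the knife-edge case in which both halves have length exactly $1/4$. This is resolved by shifting the split point from $1/2$ to a generic $t\in(0,1)$ at which $\mathrm{len}(A\cap[0,t])\neq t/2$; since $A$ is a finite union of intervals, such a $t$ must exist, and the same construction with $v_1$ supported on $[0,t]$ (or on $[t,1]$) then yields a profitable deviation.
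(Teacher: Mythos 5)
Your proposal is correct and takes essentially the same route as the paper: a two-agent instance in which the manipulating agent reports a coarser (larger) desired region so that the deterministic rule hands him a specific piece $A$ contained in his truly desired region, with the true valuation constructed from the rule's own output $A$ so as to cover an arbitrary deterministic tie-breaking convention. The only difference is cosmetic: the paper takes the manipulator's true density to be the indicator of $A$ itself, so truthful reporting yields exactly half the misreport's value and the knife-edge tie you patch with the generic split point $t$ never arises.
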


Now we are ready to present CSD. In order to motivate CSD, we will give a randomized algorithm that is strategyproof and robust proportional in expectation. The algorithm is a variant of random dictatorship: each agent has uniform probability of being chosen as a dictator. However, if the whole cake is acceptable to each agent, then each time a dictator is chosen, he will take the whole cake. This approach is not helpful since we return to square one of having to divide the whole cake. We add an additional requirement which is helpful. We require that each time a dictator is chosen, the piece he takes has to be of maximum value $1/n$ length of the total size of the cake. We will call this algorithm Constrained Random Serial Dictatorship (CRSD). Formally speaking, CRSD draws a random permutation of the agents. The algorithm then makes the allocation to agents in the order that the lottery is drawn. Everytime when it is agent $i$'s turn to receive his allocation, CRSD looks at the remaining portion of the cake and allocates a maximum value $1/n$ length piece of the cake to agent $i$ (break ties arbitrarily). Notice that CRSD is strategyproof,  as in every draw of lottery, it is in the best interest of the agents to report their valuation function truthfully in order to obtain a piece that maximizes his valuation function out of the remaining pieces of cake. Later on we will see, through the proof of Proposition~\ref{prop: CSD is robust prop}, that CRSD is robust proportional in expectation.  \\

CSD is an algorithm that derandomizes CRSD by looking at its allocation for all $n!$ different permutations and aggregate them in a suitable manner. The algorithm is formally presented as Algorithm~\ref{CSD-algo}.\\


\begin{algorithm}[t]
	  \caption{CSD (Constrained Serial Dictatorship)---  proportional and unanimous algorithm for piecewise constant valuations}
	  \label{CSD-algo}
	\renewcommand{\algorithmicrequire}{\wordbox[l]{\textbf{Input}:}{\textbf{Output}:}} 
	 \renewcommand{\algorithmicensure}{\wordbox[l]{\textbf{Output}:}{\textbf{Output}:}}
	\begin{algorithmic}
		
		\REQUIRE  Cake-cutting problem with piecewise constant valuations.
		\ENSURE  A robust proportional allocation.
	\end{algorithmic}
	\algsetup{linenodelimiter=\,}
	  \begin{algorithmic}[1] 
		
	\FOR{each $\pi\in \Pi^N$}
	\STATE $C \longleftarrow[0,1]$ (intervals left)
	\FOR{$i=1$ to $n$}
	\STATE $X_{\pi(i)}^{\pi}\longleftarrow$ maximum preference cake piece of size $1/n$ from $C$
	\STATE $C\longleftarrow C - X_{\pi(i)}^{\pi}$.
	\STATE $i\longleftarrow i+1$.
	\ENDFOR
	\ENDFOR
	
	\STATE Construct a disjoint and exhaustive interval set $\mathcal{J'}$ induced by the discontinuities in agent valuations and the cake cuts in the $n!$ cake allocations. 
	\STATE $Y_i \longleftarrow$ empty allocation for each $i\in N$.
	\FOR{each $J_j = [a_j, \ b_j] \in \mathcal{J'}$}
	\FOR{each $i\in N$}
\STATE Let $p_{ij} = \frac{count(i,J_j)}{n!}$ where $count(i,J_j)$ is the number of permutations in which $i$ gets $J_j$.
\STATE Generate $A_{ij} \subseteq J_j$, which is of length $p_{ij}|J_j|$ according to some subroutine.
\STATE $Y_i\longleftarrow Y_i \cup A_{ij}$ 
	\ENDFOR
\ENDFOR

	  \RETURN $Y=(Y_1,\ldots, Y_n)$
	 \end{algorithmic}
	\label{algo:CSD}
	\end{algorithm}

	Although CSD does not necessarily require $n!$ cuts of the cake, it may take exponential time if the number of agents is not a constant. In Example~\ref{example:CSD}, we illustrate how CSD works.

	\begin{example}[Illustration of CSD]\label{example:CSD}
		We implement CSD on the cake cutting problem in Figure~\ref{figure:piecewise}.
		%
		%
For permutation $12$, agent $1$ first chooses the cake piece $\{[0,0.1],[0.6,1]\}$ and agent $2$ then takes the remaining piece $\{[0.1,0.6]\}$. For permutation $21$, agent $2$ first chooses the cake piece $\{[0.3,0.8]\}$ and agent $1$ then takes the remaining piece $\{[0,0.3],[0.8,1]\}$.
		
		The set of all relevant subintervals induced by the two permutations are $\{[0,0.1], [0.1,0.3],[0.3,0.6],[0.6,0.8],[0.8,1]\}$. When we we additionally consider the discontinuities in the players' valuations, the set of relevant subintervals is 
		$\mathcal{J'}=\{[0,0.1], [0.1,0.3],[0.3,0.5],[0.5,0.6], [0.6,0.8],[0.8,1].$
	
		\noindent Then \[Y_1=\{[0,0.1],\frac{1}{2}[0.1,0.3],\frac{1}{2}[0.6,0.8], [0.8,1]\}\] 
	and
	\[Y_2=\{\frac{1}{2}[0.1,0.3],[0.3,0.5],[0.5,0.6], \frac{1}{2}[0.6,0.8]\}. \]
			\end{example}

\begin{proposition}\label{prop:CSD-well-defined}
	For piecewise constant valuations, CSD is well-defined and returns a feasible cake allocation in which each agents gets a piece of size $1/n$. 
	\end{proposition}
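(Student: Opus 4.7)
The plan is to verify two things: first, that the inner loop of CSD can always execute, i.e., at every step there exists a piece of size $1/n$ in the remaining cake to allocate to the current dictator; second, that the aggregation in the outer part of the algorithm produces a consistent partition where each agent's total allocation is measurable, disjoint from the others, and of length exactly $1/n$.

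For the inner loop, I would argue by induction on $i$. Before agent $\pi(i)$ is served, the residual cake $C$ has Lebesgue measure $1-(i-1)/n \geq 1/n$, so a ``maximum preference cake piece'' of length $1/n$ can always be carved out (any union of subintervals of total length $1/n$ that maximizes value-density is admissible, and existence is immediate since the value density is piecewise constant with finitely many pieces). Hence, for each $\pi \in \Pi^N$, the sets $X_1^\pi, \ldots, X_n^\pi$ form a partition of $[0,1]$ into $n$ pieces, each of length exactly $1/n$.

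For the aggregation step, the key observation is the refinement property of $\mathcal{J}'$: since $\mathcal{J}'$ is constructed using the union of all cut points arising from the $n!$ permutations together with the discontinuities of the valuation functions, every $J_j \in \mathcal{J}'$ is either entirely contained in $X_i^\pi$ or disjoint from it, for every $i$ and every $\pi$. Consequently, for each fixed permutation $\pi$ every $J_j$ is allocated to exactly one agent, which gives $\sum_{i=1}^n \mathrm{count}(i,J_j) = n!$ and therefore $\sum_{i=1}^n p_{ij} = 1$. This means $|J_j|$ can be split into disjoint subsets $A_{1j}, \ldots, A_{nj}$ of lengths $p_{ij}|J_j|$ (for instance by taking $n$ consecutive subintervals of $J_j$, the scheme already indicated in the analogous step of CCEA). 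Since the $A_{ij}$ are disjoint within each $J_j$, and the $J_j$'s are themselves pairwise disjoint, the sets $Y_1,\ldots,Y_n$ are pairwise disjoint pieces of cake.

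Finally, a direct computation handles the length claim:
\[
|Y_i| \;=\; \sum_{j} p_{ij}|J_j| \;=\; \frac{1}{n!}\sum_{j} \mathrm{count}(i,J_j)\,|J_j| \;=\; \frac{1}{n!}\sum_{\pi\in\Pi^N}|X_i^\pi| \;=\; \frac{1}{n!}\cdot n!\cdot \frac{1}{n} \;=\; \frac{1}{n}.
\]
The only subtle point is the refinement argument in the third paragraph: one must be careful that $\mathrm{count}(i,J_j)$ is well-defined, i.e., that no $J_j$ is split across agents within a single permutation. I expect this to be the main thing to spell out carefully; once it is in place, the remainder is bookkeeping.
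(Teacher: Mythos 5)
Your proposal is correct and follows essentially the same route as the paper's proof: the refinement property of $\mathcal{J}'$ (each $J_j$ wholly inside or outside each $X_i^{\pi}$), the observation that the fractions $p_{ij}$ sum to one over agents for each interval, and the exchange-of-summation computation showing $\sum_j p_{ij}\,len(J_j)=1/n$. The only addition is your explicit check that a $1/n$-length maximum-preference piece always exists in the residual cake, which the paper leaves implicit.
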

	
	\begin{proposition}\label{prop: CSD is robust prop}
		For piecewise constant valuations, CSD  satisfies robust proportionality and symmetry.
		\end{proposition}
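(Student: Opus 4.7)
The plan is to reduce robust proportionality to a per-permutation lower bound on $V_i'(X_i^\pi)$ via an averaging identity, and then to establish symmetry via a pairwise swap of permutations. Ordinal equivalence forces $v_i'$ to be piecewise constant on precisely the same intervals as $v_i$: the condition $v_i(x)\ge v_i(y)>0 \iff v_i'(x)\ge v_i'(y)>0$ preserves both the zero set and the level sets on the positive support, so $v_i'$ is constant on every $J_j\in\mathcal{J'}$. Since, by the construction of $\mathcal{J'}$, each $J_j$ is either entirely inside $X_i^\pi$ or entirely outside for every $\pi$, the CSD allocation value satisfies
\[
V_i'(Y_i)\;=\;\sum_j p_{ij}\!\int_{J_j}\! v_i'\,dx\;=\;\frac{1}{n!}\sum_{\pi\in\Pi^N} V_i'(X_i^\pi),
\]
so robust proportionality reduces to showing $\tfrac{1}{n!}\sum_\pi V_i'(X_i^\pi)\ge V_i'([0,1])/n$.

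The heart of the proof is a per-permutation bound: if agent $i$ occupies position $k$ under $\pi$, then $V_i'(X_i^\pi)\ge \int_{(k-1)/n}^{k/n} g_i(t)\,dt$, where $g_i$ denotes the non-increasing rearrangement of $v_i'$ on $[0,1]$. Two observations make this go through. (a)~By ordinal equivalence, the density-ordering of points under $v_i'$ coincides with that under $v_i$ (equal densities stay equal; strict orderings are preserved on the positive support; zero sets coincide). Hence any $v_i$-optimal $1/n$-piece of the remaining cake $C_k:=[0,1]\setminus R$, where $R$ is the union of the first $k-1$ pieces and $|R|=(k-1)/n$, is simultaneously a $v_i'$-optimal $1/n$-piece; thus $V_i'(X_i^\pi)$ equals the $v_i'$-value of the top $1/n$-mass of $C_k$ by $v_i'$-density. (b)~Writing $g_R$ for the non-increasing rearrangement of $v_i'|_{C_k}$, I would invoke the decreasing-rearrangement shift inequality $g_R(t)\ge g_i(t+s)$ with $s=(k-1)/n$: at every threshold $v\ge 0$, $|\{v_i'\ge v\}\setminus R|\ge |\{v_i'\ge v\}|-s$, which translates directly into this pointwise bound. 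Integrating, $V_i'(X_i^\pi)=\int_0^{1/n} g_R\ge \int_{(k-1)/n}^{k/n} g_i$.

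Since exactly $(n-1)!$ permutations place $i$ at each position $k$, summing the per-permutation bound gives
\[
V_i'(Y_i)\;\ge\;\frac{(n-1)!}{n!}\sum_{k=1}^{n}\int_{(k-1)/n}^{k/n}\! g_i\;=\;\frac{1}{n}\int_0^1 g_i\;=\;\frac{V_i'([0,1])}{n},
\]
which proves robust proportionality. For symmetry, when $v_i=v_j$ I would pair every permutation $\pi$ with its swap $\pi^{ij}$, the involution exchanging the positions of $i$ and $j$. At each step of CSD under $\pi$ versus $\pi^{ij}$, the remaining cake is identical and the active agent has the same valuation function, so the chosen piece is the same; hence $X_i^\pi=X_j^{\pi^{ij}}$ for every $\pi$. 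Summing $V_i(X_i^\pi)=V_j(X_j^{\pi^{ij}})$ over the involution yields $V_i(Y_i)=V_j(Y_j)$.

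The main obstacle is the shift inequality in step~(b). The naive ``best-piece-beats-average'' estimate only yields $V_i'(X_i^\pi)\ge V_i'(C_k)/(n-k+1)$, and because $V_i'(C_k)$ can be arbitrarily depleted by earlier dictators whose own preferred regions happen to be valuable under $v_i'$, this weaker bound does not aggregate to the proportional share. The decreasing-rearrangement argument is exactly what guarantees that the agent at position $k$ still extracts at least the mass of the $k$-th quantile of $v_i'$, regardless of which $(k-1)/n$-mass chunk was previously removed.
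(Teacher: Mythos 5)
Your proof is correct, and it takes a more explicit route than the paper's. The paper argues per permutation by comparison with the counterfactual profile in which every other agent has agent $i$'s valuation: it asserts that differently-valued predecessors leave weakly better residual cake for $i$ than identically-valued ones, notes that in the identical profile anonymity gives $i$ exactly $V_i([0,1])/n$ on average over permutations, and then remarks that only the relative heights of the density matter, so the guarantee is robust; symmetry is obtained by citing symmetry of RSD. Your argument proves exactly the same per-position bound, but directly: the identical-profile benchmark at position $k$ is precisely the quantile $\int_{(k-1)/n}^{k/n} g_i$, and your layer-cake/shift inequality $g_R(t)\ge g_i(t+s)$ is a rigorous, quantitative version of the paper's ``predecessors leave weakly better cake'' assertion, valid for an arbitrary removed set of measure $(k-1)/n$. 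You also make explicit two points the paper leaves implicit: the aggregation identity $V_i'(Y_i)=\frac{1}{n!}\sum_\pi V_i'(X_i^\pi)$ (which relies on $v_i'$ being constant on each interval of $\mathcal{J'}$, as you verify from ordinal equivalence) and the fact that a $v_i$-optimal $1/n$-piece is simultaneously $v_i'$-optimal, which is what turns the bound into robust proportionality. For symmetry, your involution $\pi\mapsto\pi^{ij}$ gives a self-contained proof where the paper just cites RSD; note that both your swap argument and the paper's appeal to RSD implicitly require the ``arbitrary'' tie-breaking in CSD to depend only on the active agent's valuation and the remaining cake (otherwise $X_i^\pi=X_j^{\pi^{ij}}$ need not hold), a caveat worth stating but at the same level of rigor as the paper itself. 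In short: same underlying inequality, but your rearrangement-based derivation is more self-contained and makes the robustness and averaging steps explicit, at the cost of being longer than the paper's comparison-to-identical-profile sketch.
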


%
%


 Notice that unlike CRSD, CSD interprets the probability of allocating each interval to an agent as allocation a fractional portion of the interval to that agent.  Unless the actual way of allocating the fractions is specified, one cannot discuss the notion of strategyproofness for CSD because a deviating agent is unable to properly evaluate his allocation against his true valuation function in the reported profile. Contrary to intuition, CSD may or may not be strategproof depending on how the fractional parts of each interval are allocated. In fact, the following remark illustrates this issue.

\begin{remark}\label{remark:csd-not-sp}
CSD is not strategyproof if the fraction of each interval of $\mathcal{J'}$ is allocated deterministically (please see the appendix for the proof). 
\end{remark}

In light of difficulty, we will implement a method (see Algorithm~\ref{CSD: subroutine}) that randomly allocating the fractions of intervals to agents. With this implementation, CSD is strategyproof in expectation.

\begin{algorithm}[t]
	  \caption{A subroutine that converts fractional allocation into subintervals via randomization}
	  \begin{algorithmic}[1] 
\STATE Generate $U_j \sim unif[a_j, \ b_j]$.
\STATE For $a_j \leq x \leq 2b_j - a_j$, let $\mod(x) = x$ if $a_j \leq x \leq b_j$ and $x - (b_j-a_j)$ if $x > b_j$.  Let  
\[A_{ij} = [\mod(U_j + \sum_{k = 1}^{i-1}p_{kj}(b_j-a_j)), \mod(U_j + \sum_{k = 1}^{i}p_{kj}(b_j-a_j))]\]
if $\mod(U_j + \sum_{k = 1}^{i-1}p_{kj}(b_j-a_j)) \leq \mod(U + \sum_{k = 1}^{i}p_{kj}(b_j-a_j))$ and
\[A_{ij} = [a_j, \mod(U_j + \sum_{k = 1}^{i}p_{kj}(b_j-a_j))] \cup [\mod(U_j + \sum_{k = 1}^{i-1}p_{nj}(b_j-a_j)), b_j]\]
otherwise.
\end{algorithmic}
\label{CSD: subroutine}
\end{algorithm}
We will refer this randomized implemention of CSD as \emph{Constrained Mixed Serial Dictatorship}, or \emph{CMSD} for short.


\begin{proposition}\label{CSD: SP}
CSD implemented with the aforementioned random allocation rule is strategyproof in expectation.
\end{proposition}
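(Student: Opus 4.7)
The plan is to reduce strategyproofness-in-expectation of the randomized CSD to the more transparent strategyproofness of CRSD, by way of the identity
\[
\mathbb{E}\bigl[V_i(Y_i)\bigr] \;=\; \frac{1}{n!}\sum_{\pi \in \Pi^N} V_i(X_i^{\pi}),
\]
valid for \emph{any} report of agent $i$, where $V_i$ denotes $i$'s true valuation and $X_i^{\pi}$ is the bundle $i$ receives in the permutation-$\pi$ run of the inner serial-dictatorship loop.

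First I would establish this identity by exploiting translation-invariance of the subroutine. On each interval $J_j=[a_j,b_j]$, the offset $U_j$ is uniform on $J_j$ and $A_{ij}$ is a cyclic arc of length $p_{ij}|J_j|$. Consequently each point $x\in J_j$ lies in $A_{ij}$ with probability exactly $p_{ij}$, so Fubini gives $\mathbb{E}[V_i(Y_i)] = \sum_j p_{ij}\,V_i(J_j)$. Since $p_{ij} = \tfrac{1}{n!}\,|\{\pi : i\text{ receives }J_j\text{ under }\pi\}|$ and $\mathcal{J}'$ refines every $X_i^\pi$ by construction, additivity of $V_i$ and an interchange of summation yield the claimed identity. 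A crucial feature of this step is that it does \emph{not} require $v_i$ to be constant on each $J_j$; if $i$ misreports, his true discontinuities need not refine $\mathcal{J}'$, but the pointwise translation-invariance argument sidesteps this entirely.

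Next I would verify that CRSD is strategyproof permutation-by-permutation. Fix the reports of all agents other than $i$ and any permutation $\pi$ with $\pi(k)=i$. The pieces chosen by the agents in positions $1,\ldots,k-1$ depend only on their own reports and on $\pi$, hence the residual cake $C^{\pi}$ available when $i$ picks is independent of $i$'s report. The algorithm then allocates to $i$ a piece $P\subseteq C^{\pi}$ of length $1/n$ maximizing $i$'s \emph{reported} value. Reporting truthfully therefore produces a piece in $\arg\max_{P\subseteq C^{\pi},\,|P|=1/n} V_i(P)$, while any misreport returns a piece whose true value is weakly smaller. Averaging over uniformly random $\pi$ preserves the inequality, and combined with the first-step identity this gives strategyproofness in expectation of the randomized CSD.

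The main obstacle is precisely the bookkeeping in the first step: a misreport changes the partition $\mathcal{J}'$ and the values $p_{ij}$ used by the algorithm, yet the expected utility must be evaluated against the true density $v_i$. The translation-invariance calculation handles this cleanly because it delivers the inclusion probability $p_{ij}$ pointwise on $J_j$, with no reliance on constancy of $v_i$ across $J_j$. Once this is in hand, the proof collapses to the elementary sequential-picker argument of the second paragraph, which is strategyproof for the same reason that single-agent cake selection is.
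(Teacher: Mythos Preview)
Your proposal is correct and follows essentially the same approach as the paper: both reduce to the strategyproofness of CRSD permutation-by-permutation and establish the identity $\mathbb{E}[V_i(A_{ij})]=p_{ij}V_i(J_j)$ via a Fubini-type computation that works for arbitrary integrable $v_i$ (the paper phrases this through a periodic extension of $v_i$, whereas you argue via the pointwise inclusion probability $\mathbb{P}[x\in A_{ij}]=p_{ij}$, but these are the same calculation). You are also explicit about the key subtlety---that under a misreport the partition $\mathcal{J}'$ need not refine the true discontinuities of $v_i$---which the paper handles identically by proving the lemma for all valuation functions.
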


Although CSD is strategyproof in expectation, it fails to satisfy truthfulness based on group-based deviations no matter how the fractional parts of each interval are allocated.



\begin{proposition} \label{CSD: not GSP for PWC}
	For cake cutting with piecewise constant valuations, CSD is not weakly group-strategyproof even for two agents.
	\end{proposition}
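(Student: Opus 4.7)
The plan is to exhibit a two-agent, piecewise-constant profile together with a coordinated misreport that strictly improves both agents' true utilities under CSD, refuting even the weak form of group-strategyproofness. Take the mirror-symmetric profile
\[
v_1(x) = 3\cdot\mathbf{1}_{[0,0.3]}(x) + 0.5\cdot\mathbf{1}_{[0.3,0.7]}(x) + 1\cdot\mathbf{1}_{[0.7,1]}(x),
\]
\[
v_2(x) = 1\cdot\mathbf{1}_{[0,0.3]}(x) + 0.5\cdot\mathbf{1}_{[0.3,0.7]}(x) + 3\cdot\mathbf{1}_{[0.7,1]}(x),
\]
so each agent strongly prefers her own end of the cake and only weakly values the middle.

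First, run CSD on the truthful profile. In permutation $(1,2)$, agent~$1$'s best length-$\tfrac{1}{2}$ piece contains all of her top-density region $[0,0.3]$ and $0.2$ more units drawn from $[0.7,1]$ (the density $1$ strictly dominates the middle density $0.5$); under the leftmost rule she takes $[0,0.3]\cup[0.7,0.9]$, leaving agent~$2$ with $[0.3,0.7]\cup[0.9,1]$. Permutation $(2,1)$ is the mirror image, assigning $[0,0.2]\cup[0.7,1]$ to agent~$2$ and $[0.2,0.7]$ to agent~$1$. On the refined partition $\mathcal{J}' = \{[0,0.2],[0.2,0.3],[0.3,0.7],[0.7,0.9],[0.9,1]\}$, agent~$1$'s fractional ownership vector is $(\tfrac12,1,\tfrac12,\tfrac12,0)$, and integrating against $v_1$ yields true utility $0.3+0.3+0.1+0.1+0 = 0.8$; by symmetry agent~$2$ also gets $0.8$.

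Second, consider the joint misreport $v_1' = \mathbf{1}_{[0,0.3]}$ and $v_2' = \mathbf{1}_{[0.7,1]}$. In permutation $(1,2)$ the leftmost connected length-$\tfrac{1}{2}$ piece containing agent~$1$'s reported valued region is $[0,0.5]$; permutation $(2,1)$ analogously assigns $[0.5,1]$ to agent~$2$. Both permutations yield the same allocation, so averaging is trivial, and each agent obtains true utility $0.3\cdot 3 + 0.2\cdot 0.5 = 1 > 0.8$. Hence both members of the coalition $\{1,2\}$ strictly benefit relative to truthful reporting, which refutes weak group-strategyproofness.

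The main delicate point is the bookkeeping of $\mathcal{J}'$ in the truthful run: one must identify, under a fixed tie-break, which of the two agents owns each subinterval in each permutation, and then sum the $p_{ij}$'s weighted by length and height. A direct check shows that swapping the tie-break on the dictator's extra $0.2$ subinterval (e.g.\ rightmost $[0.8,1]$ rather than leftmost $[0.7,0.9]$) merely permutes ownership between symmetric intervals and leaves each agent's total at $0.8$, so the argument is robust to the tie-breaking rule. A secondary concern is free disposal: under the misreport the middle $[0.3,0.7]$ is unvalued by both agents and may be discarded before CSD runs, but even then each agent receives her favourite end and obtains true utility $0.9$, still strictly above $0.8$, so the weak group-strategyproofness violation persists.
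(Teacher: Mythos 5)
Your proposal uses the same strategy as the paper's proof --- an explicit two-agent profile together with a joint misreport under which both coalition members strictly gain --- and, under the paper's standing free-disposal convention, your example does work: once the reportedly worthless middle $[0.3,0.7]$ is discarded, each permutation deterministically gives agent~$1$ the piece $[0,0.3]$ and agent~$2$ the piece $[0.7,1]$, so both agents move from true utility $0.8$ to $0.9$, with no ties to worry about. The weak point is your primary (no-disposal) computation, which silently changes tie-breaking rules between the two runs: in the truthful run you resolve the dictator's surplus by the leftmost filler (agent~$1$ takes $[0.7,0.9]$), but in the misreported run you hand agent~$2$ the connected piece $[0.5,1]$ in permutation $(2,1)$, whereas the leftmost-filler rule would give her $[0,0.2]\cup[0.7,1]$, leaving agent~$1$ with $[0.2,0.7]$ and an averaged true utility of $\frac{1}{2}(1.0)+\frac{1}{2}(0.5)=0.75<0.8$; under that equally legitimate tie-break the deviation hurts agent~$1$ and the primary calculation collapses. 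Your robustness check only covers ties in the truthful run, so the argument really does rest on the free-disposal fallback (or on fixing a tie-break that selects $[0.5,1]$). The paper's example sidesteps both issues by making the misreported densities strictly positive everywhere (true values $4,3,2,1$ and $3,4,1,2$ on the quarters, misreported as $4,2,3,1$ and $2,4,1,3$), so each agent's optimal half is unique in both profiles, no cake is ever discarded, and the comparison $1.25$ versus $1.5$ is unambiguous; if you keep your example, state explicitly that the conclusion is drawn in the free-disposal model and drop (or repair) the no-disposal computation.
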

%
%
%
%
%

Moreover, for cake cutting with piecewise uniform valuations, CSD is not weakly group-strategyproof for at least seven agents. The statement above follows from the fact that RSD is not weakly group-strategyproof for dichotomous preferences when there are at least seven agents~\citep{BoMo01b,BoMo04a}.

	Even though CSD satisfies both proportionality and symmetry, it does not satisfy the stronger notion of envy-freeness.

	\begin{proposition}\label{prop: CSD is not envy-free}
		CSD is not necessarily envy-free for three agents even for piecewise uniform valuations.
	\end{proposition}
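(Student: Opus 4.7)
The plan is to exhibit an explicit three-agent counter-example with piecewise uniform valuations on which CSD (under leftmost single-interval tie-breaking) produces an envious aggregate allocation, and verify the claim by direct computation.

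The profile I would use is defined by $v_i(x) = 1$ if $x \in S_i$ and $v_i(x) = 0$ otherwise, where $S_1 = [0, 1/3] \cup [2/3, 1]$, $S_2 = [0, 2/3]$, and $S_3 = [1/3, 1]$. The advantage of this choice is that the union of the discontinuities of $v_1, v_2, v_3$ splits the cake into exactly three equal blocks $\{[0, 1/3],\ [1/3, 2/3],\ [2/3, 1]\}$, and at every step of every permutation the current dictator can realize the maximum attainable value $1/3$ by taking one of these blocks as a single interval. Hence no new cut points are introduced in any of the $3! = 6$ runs of constrained serial dictatorship, $\mathcal{J'}$ equals this three-block partition, and the envy computation reduces to a finite tabulation.

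I would enumerate the six permutations under leftmost tie-breaking and record which agent receives each block. A short case analysis shows that the middle block $[1/3, 2/3]$ is awarded to agent $3$ in five of the six orderings---only in $(1,2,3)$ does agent $2$ take it, after agent $1$ has first claimed $[0, 1/3]$. The left block $[0, 1/3]$ goes to agents $1$ and $2$ three times each, and the right block $[2/3, 1]$ is split across all three agents. Aggregating gives per-block weights $(p_{1,J},\ p_{2,J},\ p_{3,J})$ equal to $(1/2, 1/2, 0)$ on $[0, 1/3]$, $(0, 1/6, 5/6)$ on $[1/3, 2/3]$, and $(1/2, 1/3, 1/6)$ on $[2/3, 1]$; these are consistent in that each triple sums to $1$ and each agent's total length equals $1/3$.

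Because $v_2$ is constant on every block of $\mathcal{J'}$, the envy test depends only on these weights and not on the randomized subroutine that materializes the $A_{i,J}$. Agent $2$'s own value is $V_2(Y_2) = (1/2 + 1/6)\cdot 1/3 = 4/18$, while $V_2(Y_3) = (5/6)\cdot 1/3 = 5/18$, so agent $2$ strictly envies agent $3$. The main obstacle is finding such a profile: many natural three-agent instances (e.g.\ any two agents sharing identical valuations, or three preferred regions that are rotationally symmetric on the $\{0, 1/3, 2/3, 1\}$ grid) collapse to the perfectly symmetric aggregate $p_{i,J} = 1/3$ for every $i$ and $J$, which is trivially envy-free. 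The example above breaks this symmetry by arranging $S_3$ to properly contain the contested region $S_2 \cap [1/3, 1] = [1/3, 2/3]$, so that agent $3$ systematically out-bids agent $2$ for the middle block whenever given the chance.
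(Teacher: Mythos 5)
Your proposal is correct and is essentially the paper's own proof: the paper uses the same counterexample (up to swapping the labels of agents 1 and 2 and scaling the density by $1.5$), the same leftmost tie-breaking implementation, the same enumeration of the six permutations yielding per-block shares $(1/2,1/2,0)$, $(0,1/6,5/6)$, $(1/2,1/3,1/6)$, and the same envy conclusion ($4/18$ versus $5/18$ for the agent who desires the left and middle blocks against the agent holding most of the middle block). No differences worth noting.
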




Another drawback of CSD is that it is not Pareto optimal for piecewise constant valuations. The statement follows from the fact that RSD is not SD-efficient~\citep{BoMo01a}. However for the case of two agents, it is robust envy-free and polynomial-time.

\begin{proposition}
		For two agents and piecewise constant valuations, CSD is robust envy-free, and polynomial-time but not Pareto optimal. 
	\end{proposition}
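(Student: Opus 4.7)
The plan is to establish the three claims separately. For robust envy-freeness, I would invoke Proposition~\ref{prop: CSD is robust prop}, which already shows CSD is robust proportional on piecewise constant valuations, and then appeal to the remark in Section~2.5 that for two agents robust proportionality implies robust envy-freeness. For polynomial time, with $n=2$ the outer loop of Algorithm~\ref{CSD-algo} executes only $2!=2$ iterations; each dictator step amounts to selecting a $1/2$-length piece of maximum value under a piecewise constant density, which reduces to sorting the $O(m)$ constant-density subintervals by density and greedily taking them (with at most one fractional cut) and is polynomial; the aggregation step constructs at most $O(m)$ intervals of $\mathcal{J}'$ and does constant work per interval per agent. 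Hence CSD runs in polynomial time on two-agent instances.

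The substantive step is to exhibit a two-agent piecewise constant instance on which CSD is not Pareto optimal. Consider the cake $[0,1]$ partitioned into $A:=[0,1/2]$, $B:=[1/2,3/4]$, $C:=[3/4,1]$, and let $v_1=1$ on $A$, $v_1=3$ on $B$, $v_1=0$ on $C$, while $v_2=0$ on $A$, $v_2=3$ on $B$, $v_2=1$ on $C$. In permutation $(1,2)$ agent 1's optimal maximum-value $1/2$-length piece is all of $B$ together with some $1/4$-length subinterval of $A$ (the density ranking forces this choice up to the location of the $A$-slice), giving agent 1 utility $1$ and leaving agent 2 with utility $1/4$ from $C$. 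In permutation $(2,1)$ agent 2 optimally takes $B\cup C=[1/2,1]$ with utility $1$, leaving agent 1 with $A$ of utility $1/2$. Since utilities are linear in allocation lengths, aggregating the two permutations with equal weight yields utility $(1+1/2)/2=3/4$ for agent 1 and $(1/4+1)/2=5/8$ for agent 2.

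I would then exhibit the Pareto-dominating alternative $Y_1'=[0,5/8]$ and $Y_2'=[5/8,1]$. Under this allocation agent 1 obtains $\tfrac{1}{2}+3\cdot\tfrac{1}{8}=\tfrac{7}{8}>\tfrac{3}{4}$ while agent 2 obtains $3\cdot\tfrac{1}{8}+1\cdot\tfrac{1}{4}=\tfrac{5}{8}$, i.e.\ agent 1 is strictly better off and agent 2 is equally well off. This contradicts Pareto optimality of the CSD allocation.

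The main obstacle is making the counterexample robust to CSD's tie-breaking. This is why the densities are chosen with a strict ordering $3>1>0$ on each agent's support: the only freedom left to the algorithm is the location of agent 1's $1/4$-length slice within $A$ in permutation $(1,2)$, which does not affect either agent's utility by linearity of the piecewise constant valuations. Hence the utilities $(3/4,5/8)$ are the same for every realization of CSD on this instance, and the Pareto improvement above applies uniformly.
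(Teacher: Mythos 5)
Your proposal is correct, and its first two parts (robust envy-freeness via Proposition~\ref{prop: CSD is robust prop} plus the two-agent implication from robust proportionality to robust envy-freeness, and polynomial time because only $2!=2$ permutations with a greedy selection over the $O(m)$ constant-density subintervals are needed) follow exactly the route the paper intends. Where you genuinely diverge is the non-Pareto-optimality part: the paper disposes of it with a one-line appeal to the fact that RSD is not SD-efficient (citing Bogomolnaia--Moulin) and gives no explicit proof, whereas you construct a concrete two-agent instance and a dominating allocation. Your computation checks out: with $v_1=(1,3,0)$ and $v_2=(0,3,1)$ on $[0,\tfrac12]$, $[\tfrac12,\tfrac34]$, $[\tfrac34,1]$, every realization of CSD yields utilities $(\tfrac34,\tfrac58)$, while $Y_1'=[0,\tfrac58]$, $Y_2'=[\tfrac58,1]$ yields $(\tfrac78,\tfrac58)$, a Pareto improvement; and your observation that the strict density ordering pins down both dictator choices up to the location of agent~1's slice in $[0,\tfrac12]$, which is utility-irrelevant, makes the example robust to tie-breaking and to the randomized implementation. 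Arguably your route buys something the paper's citation does not: the standard SD-inefficiency examples for RSD involve more than two agents, and Pareto optimality here is with respect to cardinal piecewise constant valuations rather than the SD relation, so an explicit two-agent counterexample is the cleaner and more directly applicable justification for this proposition.
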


\begin{remark}
	For piecewise uniform valuations, CSD can be modified to be made Pareto optimal. The main change is that for each permutation CSD, the resultant outcome needs to be made Pareto optimal. This can be done by using the idea in \citep{ABH11b}.
\end{remark}

	\section{Extensions}\label{sec:extensions}

	In this section, we show how some of our positive results concerning CCEA extend to more general settings where agents may have variable claims or they may have initial endowments (please see Algorithm~\ref{algo:ccea-extensions}).
	
	\begin{algorithm}[h]
	  \caption{ Controlled Cake Eating Algorithm (CCEA) to compute a robust envy-free allocation for piecewise constant value functions. CCEA works for either private endowments or variable claims but is not defined for both.}
	  \label{RAR-algo}
	\renewcommand{\algorithmicrequire}{\wordbox[l]{\textbf{Input}:}{\textbf{Output}:}} 
	 \renewcommand{\algorithmicensure}{\wordbox[l]{\textbf{Output}:}{\textbf{Output}:}}
	\begin{algorithmic}
		
		\REQUIRE  Piecewise constant value functions with priority claims $(c_1,\ldots, c_n)$ or private endowments $(\omega_1,\ldots, \omega_n)$ 
		\ENSURE  A robust envy-free and individually rational allocation.
	\end{algorithmic}
	\algsetup{linenodelimiter=\,}
	  \begin{algorithmic}[1] 
			
	\STATE  Let $N'=N\cup\{n+1\}$ where $n+1$ is the agent owning all public cake but with no interest in any of the cake.  Join the segments $\omega(1), \ldots, \omega(n), \omega(n+1)$ to assemble a cake. 
	\STATE Divide the regions according to agent value functions. Let $\mathcal{J}=\{J_1, \ldots, J_m\}$ be the set of subintervals of $[0,1]$ formed by consecutive marks.
	\STATE Consider $(N',H,\succsim ,e, rate(\cdot), size(\cdot))$ where

	\begin{itemize}	
	\item $H=\{h_1,\ldots, h_m\}$ where $h_i=J_i$ for all $i\in \{1,\ldots, m\}$ 
	\item  $\pref$ is defined as follows: $h\succsim_i h'$ if and only if $v_i(x)\geq v_i(y)$ for $x\in h$ and $y\in h'$; \\
	$\pref_{n+1}$ is defined in the way that each house in the market is unacceptable to agent $n+1$.
	\item $size(h_j)=1$ for $h_j\in \arg \max_{j\in \{1,\ldots, m\}}(len(J_j))$;\\

	 $size(h_j)=\frac{len(J_j)}{(\max_{j\in \{1,\ldots, m\}}(len(J_j)))}$ \\for all $h_j\notin \arg \max_{j\in \{1,\ldots, m\}}(len(J_j))$;
	\item $rate(i)=claim(i)$; 
	\item $e_{ih_j}=size(h_j)$ if $J_j\in \omega(i)$ and zero otherwise.
	\end{itemize}

	\STATE $p\longleftarrow CC(N',H,\succsim ,e, rate(\cdot), size(\cdot))$ 
	\STATE For interval $J_j$, agent $i$ is allocated a subinterval of $J_j$, denoted by $J^{i}_j$, which is of length $p_{ih_j}/size(h_j)\times len(J_j)$. \\
	$X_i\longleftarrow \bigcup_{j=1}^m p_{ih_j}/size(h_j)J_j$ for all $i\in N$\\
	
	  \RETURN $X=(X_1,\ldots, X_n)$
	 \end{algorithmic}
	\label{algo:ccea-extensions}
	\end{algorithm}

	\subsection{Variable claims}
	
We consider the concept of \emph{variable claims} in cake cutting.
	It is generally assumed in the literature that each agent has equal claim to the cake. In this case we modify the definition of proportionality and envy-freeness accordingly. Reasoning about claims has a long tradition in the fair division literature. However this strand of research assumes that agents have uniform and identical preferences over the whole of the divisible object and their only concern is the proportion of the object they get~\citep[see \eg][]{Thom03a,Youn94a}. 
\citet{BrTa96a} refer to variable claims as entitlements and touch upon entitlements in cake cutting at a few places in their book. They present a general idea of handling entitlements by cloning agents~\citep[Page 152, ][]{BrTa96a} but doing so can lead to an exponential blowup of time and space.   \citet{BJK08a} also considered entitlements in pie-cutting but presented an impossibility result and also a positive result for two agents. On the other hand, one of our algorithms handles variable claims for arbitrary number of agents and does not require cloning.
	
          An allocation satisfies  \emph{proportionality for variable claims} if $V_i(X_i) \geq c_i/\sum_{j \in N} c_j V_i([0,1])$ for all $i \in N$.
	An allocation satisfies \emph{robust proportionality for variable claims} if for all $i \in N$ and for all $v_i'\in \hat{V_i}$, $\int_{X_i}v_i'(x)dx\geq (c_i/ \sum_{j \in N} c_j)\int_{[0,1]}v_i'(x)dx$.

	An allocation satisfies \emph{envy-freeness for variable claims}, if  $V_i(X_i)\geq (c_i/c_j)V_i(X_j)$ for all $i,j\in N$. 
	
	An allocation satisfies \emph{robust envy-freeness for variable claims} if for all $i, j \in N$ and for all $v_i'\in \hat{V_i}$,
	$\int_{X_i}v_i'(x)dx\geq (c_i/c_j)\int_{X_j}v_i'(x)dx$.

		\begin{proposition}\label{prop: CCEA envy for claims}
	For cake cutting with variable claims and piecewise constant valuations, CCEA satisfies robust envy-freeness for variable claims.
	\end{proposition}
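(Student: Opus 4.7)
The plan is to reduce robust envy-freeness for variable claims to the weighted SD-envy-freeness of the Controlled Consuming (CC) subroutine with variable eating rates that extended CCEA invokes. First, I recall the construction in Algorithm~\ref{algo:ccea-extensions}: each subinterval $J_k$ becomes a ``house'' $h_k$ of size $size(h_k)=len(J_k)/M$ where $M:=\max_\ell len(J_\ell)$; agent $i$'s eating rate is set to $c_i$; and the assignment $p$ returned by CC is converted into a cake allocation via $|J_k^i|=(p_{ih_k}/size(h_k))\,len(J_k)$.

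Next, I appeal to the fact from \citep{KaSe06a,AtSe11a} that CC with variable eating rates returns an assignment in which $(1/c_i)p_i$ stochastically dominates $(1/c_j)p_j$ from agent $i$'s perspective. Concretely, for any non-negative weight vector $u$ on the houses that is monotone non-decreasing with respect to agent $i$'s ordinal preference $\succsim_i$,
$$\sum_{k} u_{h_k}\,p_{ih_k} \;\geq\; \frac{c_i}{c_j}\sum_{k} u_{h_k}\,p_{jh_k}.$$

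The decisive observation --- and the step I expect to require the most care to state cleanly --- is that the normalization in Algorithm~\ref{algo:ccea-extensions} yields $len(J_k)/size(h_k)=M$ \emph{uniformly in $k$}, independent of whether $h_k$ is a maximum-length house or not. Hence for any $v_i'\in \hat{V_i}$ the weights $u_{h_k}:=v_i'|_{J_k}\cdot M$ are monotone in $v_i'|_{J_k}$, and by ordinal equivalence this matches the monotonicity in $v_i|_{J_k}$ that defines $\succsim_i$ on houses. These $u_{h_k}$ are therefore admissible weights for the SD inequality above.

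Finally, a direct computation gives
$$\int_{X_i} v_i'(x)\,dx \;=\; \sum_k v_i'|_{J_k}\,|J_k^i| \;=\; \sum_k v_i'|_{J_k}\cdot \frac{p_{ih_k}}{size(h_k)}\,len(J_k) \;=\; \sum_k u_{h_k}\,p_{ih_k},$$
and the analogous identity with $j$ in place of $i$ (using the \emph{same} weights $u_{h_k}$ derived from $v_i'$). Substituting both sides into the SD inequality immediately yields $\int_{X_i} v_i'(x)\,dx \geq (c_i/c_j)\int_{X_j} v_i'(x)\,dx$, which is exactly robust envy-freeness for variable claims.
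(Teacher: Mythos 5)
Your proof is correct and follows essentially the same route as the paper's own (sketch) proof: both reduce the claim to the weighted justified/SD envy-freeness guarantee of CC (equivalently EPS) under variable eating rates, i.e.\ $u_i(p_i)\geq (c_i/c_j)\,u_i(p_j)$ for all utilities consistent with $i$'s ordinal preferences over houses, and then transfer this back to the cake using ordinal equivalence of $v_i'$ with $v_i$. Your explicit observation that $len(J_k)/size(h_k)$ is the same constant for every $k$, together with the conversion identity $\int_{X_i}v_i'=\sum_k u_{h_k}p_{ih_k}$, simply makes precise the translation step that the paper leaves implicit.
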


For uniform valuations, group-strategyproofness of CCEA even holds when agents have variable claims.

	\begin{proposition}\label{prop:sp-with-speed}
	For cake cutting with piecewise uniform value functions and variable claims, CCEA is group-strategyproof. 
	\end{proposition}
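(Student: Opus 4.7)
The plan is to adapt the group-strategyproofness argument for equal claims (Proposition~\ref{prop:gsp1}) to the setting where each agent $i$ has an eating rate $\operatorname{rate}(i)=c_i$ in the underlying CC algorithm. I view CCEA through the simultaneous-eating interpretation: at each time $t$ every agent $i$ consumes mass at rate $c_i$ from her most preferred still-available subintervals. Under piecewise uniform valuations this is effectively dichotomous --- her utility is a positive constant times the total length of desired cake she obtains --- so group-strategyproofness reduces to showing that no coalition $S\subseteq N$ can jointly misreport and cause every member to receive weakly more truly-desired length, with at least one strict gain.

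First, I would run the truthful profile $(v_S,v_{-S})$ and the manipulated profile $(v'_S,v_{-S})$ as two synchronized eating processes $R$ and $R'$, and let $t^*$ be the earliest time at which some agent $i\in S$ begins consuming, in $R'$, a subinterval different from the one she is consuming in $R$. Next, I would argue that up to time $t^*$ every $i\in S$ has accrued the same amount of truly-desired length in both runs: any subinterval eaten only in $R'$ that is not truly desired by $i$ contributes nothing to her utility, while any truly-desired subinterval eaten in $R$ but not $R'$ is a deficit that the remainder of the process could only compensate by taking length from some other member of $S$. Finally, I would invoke a conservation argument: the aggregate rate at which coalition $S$ can consume truly-desired cake at any instant is at most the rate at which that cake is left free by the truthful agents outside $S$ --- whose behavior on their own desired regions depends only on their fixed reports and on which of those regions are still available --- so the total truly-desired length that $S$ collects in $R'$ is at most that in $R$. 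Combined with weak improvement for every member of $S$ and strict improvement for at least one, this yields a contradiction.

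The main obstacle is making the conservation inequality precise when agents in $S$ misreport their desired sets. A misreport by $i\in S$ can enlarge or shrink her claimed desired set and thereby alter the availability schedule faced by the outsiders (freeing cake they never wanted, or making them switch to a different truly-desired subinterval sooner or later than in $R$). Formalizing that these perturbations can only weakly hurt the coalition collectively requires an interval-by-interval accounting of how much desired mass is contributed by members of $S$ versus members of $N\setminus S$ at each instant, in the spirit of the dichotomous-preferences argument of \citet{BoMo04a} and the variable eating-rate extensions in \citet{AtSe11a,KaSe06a}. Once this bookkeeping is in place, the remainder of the proof mirrors that of Proposition~\ref{prop:gsp1}, with the rates $1$ replaced throughout by the claims $c_i$.
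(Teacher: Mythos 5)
There is a genuine gap: the crux of your argument --- the ``conservation'' inequality asserting that the coalition $S$ cannot collect more truly-desired length in the manipulated run $R'$ than in the truthful run $R$ --- is stated but never established, and you yourself flag it as the main obstacle. That inequality is not a routine bookkeeping step: in the equal-claims case the paper's proof of Proposition~\ref{prop:gsp1} needs an induction over bottleneck sets with several lemmas precisely because misreports by coalition members change which bottleneck sets form and hence reshuffle the outsiders' consumption schedule; your instantaneous rate comparison (``$S$ can consume truly-desired cake at most as fast as the outsiders leave it free'') is not a valid invariant as stated, since a misreport can delay or accelerate when outsiders exhaust regions the coalition wants, which is exactly the perturbation you acknowledge but do not control. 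So, if completed, your route would amount to re-proving the group-strategyproofness argument from scratch in the harder variable-rate setting rather than reusing it, and as written the proof does not go through.

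The paper's proof avoids all of this with a short reduction: assume without loss of generality that the claims $c_i$ are integral (scale by a common denominator; only relative claims matter to CCEA), replace agent $i$ by $c_i$ clones with the same piecewise uniform valuation and equal claims, and observe that aggregating the clones' CCEA allocation is utility-equivalent to agent $i$'s allocation in the variable-claims instance. Any coalition that profitably misreports in the variable-claims instance then induces a profitably misreporting coalition of clones in the equal-claims instance, contradicting Proposition~\ref{prop:gsp1}. If you want to salvage your direct approach, you would need to prove the coalition-level inequality by an induction over bottleneck sets with variable weights, in the spirit of \citet{BoMo04a}, i.e.\ essentially redo the proof of Proposition~\ref{prop:gsp1} with $|S'|$ replaced by $\sum_{i\in S'}c_i$ in the averages; the cloning reduction gets the same conclusion with far less work.
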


	On the other hand, it can be shown that the following natural modification of CSD does not satisfy proportionality in the presence of variable claims: each agent $i$ gets $c_i/(\sum_{j\in n}c_j)$ length of the most preferred interval for each permutation.

	We note that MEA can be easily adapted to cater for variable claims. Each agent's budget is proportional to his claim.
	
				\begin{proposition}\label{th:maintheorem3b}
			For cake cutting with with piecewise constant valuations and variable claims, there exists an algorithm (MEA)  that is polynomial-time, Pareto optimal, envy-free, and proportional.
	\end{proposition}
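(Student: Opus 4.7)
The plan is to adapt the convex program of MEA by weighting the log-utility of each agent by their claim, which corresponds to running a Fisher market where agent $j$'s budget is $c_j$ rather than $1$. Specifically, I would solve
\begin{align*}
\min \quad & -\sum_{j=1}^n c_j \log(u_j)\\
\text{s.t.} \quad & u_j = \sum_{i=1}^k v_{ij} x_{ij} \quad \forall j,\\
& \sum_{j=1}^n x_{ij} \le l_i \quad \forall i,\\
& x_{ij} \ge 0 \quad \forall i,j.
\end{align*}
This is still an instance of the Eisenberg--Gale convex program for a Fisher market with linear utilities, just with non-uniform budgets $c_j$. By the result of \citet{DPSV08a} (which holds for arbitrary positive budgets), an optimal primal-dual solution $(x^\star, p^\star)$ — where $p^\star$ are the Lagrange multipliers on the supply constraints, interpreted as per-unit prices of the intervals $J_i$ — can be computed in strongly polynomial time. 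The cake allocation $Y_j = \bigcup_i J_i^j$ is then constructed exactly as in the original MEA from the $x^\star_{ij}$.

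For Pareto optimality, I would argue as in the equal-claims case: the weighted Nash product $\prod_j u_j^{c_j}$ is strictly increasing in each $u_j$, so any Pareto improvement over $x^\star$ would strictly increase the objective, contradicting optimality. For proportionality for variable claims, I would use the market interpretation: every agent $j$ has budget $c_j$ and can afford the bundle consisting of a $c_j/\sum_k c_k$ fraction of each interval (its cost is $(c_j/\sum_k c_k)\sum_i p^\star_i l_i \le c_j$ by Walras' law). Since the equilibrium bundle maximizes $V_j$ within the budget, $V_j(Y_j) \ge (c_j/\sum_k c_k) V_j([0,1])$.

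For envy-freeness for variable claims, I would use the bang-per-buck characterization of Fisher equilibria: at equilibrium, agent $j$ only buys intervals $J_i$ maximizing $v_{ij}/p^\star_i$, and spends her entire budget $c_j$, so $V_j(Y_j) = c_j \cdot \alpha_j$ where $\alpha_j = \max_i v_{ij}/p^\star_i$ is her bang-per-buck. Agent $k$'s bundle costs $c_k$, so if agent $j$ could buy $(c_j/c_k)$-fraction of it, she would pay $c_j$ and receive value at most $c_j \alpha_j = V_j(Y_j)$, since every interval in $Y_k$ gives $j$ value at most $\alpha_j$ per dollar. Hence $(c_j/c_k) V_j(Y_k) \le V_j(Y_j)$, which is the definition of envy-freeness for variable claims.

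The main obstacle is the envy-freeness step: the clean ``bang-per-buck'' argument above works cleanly only at a market equilibrium and requires confirming that the Lagrangian optimality conditions for the weighted Eisenberg--Gale program really do yield a Fisher equilibrium with budgets $c_j$ (rather than $1$). This is standard but must be checked, since the standard textbook statement of the Eisenberg--Gale theorem is usually given with unit budgets; the extension to arbitrary budgets $c_j$ is routine and is precisely what \citet{DPSV08a} treat. Once this correspondence is established, the three fairness/efficiency properties follow from the arguments above, and polynomial-time solvability is immediate from \citep{DPSV08a}.
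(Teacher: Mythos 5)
Your proposal matches the paper's proof: the paper also replaces the objective by $-\sum_j c_j\log(u_j)$ (i.e., the weighted Eisenberg--Gale program with budget $c_j$ for agent $j$) and then notes that the property arguments carry over from the unit-claim case, which is exactly the market-equilibrium/bang-per-buck reasoning you spell out. Your write-up is simply a more detailed version of the same argument.
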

	
	%
	%
	%

	\subsection{Private endowments}
	
In the cake cutting literature, the cake is generally considered to be a public good which is divided among the agents. It could also be that each agent brings part of the cake and then the assembled cake needs to be reassigned among the agents fairly. We call the setting cake cutting with \emph{private endowments}. In this context, individual rationality is a minimal requirement and envy-freeness need to be redefined. Classic cake cutting can be modelled by cake cutting with private endowments in the following manner: none of the agents have any endowments and one additional agents owns the whole cake but has no interest in the cake.
	
	An allocation is \emph{individually rational} if $V_i(X_i)\geq V_i(\omega(i))$. An allocation satisfies \emph{justified envy-freeness for private endowments} if $V_i(X_i)\geq V_i(X_j)$ or $V_j(X_i)< V_j(\omega(j))$ for each pair of agent $i,j\in N$. Either $i$ should not be envious of $j$ or if he is envious, then giving $i$'s allocation to $j$ violates individual rationality.\footnote{The definition of justified envy-freeness for private endowments is based on a similar concept due to \citet{Yilm09a} for the domain of housing assignment problems.} It is not entirely clear how to obtain the canonical concept of justified envy-freeness which takes into account both private endowments and variable claims.
	
		An allocation satisfies \emph{robust justified envy-freeness for private endowments} if for all $i,j\in N$ either
	$\forall v_i'\in \hat{V_i} \midd \int_{X_i}v_i'(x)dx\geq \int_{X_j}v_i'(x)dx$ or $ \exists v_j'\in \hat{v_j}\midd \int_{X_i}v_j'(x)dx<\int_{\omega(j)}v_j'(x)dx$. If $i$ does not get an unambiguously as preferred a piece as $j$'s, $j$ has a counter claim that $j$'s endowment is better than $i$'s piece for some ordinally equivalent valuation function of $j$. 
	
		\begin{proposition}\label{prop:CCEA endowments}
	For cake cutting with private endowments and piecewise constant valuations, CCEA satisfies robust justified envy-freeness for private endowments.
	\end{proposition}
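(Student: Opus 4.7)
The plan is to leverage the SD-justified-envy-freeness guarantee of the Controlled Consuming (CC) algorithm of \citet{AtSe11a} by carefully matching the endowments in the reduction and then translating the SD property back into the robust statement for cake cutting.

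First, I would verify that the reduction in Algorithm~\ref{algo:ccea-extensions} produces a well-formed input to CC: the subintervals $\{J_1,\ldots,J_m\}$ induced by the discontinuity points of the agents' valuations refine the endowments (because each $\omega(i)$ is a segment of the cake and the discontinuity structure is finite), so each $J_j$ lies entirely inside some $\omega(i)$; the setting $e_{ih_j}=size(h_j)$ whenever $J_j\subseteq \omega(i)$ and zero otherwise thus defines a valid endowment profile in the sense required by CC, and the dummy agent $n+1$ absorbs the public portion. The $rate$ values are irrelevant since we are not in the variable-claims regime, so I can just use unit rates.

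Next, by the main property of CC proved in \citep{AtSe11a}, the returned fractional assignment $p$ satisfies SD-justified-envy-freeness: for every pair $i,j\in N$, either $p_i \Pref[i]^{SD} p_j$, or $\lnot (p_i \Pref[j]^{SD} e_j)$. The core of the proof is translating each of these two disjuncts back to the cake allocation $X=(X_1,\ldots,X_n)$ produced from $p$. Here I would use the fact that for piecewise constant valuations the value of a length-$\alpha\cdot len(J_j)$ portion of $J_j$ is exactly $\alpha\cdot len(J_j)\cdot v_i(J_j)$, and so $\int_{X_i}v_i'(x)\,dx=\sum_{j}(p_{ih_j}/size(h_j))\cdot len(J_j)\cdot v_i'(J_j)$. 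This is a linear function of the fractions $p_{ih_j}$ whose coefficients $len(J_j)\cdot v_i'(J_j)$ induce exactly the ordinal preferences $\succsim_i$ over houses for every $v_i'\in \hat V_i$. Consequently, $p_i\Pref[i]^{SD}p_j$ translates to $\int_{X_i}v_i'(x)\,dx\ge \int_{X_j}v_i'(x)\,dx$ for all $v_i'\in\hat V_i$ (the standard equivalence between SD-dominance and weak dominance under all vNM-consistent utilities). Similarly, $\lnot(p_i\Pref[j]^{SD}e_j)$ translates to the existence of some $v_j'\in \hat V_j$ with $\int_{X_i}v_j'(x)\,dx<\int_{\omega(j)}v_j'(x)\,dx$, since the endowment $e_j$ corresponds exactly to the cake segment $\omega(j)$ under the encoding $e_{jh_k}=size(h_k)$ for $J_k\subseteq \omega(j)$.

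Putting the two translations together yields precisely the robust justified envy-freeness condition required. The main obstacle I expect is the bookkeeping in the translation step: one has to be careful that (i) the SD relation on the housing instance is defined using the same ordinal preferences that every $v_i'\in \hat V_i$ induces, and (ii) the endowment $e_j$ in the housing instance integrates to exactly $\int_{\omega(j)}v_j'(x)\,dx$ for every $v_j'\in \hat V_j$, so that the ``justified'' counter-claim in the housing problem corresponds exactly to the existential quantifier over $v_j'$ in the cake definition. Once this correspondence is made rigorous, the proposition follows directly from the SD-justified-envy-freeness of CC.
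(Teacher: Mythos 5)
Your proposal is correct and follows essentially the same route as the paper's own (sketch) proof: invoke the SD-justified envy-freeness of the CC output (Prop.~4 of \citet{AtSe11a}) on the housing instance built by Algorithm~\ref{algo:ccea-extensions}, and translate the two disjuncts back to the cake via the equivalence between SD-dominance and dominance under all ordinally consistent utilities, with the endowment $e_j$ corresponding exactly to $\omega(j)$. One cosmetic note: viewed as a linear function of the fractions $p_{ih_j}$, the coefficient is $len(J_j)v_i'(J_j)/size(h_j)=\bigl(\max_k len(J_k)\bigr)v_i'(J_j)$, i.e., proportional to $v_i'(J_j)$ --- this (not $len(J_j)v_i'(J_j)$) is what makes the ordinal-consistency claim go through.
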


	Generalizing CSD to handle endowments without losing its properties seems to be a challenging task. We have not addressed it in this paper.
		\section{Related work and discussion}

	A mathematical analysis of cake cutting started with the work of Polish mathematicians Steinhaus, Knaster, and Banach~\citep{Stei48a}. As applications of fair division have been identified in various multiagent settings, a topic which was once considered a mathematical curiosity has developed into a full-fledged sub-field of mathematical social sciences~\citep[see \eg][]{Moul03a}. 
	In particular, in the last few decades, the literature of cake cutting has grown considerably~\citep[see \eg][]{Moul03a,BrTa96a,RoWe98a,Proc12a}. 

%
%
	%

	The relation between the random assignment problem and cake cutting has been noticed before~\citep{CLPP12a}. However, in their discussion of related work, \citeauthor{CLPP12a} argue that techniques from the random assignment literature cannot be directly applied even to piecewise uniform functions---a subclasses of piecewise constant functions. 
	Stumbling blocks identified  include the fact that in the random assignment problem, each agents gets one object. 
	We observed that PS satisfies envy-freeness even when agents get multiple objects.\footnote{Even in the first paper on the PS algorithm for strict preferences, it was observed that PS can be extended to the case where there are many more objects than agents~\citep[Page 310, ][]{BoMo01a}.} 
	Another issue raised is that `if two agents desire two subintervals, both agents would value the longer subinterval more than the shorter.' 
This problem can be circumvented by using the algorithm to compute the greatest common denominator of the lengths of the intervals and then dividing the intervals in equally sizes subintervals. However the number of subintervals can be exponential in the input size. 
We use the idea of \citet{AtSe11a} that some houses may only be partially available. 

	\citet{CLPP12a} stated that generalizing their strategyproof algorithm for piecewise uniform valuations to the case for piecewise constant valuations as an open problem.
We presented two algorithms --- CCEA and MEA --- that generalize Mechanism 1 of \citet{CLPP10a, CLPP12a}. Although they both satisfy certain desirable properties, both natural generalizations are not strategyproof.



	%



         A number of works in the cake-cutting literature reason about strategyproofness. However they refer to a very weak notion of strategyproofness which is equivalent to a maximin strategy. Apart from the papers of \citet{CLPP10a,CLPP12a}, we are aware of no positive results regarding discrete, strategyproof, and fair algorithms even for the restricted domain of piecewise constant valuations. In this paper we present a proportional algorithm (CSD) for piecewise constant valuations that although not formally strategyproof seems less like likely to manipulate in comparison to CCEA. If we are allowed to use randomization, then we show that CSD can be adapted to be strategyproof in expectation. Notice that if we instead require our algorithm to be strategyproof always, and proportional in expectation, then CRSD would satify these properties. Moreover, we note that \citet{CLPP10a} also presented an algorithm that is randomized, envy-free and proportional, and strategyproof in expectation. However, it does not satisfy unanimity and requires much more randomization in contrast to CSD.  It remains an open question whether there exists an algorithm that always returns a proportional allocation and is always immune to agent manipulation. 
	
	A difficulty that arises in coming up with strategyproof and proportional algorithm lies in that there is no restriction on the distribution of the discontinuity points of the agents' valuation functions. To illustrate this point, suppose the algorithm designer knows that the discontinuity points of the agents' valuation functions come from a set $S = \{x_1, \ldots, x_k\}$, where $0 \leq x_1 \leq \ldots \leq x_k \leq 1$. Consequently, a mechanism that partitions $[0,1]$ into intervals of the form $[x_i, x_{i+1}]$ and allocates $1/n$ of each interval to each agent would be proportional, envy-free and strategyproof. Even if the designer does not know such a $S$, but instead we require the minimum distance between any two consecutive discontinuity points of the agent's valuation function to be at least some $\epsilon > 0$, then there exists a strategyproof and $\delta$-proportional algorithm for this setting.

		As for private endowments, we are only aware of \citep{Thom07a} where private endowment are considered in pie-cutting. However, in comparison to our positive results concerning envy-freeness, \citet{Thom07a} presents a negative result concerning the core.
	
	%
	%
	%
	
	\section{Conclusion}

In this paper, we made a number of conceptual and technical contributions. It will be interesting to consider new results with respect to robust versions of fairness, endowments, and variable claims. Cake cutting is an exciting sub-field of microeconomics with numerous applications to computer science. In order for theory to be more relevant to practice, we envision exciting work in much richer models of cake cutting.

\section*{Acknowledgements}
The authors acknowledge the helpful comments  of Jay Sethuraman and Simina Br{\^a}nzei.

This material is based upon work supported by
	the Australian Government's
	Department of Broadband, Communications and the Digital
	Economy, the Australian Research Council, the Asian
	Office of Aerospace Research and Development through
	grant AOARD-12405.


 
\newpage

\normalsize
\appendix

\section*{Proof of Theorem~\ref{PE fair impossibility}}
\begin{proof}
Consider the following two-agent profile.

Agent 1: 
\[v_1(x) = v^{1}_a \ \text{for} \ x \in [0,0.5], \ v_1(x) = v^{1}_b \ \text{for} \ x \in (0.5,1].\]

Agent 2: 
\[v_2(x) = v^{2}_a \ \text{for} \ x \in [0,0.5], \ v_2(x) = v^{2}_b \ \text{for} \ x \in (0.5,1].\]

Choose $v^{1}_a, v^{1}_b,  v^{2}_a, v^{2}_b > 0$ in such a way that $v^{1}_a > v^{1}_b$ and $v^{2}_a > v^{2}_b$ and $\frac{v^{1}_a}{v^{1}_b} > \frac{v^{2}_a}{v^{2}_b}$. By either robust envy-freeness or robust proportionality, the mechanism must make the following allocation $x$ where $x^{1}_a = x^{1}_b = x^{2}_a = x^{2}_b = 0.25$. On the other hand, in order for the mechanism to be Pareto efficient,  $x$ must be an element of $P^{1} = \{x | x^{1}_{b} = 0 \ \text{or} \ x^{2}_a = 0\}.$ Hence, we have reached an impossibility.
\end{proof}

\section*{Proof of Theorem~\ref{PE SP Prop impossibility}}
\begin{proof}
For cake cutting with piecewise constant valuations and $n \geq 2$, it follows from \citep[Theorem 3, ]{Schu97a} that the only type of strategyproof and Pareto optimal mechanisms are dictatorships. Consequently, there exists no strategyproof and Pareto optimal mechanism that is also proportional or symmetric.
\end{proof}

\section*{Proof of Theorem~\ref{SP Rob Prop Non W impossibility}}
\begin{proof}
The following example shows that there exists no cake-cutting algorithm that is strategyproof, robust proportional, and non-wasteful. \\

Profile 1:

\[1: v_1(x) = a  \ \text{if} \ x \in [0,0.5], \ v_1(x) = b \ \text{if} \ x \in [0.5,1]\]
\[2: v_2(x) = a \ \text{if} \ x \in [0,0.5] , \ v_2(x) = b \ \text{if} \ x \in [0.5,1]\]

for some $a > b > 0$. \\

Since the algorithm is robust proportional, it must be the case that each agent 
receive $1/2$ of $[0,0.5]$ and $1/2$ of $[0.5, 1]$. This is because the fractional parts of $[0,0.5]$ and $[0.5,1]$ that each agent receives must stochastically dominate the uniform allocation, otherwise the allocation would not satisfy proportionality for some valuation function in the ordinal equivalence class.

Without lost of generality (up to reordering and regrouping of the cake), we may assume that agent $1$ receives $[0, 0.25] \cup [0.5, 0.75]$ and agent $2$ receives $[0.25, 0.5] \cup [0.75, 1]$. \\

Now consider profile 2:

\[1: v_1(x) =  a \ \text{if} \ x \in [0,0.25] , \ v_1(x) = b \ \text{if} \ x \in [0.5,0.75], \ v_1(x) = 0  \ \text{otherwise}\]
\[2: v_2(x) =  a \ \text{if} \ x \in [0,0.5] , \ v_2(x) = b \ \text{if} \ x \in [0.5,1]\]

By SP, agent $1$ must again receive $[0, 0.25] \cup [0.5, 0.75]$ and agent 2 
receives $[0.25, 0.5] \cup [0.75, 1]$. If agent $1$ receives anything less in 
profile $2$, then he would deviate from profile $2$ to profile $1$. If agent $1$ 
receives anything more in profile $2$, then he would deviate from profile $1$ to 
profile $2$. \\

Now consider profile 3:

\[1: v_1(x) = a \ \text{if} \ x \in [0,0.25], v_1(x) = b \ \text{if} \ x \in [0.5,0.75], \ v_1(x) = 0  \ \text{otherwise} \]
\[2: v_2(x) = a + \epsilon \ \text{if} \ x \in [0,0.25], v_2(x) = a \ \text{if} \ x \in [0.25,0.5], v_2
(x) = b \ \text{if} \ x \in [0.5,1]\]

By robust proportionality, both agent $1$ and $2$ must receive $1/2$ of $[0,0.25]$. By non-wastefulness, agent $2$ must receive $[0.25, 0.5]$ and $[0.75,1]$ since agent $1$ has a utility of $0$ on these intervals. Hence, agent $2$ in profile 2 would misreport so that he receives the allocation in profile $3$.

\end{proof}

\section*{Proof of Proposition~\ref{prop:CCEA is robust envy-free}}

\begin{proof}[Sketch]
First of all, CCEA is non-wasteful because an agent is never allowed to eat a piece of the cake that he has no desire for. On the other hand, the algorithm terminates only when every portion of the cake that is desired by at least one agent is completely consumed by some agent who desires it.

Next, we show that the algorithm is robust envy-free. Consider a fractional assignment $p$ returned by the CC algorithm. Without private endowments CC is equivalent to the EPS algorithm of \citet{KaSe06a}. Assignment $p$ satisfies justified envy-freeness in presence of variable eating rates: $u_i(p_i) \geq {u_i(p_{j})}$ for all utilities $u$ consistent with preferences of $i$ over the houses. The intuition is that at any point during the running of CC, an agent $i$  will be `eating' his most favoured object(s) at the same rate as any other agent $j$ even if $j$ is also the eating the same object(s). Hence, for all $v_i'\in \hat{V_i}$, it is the case that for $j\neq i$, $\int_{X_i}v_i'(x)dx\geq \int_{X_j}v_i'(x)dx$. \\
\end{proof}

\section*{Proof of Proposition~\ref{prop:CCEA time}}

\begin{proof}[Sketch]For a cake cutting instance $I$, $|I|=nm$  is the input size where $n$ is the number of agents and $m$ is the number of relevant subintervals. Once the lengths of the subintervals in $\{J_1, \ldots, J_m\}$ are computed, the size of each house can be computed in linear time. The number of houses in CCEA is $m$. We now analyse the running time of CC on $n$ agents and $m$ houses~\citep[Section 3.5, ][]{AtSe11a}.
In the CC algorithm, the flow network consists of $|V|$ vertices and $|E|$ arcs where $|V|=O(n^2)$ and $|E|=O(n^2m)$. The number of parametric flow problems needed to be solved is $O(nm)$. A parametric flow network problem can be solved in time $O(|V||E|\log({|V|}^2/|E|))$ due to \citet{GGT89a}. Hence, the running time of CCEA is $O(nm(n^2)(n^2m)\log(n^4/n^2m))=O(n^5m^2\log(n^2/m))$.
%
\end{proof}

\section*{Proof of Proposition~\ref{prop:CCEA is not SP}}

	\begin{proof}[Sketch]
		CCEA is not strategyproof even if all the piecewise intervals are of equal length, and there are no private endowments or variable claims, and agents have strict preferences over the intervals. In this case CCEA is equivalent to the classic PS algorithm. It is known that PS is not strategyproof even for strict preferences when there are more objects than agents~\citep{Koji09a}. 
		%
		%
		%
		%
	\end{proof}

\section*{Proof of Lemma~\ref{prop: equivalence}}


 \begin{proof}[Sketch]In the absence of private endowments and variable claims, CCEA can be solved by invoking EPS instead of CC but with the slight modification that in the corresponding flow network of EPS, the capacity of each arc $(h,t)$ is set to $size(h)$ in step 2 of EPS \citep[Algorithm 1, ][]{KaSe06a}. Let us refer to this simplified CCEA as SimpleCCEA. When SimpleCCEA is run, it invokes EPS and solves repeated 
 parametric network flow problems \citep[Step 3, Algorithm 1, ][]{KaSe06a}. 
 In the step, EPS computes a bottleneck set of agents and houses at each break-point. SimpleCCEA computes bottleneck sets in the same way as Mechanism 1 of \citet{CLPP10a} and then allocates the resources in the bottleneck sets to the agents in the bottleneck set. The flow networks of the slightly modified EPS \citep[Figure~2, ][]{KaSe06a} and that of Mechanism 1~\citep[Figure~2, ][]{CLPP10a} are identical with only two insignificant differences namely that in the flow network of Mechanism 1 of \citet{CLPP10a} i) the source and target are swapped and all the arcs are inverted and ii) the size of the houses/intervals is not normalized. However, the eventual allocations are same.
	 \end{proof}

\section*{Proof of Proposition~\ref{prop:gsp1}}

\begin{proof}
In light of lemma \ref{prop: equivalence}, it suffices to show that SimpleCCEA is GSP. We begin with some notations. \\ Let $len(X)$ denote the length of $X \subseteq [0,1]$. Since the utility function is piecewise uniform, it suffices to consider the length of pieces of the cake that are desired by each agent. \\
Let $S \subseteq N$ be a coalition of agents who misreport their true preference. \\
Let $I$ denote the instance where every agent reports truthfully and $I'$ denote the instance where agents in $S$ misreport. \\
Let $D_1,\ldots,D_n \subseteq [0,1]$ denote the pieces of cake that are truly desired by each agent. \\
Let $D'_1,\ldots,D'_n \subseteq [0,1]$ denote the desired pieces of cake that are reported by each agent. \\
Let $A_1,\ldots,A_n \subseteq [0,1]$ denote the allocation received by each agent under truthful reports. \\
Let $A'_1,\ldots,A'_n \subseteq [0,1]$ denote the allocation received by each agent when the agents in $S$ misreport. \\
Let $X_1, \ldots, X_k$ be the bottleneck sets of agents with respect to the true preferences of agents arranged in the order that they are being allocated by the algorithm in instance $I$. \\
Let $X'_1, \ldots, X'_p$ be the bottleneck sets of agents with respect to the true preferences of agents arranged in the order that they are being allocated by the algorithm in instance $I'$. \\
Let $len(X_i)$ denote the length of the allocation each agent receives in the bottleneck set $X_i$.
Let 
\[\overset{\sim}{X_l} = \{i \in X_l \mid len(A'_i \cap D_i) \geq len(A_i \cap D_i) = len(A_i)\} \footnote{The fact that $ len(A_i \cap D_i) = len(A_i)$ makes use of the free disposal property, i.e. the allocation that the mechanism gives agent $i$ is a subset of the pieces desired by agent $i$  under truthful reports.}\]
\[\overset{\wedge}{X_l} = \{i \in X_l \mid len(A'_i \cap D_i) \leq len(A_i)\}\]
In other words $\overset{\sim}{X_l}$ is the subset of agents of $X_l$ who weakly gain in utility when the agents in $S$ misreport, and  $\overset{\wedge}{X_l}$ is the subset of agents of $X_l$ who weakly lose in utility when the agents in $S$ misreport. We will show that for all $l = 1,\ldots,k$,  $X_l = \overset{\wedge}{X_l}$. This would then directly imply that $S$ must be empty since no coalition $S$ can exist such that  by misreporting, everyone in the coalition is weakly better off and at least one agent in the coalition is strictly better off. 

We will prove this result via induction on $k$. In order to carry on with the induction, we will show that no agent in $X_1$ appears in the coalition $S$. We begin with a lemma.

\begin{lemma}
It is the case that $X_1 = \overset{\sim}{X_1}$. In other words, no agent in $X_1$ is strictly worse off when some subset of agents misreport their preference.
\end{lemma}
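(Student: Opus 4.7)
The plan is a case analysis on whether an agent $i \in X_1$ belongs to the misreporting coalition $S$. Because the overarching proof of group-strategyproofness is by contradiction, we may assume $S$ is a genuine group-manipulating coalition, so every $j \in S$ satisfies $len(A'_j \cap D_j) \geq len(A_j)$. For $i \in X_1 \cap S$, this hypothesis immediately yields $len(A'_i \cap D_i) \geq len(A_i) = len(X_1)$, so membership of $i$ in $\overset{\sim}{X_1}$ is trivial.

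The substantive case is $i \in X_1 \setminus S$, where $D'_i = D_i$ and by non-wastefulness $A'_i \subseteq D_i$, so $len(A'_i \cap D_i) = len(A'_i)$. Letting $X'_l$ be the bottleneck of instance $I'$ that contains $i$, we have $len(A'_i) = len(X'_l)$, and since the sequence of bottleneck ratios is non-decreasing along the decomposition, it suffices to establish $len(X'_1) \geq len(X_1)$. I would then split once more on whether the first bottleneck of $I'$ meets $S$.

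If $X'_1 \cap S = \emptyset$, every agent in $X'_1$ reports identically in $I$ and $I'$, so $X'_1$ attains the same ratio in $I$, and the minimality of $X_1$ as a bottleneck set in $I$ forces $len(X_1) \leq len(X'_1)$. Otherwise pick any $j \in X'_1 \cap S$ and let $X_m$ be the bottleneck of $I$ containing $j$. Then $len(A'_j \cap D_j) \leq len(A'_j) = len(X'_1)$, while the GSP-deviation hypothesis gives $len(A'_j \cap D_j) \geq len(A_j) = len(X_m)$, and the monotonicity of bottleneck ratios yields $len(X_m) \geq len(X_1)$; chaining these three inequalities produces $len(X'_1) \geq len(X_1)$.

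The main obstacle will be to justify rigorously, via the equivalence with Mechanism~1 of Chen et al.\ (Lemma~\ref{prop: equivalence}) and the EPS characterization of bottleneck allocations, the structural facts that the argument invokes --- non-wastefulness of the allocations on each side, the interpretation of $len(X'_l)$ as the common length that every agent in that bottleneck actually consumes from their reported desired region, and the non-decreasingness of the sequence $len(X_1), len(X_2), \ldots$ of bottleneck ratios --- in the piecewise-uniform cake-cutting setting rather than in the original random-assignment setting in which these properties are more commonly stated.
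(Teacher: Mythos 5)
Your proof is correct and is essentially the paper's own argument reorganized: where the paper assumes a truthful agent of $X_1$ strictly loses and derives a contradiction from $len(X'_1) < len(X_1)$, you establish $len(X'_1) \geq len(X_1)$ directly using the same two ingredients (minimality of $X_1$'s ratio in $I$ when $X'_1$ contains no misreporter, and the chain $len(X'_1) \geq len(A'_j \cap D_j) \geq len(A_j) \geq len(X_1)$ when it does), then finish via monotonicity of bottleneck lengths. The structural facts you flag for justification --- that each agent's allocation lies inside his reported desired set, that every agent in a bottleneck receives that bottleneck's average length, and that bottleneck lengths weakly increase along the decomposition --- are exactly the ones the paper relies on, the last citing \citet{CLPP12a}.
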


It is clear that $\overset{\sim}{X_1} \subseteq X_1$. Suppose the claim does not hold, then there exists some agent $i \in X_1 \backslash \overset{\sim}{X_1}$. In other words, agent $i$ is strictly worse off due to the misreports. Since an agent would only misreport his preference if misreporting weakly improves his utility, we deduce that $i \notin S$, which implies that $D_i = D'_i$. Consequently,  the following set of inequalities hold for agent $i$:
\[ len(A'_i) = len(A'_i \cap D'_i) = len(A'_i \cap D_i)  < len(A_i), \]
where the first equality follows from the fact that $A'_i \subseteq D'_i$ by the way the algorithm allocates the pieces of cake to agents. The second equality follows from $D_i = D_i'$. And the last inequality follows from $i \in X_1 \backslash \overset{\sim}{X_1}$. \\
We claim that since $X_1$ is the first bottleneck set with respect to the true preference, it cannot be the case that $ len(X'_l) = len(A'_i) < len(A_i) = len (X_1)$, where $X'_l$ is the bottleneck set that $i$ belongs to in the instance $I'$. Suppose this is the case, then we have that $len(X'_1) \leq len(X'_l) < len(X_1)$.\footnote{It is shown in \citep{CLPP12a} that the length of allocation of agents weakly increases with respect to the index of bottleneck sets.} It is clear that $X'_1$ cannot contain an agent who misreports in $I'$, since a misreporting agent in $X'_1$ only receives a piece of cake with length $len(X'_1) < len(X_1)$, which is strictly less than what he would've gotten had he reported truthfully. Hence, every agent in $X'_1$ must report his true preference in $I'$. Since $X'_1$ is the first bottleneck set in $I'$, we have that 
\[
\begin{split}
len(X'_1) = \frac{len((\cup_{j \in X'_1} D'_j) \cap [0,1])}{|X'_1|} &=  \frac{len((\cup_{j \in X'_1} D_j)\cap [0,1])}{|X'_1|} \\ 
\geq \frac{len((\cup_{j \in X_1} D_j)\cap [0,1])}{|X_1|} &= len(X_1),
\end{split}
\]

where the first and third equalities follows from the way the algorithm makes an allocation, the second equality follows from the fact that the agents in $X'_1$ have the same reports in $I'$ as in $I$, and the inequality follows from the fact that $X_1 \in \arg\min_{S \subseteq N} \frac{len((\cup_{j \in S} D_j \cap [0,1]))}{|S|}$. This contradicts the fact that $len(X'_1) < len(X_1)$.

\begin{lemma}
It is the case that $X_1 = \overset{\wedge}{X_1} =  \overset{\sim}{X_1}$. In other words, no agent in $X_1$ is strictly better off when some subset of agents misreport their preference.
\end{lemma}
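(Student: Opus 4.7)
The plan is to upgrade the previous lemma ($X_1 = \overset{\sim}{X_1}$) from a per-agent inequality to a per-agent equality by a total-measure argument restricted to $T := \bigcup_{j \in X_1} D_j$. By the previous lemma every $i \in X_1$ satisfies $len(A'_i \cap D_i) \ge len(A_i)$, and free disposal ($A_i \subseteq D_i$) gives $len(A_i) = len(X_1)$. Because $X_1$ is the \emph{first} bottleneck set in instance $I$, its defining identity is $len(X_1) = len(T)/|X_1|$, i.e.\ $len(T) = |X_1|\,len(X_1)$: under truthful reports the agents in $X_1$ collectively exhaust all of $T$.

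Next, I would upper-bound the desired-mass that agents in $X_1$ receive in instance $I'$. Two observations suffice: (a) $A'_i \cap D_i \subseteq D_i \subseteq T$ for every $i \in X_1$, and (b) $\{A'_i\}_{i \in N}$ is pairwise disjoint because $A'$ is a feasible allocation, so in particular $\{A'_i \cap D_i\}_{i \in X_1}$ is pairwise disjoint. These yield
\[
\sum_{i \in X_1} len(A'_i \cap D_i) \;=\; len\!\Bigl(\bigcup_{i \in X_1}(A'_i \cap D_i)\Bigr) \;\le\; len(T) \;=\; |X_1|\,len(X_1).
\]
Combined with the per-agent lower bounds $len(A'_i \cap D_i) \ge len(X_1)$, the sum sits between $|X_1|\,len(X_1)$ and itself, so each summand must equal $len(X_1)$. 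Hence $len(A'_i \cap D_i) = len(A_i)$ for every $i \in X_1$, which is exactly the condition $i \in \overset{\wedge}{X_1}$. Together with $X_1 = \overset{\sim}{X_1}$ this gives $X_1 = \overset{\wedge}{X_1} = \overset{\sim}{X_1}$, and since by definition $\overset{\wedge}{X_1} \cap \overset{\sim}{X_1}$ consists of agents whose desired-mass is exactly unchanged, no agent in $X_1$ is strictly better off.

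The main obstacle is nailing down the bottleneck identity $len(T) = |X_1|\,len(X_1)$ rigorously; this is a structural property of the first break-point of the EPS/CC parametric max-flow (cf.\ \citep{KaSe06a} and the characterisation of $X_1$ in the proof of the previous lemma). Once that identity is in hand, the argument is pure pigeonhole on measures and requires no case analysis on whether particular agents of $X_1$ lie in $S$, nor any information about the later bottleneck sets $X'_l$ or about allocations received by agents outside $X_1$ in instance $I'$. This clean separation is what will enable the outer induction on $k$ (on the index of the bottleneck set) to proceed: after peeling off $X_1$, the remaining instance, restricted to $N \setminus X_1$ and to $[0,1] \setminus T$, is again an instance whose first bottleneck is $X_2$, and the same argument applies.
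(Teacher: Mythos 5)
Your proposal is correct and matches the paper's own argument: both rest on the per-agent lower bounds from the previous lemma, the disjointness of the sets $A'_i \cap D_i$, and the fact that under truthful reports the first bottleneck set exactly exhausts its desired cake ($\sum_{i\in X_1} len(A_i) = len(\cup_{i\in X_1} D_i)$), forcing equality agent by agent. The only difference is presentational — the paper phrases the measure-counting step as a proof by contradiction rather than your direct sandwich — and the bottleneck identity you flag as the "main obstacle" is exactly the structural property the paper invokes as "the way the algorithm allocates to the agents in the smallest bottleneck set."
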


Since we have established that $X_1 = \overset{\sim}{X_1}$, and $\overset{\wedge}{X_1} \subseteq X_1$, it suffices to show that $ \overset{\sim}{X_1} \subseteq \overset{\wedge}{X_1}$. Suppose not, then for all $i \in X_1$, we have that $ len(A'_i \cap D_i) \geq len(A_i)$ and there exists some $j \in X_1$ such that $ len(A'_i \cap D_j) > len(A_j)$. Summing over $i \in X_1$, we get that 
\[len(\cup_{i \in X_1} (A'_i \cap D_i))  = \sum_{i \in X_1} len(A'_i \cap D_i) > \sum_{i \in X_1} len(A_i) = len(\cup_{i \in X_i} A_i) = len(\cup_{i \in X_i} D_i),\]
where the first two equalities follow from the fact that the $A_i$'s and $A'_i \cap D_i$'s are disjoint subsets and the third equality follows from the way the algorithm allocates to the agents in the smallest bottleneck set. But this set of inequalities contradict the fact that $\cup_{i \in X_1} (A'_i \cap D_i) \subseteq \cup_{i \in X_i} D_i $, which implies that $len(\cup_{i \in X_1} (A'_i \cap D_i)) \leq len(\cup_{i \in X_i} D_i)$.   Hence, it must be the case that for every $i \in X_1$, we have that $ len(A'_i \cap D_i) = len(A_i)$, which implies that $i \in \overset{\wedge}{X_1}$.

\begin{lemma}
No agent in $X_1$ appears in the coalition $S$ and $X_1$ is also the first bottleneck set for $I'$.
\end{lemma}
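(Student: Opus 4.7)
The plan is to leverage the conclusions of the preceding two lemmas together with the optimality characterization of the first bottleneck set, namely that $X_1 \in \arg\min_{T\subseteq N}\frac{len(\cup_{j\in T} D_j \cap [0,1])}{|T|}$ with value $len(X_1)$. I will prove Part 1 (no agent in $X_1$ lies in $S$) and Part 2 ($X_1$ is also the first bottleneck in $I'$) in that order, because Part 2 is then almost immediate once Part 1 is in hand.

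For Part 1, suppose toward a contradiction that some $i\in X_1$ is in $S$. Lemma 2 already shows $len(A'_i\cap D_i)=len(A_i)=len(X_1)$, so $i$ is exactly indifferent between $I$ and $I'$; $i$ gains nothing personally from the misreport. The next step is to argue that $i$'s misreport cannot help the rest of $S$ either. To do this, I would consider the auxiliary instance $I''$ in which $i$ switches back to the truthful report $D_i$ while every other agent reports as in $I'$, and then trace SimpleCCEA's execution (equivalently, EPS's repeated parametric-flow bottleneck computations) on $I''$. Using Lemma 2's consequence $\cup_{j\in X_1}(A'_j\cap D_j)=\cup_{j\in X_1} D_j$ together with $A'_j\subseteq D'_j$, the pieces outside $\cup_{j\in X_1} D_j$ that $i$ claimed in $I'$ through the expanded report $D'_i$ are never binding for the $X_1$-stage of the algorithm, so removing them cannot alter the bottleneck structure faced by the remaining agents. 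Hence the allocation for every $j\neq i$ is unchanged going from $I'$ to $I''$, so replacing $i$'s deviation by truth preserves the coalition-improvement property of $S$. Iterating this substitution over all of $X_1\cap S$ allows us to assume $X_1\cap S=\emptyset$, which is exactly Part 1.

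For Part 2, with $X_1\cap S=\emptyset$ we have $D'_i=D_i$ for every $i\in X_1$, and therefore
\[
\frac{len\bigl(\cup_{i\in X_1} D'_i \cap [0,1]\bigr)}{|X_1|}
=\frac{len\bigl(\cup_{i\in X_1} D_i \cap [0,1]\bigr)}{|X_1|}
=len(X_1),
\]
using that $X_1$ was the first bottleneck in $I$. By the argmin definition, the first bottleneck value in $I'$ satisfies $len(X'_1)\leq len(X_1)$, and Lemma 1 gave the reverse inequality $len(X'_1)\geq len(X_1)$. Thus $len(X'_1)=len(X_1)$ and $X_1$ achieves the minimum ratio in $I'$ as well, so (up to tie-breaking exactly as in $I$) we may take $X_1$ to be the first bottleneck set in $I'$.

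The main obstacle is the "substitution" step in Part 1: making precise that replacing the misreport of $i\in X_1\cap S$ by the truthful $D_i$ leaves the allocation of every other agent unchanged. The substance of this step is that SimpleCCEA's bottleneck computation at the first stage depends on each agent's reported desired set only through its intersection with the cake still available, and Lemma 2 pins down that intersection for $X_1$-agents even under misreports. Once this invariance is established, the rest of the argument is routine manipulation of the bottleneck-ratio inequalities used above, and the induction on $l$ outlined in the surrounding proof can proceed with $X_1$ and its allocation carved off cleanly.
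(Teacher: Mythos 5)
Your Part 2 is essentially the paper's own argument: once every agent of $X_1$ reports truthfully in $I'$, any agent set achieving a ratio below $len(X_1)$ in $I'$ would have to contain a misreporting agent from a later bottleneck set of $I$, who would then receive only that smaller length and be strictly worse off, contradicting his membership in $S$; hence the first bottleneck value in $I'$ is $len(X_1)$ and $X_1$ can be taken as the first bottleneck set (your appeal to the first lemma for the inequality $len(X'_1)\geq len(X_1)$ is really an appeal to the argument inside its proof, but that argument does carry over).

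The gap is in Part 1. Your substitution step asserts that reverting a deviator $i\in X_1\cap S$ to his truthful report (instance $I''$) leaves every other agent's allocation unchanged, justified only by the remark that $i$'s extra claims are ``never binding for the $X_1$-stage.'' That is an unproved non-bossiness/invariance property of SimpleCCEA, and your justification only addresses the first bottleneck computation. Nothing forces $i$ into the first bottleneck set of $I'$: under his misreport he may land in a later bottleneck set of $I'$, where his false claims over pieces outside $\cup_{j\in X_1}D_j$ do enter the ratio computations and do compete with other agents, including members of $S\setminus X_1$; reverting him to truth can then change those agents' allocations, and showing that this cannot destroy the coalition's improvement would need a monotonicity argument you have not supplied. (Also, ``expanded report'' is unwarranted: the second lemma constrains the sets $A'_j\cap D_j$, i.e.\ the allocations, not the reports, so $D'_i$ need not contain $D_i$.) The paper sidesteps all of this with a direct counting argument: from $A'_j\subseteq D'_j$ and the second lemma, $len(\cup_{j\in X_1}D'_j)\geq len(\cup_{j\in X_1}A'_j)\geq len(\cup_{j\in X_1}(A'_j\cap D_j))=len(\cup_{j\in X_1}D_j)$, so any misreport that leaves the $X_1$-agents no worse off forces them collectively to claim, and absorb, weakly more cake than under truth; hence their participation cannot free up any cake for the rest of $S$, and no agent of $X_1$ need appear in $S$. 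Either adopt this counting argument or actually prove the invariance you invoke; as written, Part 1 does not stand.
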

By the previous lemma, no agent in $ \overset{\wedge}{X_1}$ is strictly better off by misreporting his preference. Thus, any agent in $X_1$ would potentially be in $S$ if by misreporting, he makes himself no worse off and simultaneously make some other agent in $S$ strictly better off. The only way that this can happen is by misreporting, the agents in $X_1$ make their collective claim over their desired pieces smaller, so that agents in later bottleneck set can claim some of their desired pieces. On the other hand, the following inequality
\[len(\cup_{i \in X_1} D'_i ) \geq len(\cup_{i \in X_1} A'_i ) \geq len(\cup_{i \in X_1} (A'_i \cap D_i)) = len(\cup_{i \in X_i} D_i)\]
implies that if any subset of agents of $X_1$ wants to misreport so they are not worse off, then collectively, they must over-report their preference to obtain allocations that together is weakly larger in total than the allocations they would get had they reported truthfully. Thus, having a subset of agents in $X_1$ misreport will not benefit the other agents in the coalition $S$. Hence, we may conclude that no agent in $X_1$ appears in the coalition $S$.  Provided that every agent in $X_1$ also reports truthfully in $I'$, there is no incentive for an agent that belongs to a subsequent bottleneck set in $I$ to misreport and prevent $X_1$ from being the first bottleneck set in $I'$ since that would make the misreporting agent strictly worse off, as in doing so, he needs to create a bottleneck set $X'$ such that $len(X') < len(X)$ and he would consequently receive an allocation of $len(X')$.\\

Since no agent in $X_1$ appears in the coalition $S$ and $X_1$ is also the first bottleneck set for $I'$, we can remove $X_1$ from $N$ and $\cup_{i \in X_1} A_i$ from $[0,1]$ and do induction on the set of remaining agents $N \backslash X_1$ and the remaining piece of cake $[0,1] \backslash \cup_{i \in X_1} A_i$ to be allocated and the proof is complete by invoking the inductive hypothesis with $X_2$ being the first bottleneck set in the new instance.
\end{proof}

\section*{Proof of Proposition~\ref{PE, EF, Prop}} 
\begin{proof}
Consider the following math program

\begin{align}
\text{max} \quad
\prod_{j =1}^{n} u_j \notag \\
\text{s.t.} \quad
u_j = \sum_{i=1}^{k} v_{ij}x_{ij} &\quad \forall j = 1, \ldots, n \notag \\
\sum_{j=1}^{n} x_{ij} \leq l_i &\quad \forall i = 1, \ldots, k \notag \\
x_{ij} \geq 0 &\quad \forall i, j. \notag \\  \notag
\end{align}
where $l_i = len(J_i)$, \\

Notice that the feasible region of the math program contains all feasible allocations. An optimal solution given by the LP is not Pareto dominated by any other feasible allocation because that would contradict the optimality of the solution. Hence, it is Pareto efficient. \\

To see that the optimal solution of the math program is also an envy free allocation, if we instead view $x_{ij}$ as the fractional amount of $J_i$ that is allocated to agent $j$, then scaling the $v_{ij}$'s appropriately (i.e. setting $v'_{ij} = v_{ij}l_i$), then solving the math program stated in the proposition is equivalent to solving the following math program.
\begin{align}
\text{max} \quad
\prod_{j =1}^{n} u_j \notag \\
\text{s.t.} \quad
u_j = \sum_{i=1}^{k} v'_{ij}x_{ij} &\quad \forall j = 1, \ldots, n \notag \\
\sum_{j=1}^{n} x_{ij} \leq 1 &\quad \forall i = 1, \ldots, k \notag \\
x_{ij} \geq 0 &\quad \forall i, j. \notag \\  \notag
\end{align}

which in turn equivalent to solving

\begin{align}
\text{max} \quad
\sum_{j =1}^{n} \log {u_j} \notag \\
\text{s.t.} \quad
u_j = \sum_{i=1}^{k} v'_{ij}x_{ij} &\quad \forall j = 1, \ldots, n \notag \\
\sum_{j=1}^{n} x_{ij} \leq 1 &\quad \forall i = 1, \ldots, k \notag \\
x_{ij} \geq 0 &\quad \forall i, j. \notag \\  \notag
\end{align}

Notice that the above math program is a convex program since we are maximizing a concave function (or equivalently minimizing a convex function) subject to linear constraints. \\

In \citep[pp 105-107, ][]{Vazi07a}, Vazirani invites us to consider a market setting of buyers (agents) and divisible goods (intervals). Each good is assumed to be desired by at least one buyer (i.e. for every good $i$, $v_{ij} > 0$ for some buyer $j$). There is a unit of each good and each buyer is given the same amount of money say $1$ dollar, for which he uses to purchases the good(s) that maximizes his utility subject to a set of given prices. The task is to find a set of equilibrium prices such that the market clears (meaning all the demands are met and no part of any item is leftover) when the buyers seek purchase good(s) to maximize their utility given the equilibrium prices. \\

Using duality theory, one can interpret the dual variable $p_i$ associated with the constraints $\sum_{j=1}^{n} x_{ij} \leq 1$ as the price of consuming a unit of good $i$. By invoking the KKT conditions, \citet{Vazi07a} shows the prices given by the optimal dual solution is a unique set of equilibrium prices. Moreover, the primal optimal solution for each buyer $j$ is precisely the quantity of good(s) that the buyer ends up purchasing that maximizes his utility given the equilibrium prices. \\

Now we can argue as to why the optimal primal solution is an envy free allocation. Because given the equilibrium prices, if a buyer desires another buyer's allocation, since he has the same purchasing power as any other buyer, he would instead use his money to obtain the allocation of the buyer that he envies. This would result in some surplus and deficit of goods, contradicting the fact that the given prices are equilibrium prices.   

\end{proof}

\section*{Proof of Proposition~\ref{CCEA equivalence}}
\begin{proof}
For piecewise uniform valuations, it is known that CCEA is equivalent to mechanism $1$ in \citep{CLPP12a}, which we will refer to as \emph{SimpleCCEA}. The remainder of the proof will focus on showing the equivalence between SimpleCCEA and the convex program for piecewise uniform valuations. To do so, given an allocation of SimpleCCEA, which is a feasible solution of the convex program, we will find a set of prices corresponding to the allocation and show that the prices are in fact the equilibrium prices defined by Vazirani on pages 105-107 of \cite{Vazi07a}. Moreover, this allocation would be an allocation that maximizes the agents' utility given the equilibrium prices. \\

Using the same notations as those in \citep{CLPP12a}, given a valuation profile, let $B$ be the set of buyers/agents and $G$ be the set of goods or intervals.  Let $S_i$ be the $i$-th bottleneck set computed by SimpleCCEA, i.e. 
$S_i = S_{min}  \in \arg \min_{S' \subseteq S} avg(S', X_i)$ in the $i$-th iteration of the subroutine of SimpleCCEA. 

Let $G_i$ be the set of goods that are distributed amongst the buyers in $S_i$. In the convex program, since each buyer is endowed with $1$ dollar and every buyer in $S_i$ receives $avg(S_i, X_i)$ units of good(s), it is natural to define the price of a unit of each good $k \in G_i$ to be 
\[p_k = \frac{1}{avg(S_i, X_i)} = \frac{|S_i|}{len(D(S_i, X_i))}.\] 
Notice that the prices for each good is well defined (i.e. each good has exactly one nonnegative price). This follows from the following observations:
\begin{enumerate}
\item $\cup G_i = G$ (or every good has at least one price). This follows from the assumption that every good is desired by at least one agent, which means that SimpleCCEA will allocate all of the goods.
\item $G_i \cap G_j = \emptyset$ for all $i \neq j$ (or every good has at most one price). This follows from the fact that no fractional parts of any good is allocated to agents from two or more bottleneck sets, which another algorithmic property of SimpleCCEA.
\end{enumerate}

To show that the $p_k$'s form a set of equilibrium prices, we will show that given the $p_k$'s, the buyers in every $S_i$ will choose to purchase \emph{only} goods from $G_i$ to maximize their utility function. We will do induction on the number of bottleneck sets. Consider the first bottleneck set $S_1$, we will show that 
\begin{lemma}$G_1$ are the only items that are desirable by buyers in $S_1$. 
\end{lemma}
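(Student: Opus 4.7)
My plan is to prove the lemma by unwinding the definitions used in SimpleCCEA and noting that, in the first iteration, $G_1$ coincides exactly with the set of goods desired by at least one buyer in $S_1$. Since SimpleCCEA (equivalently, Mechanism~1 of \citet{CLPP12a}) begins with the entire cake available, this reduces to a purely definitional observation supplemented by a short argument that the subroutine actually exhausts the desired region when it serves the first bottleneck set.

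First, I would recall that at the start of the algorithm the available cake is $X_1 = [0,1]$, so that
\[
D(S_1, X_1) \;=\; \Bigl(\bigcup_{i \in S_1} D_i\Bigr) \cap [0,1] \;=\; \bigcup_{i \in S_1} D_i,
\]
which is by definition the collection of all items desirable to at least one buyer in $S_1$. Thus to prove the lemma it suffices to show that $G_1 = D(S_1, X_1)$.

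Second, I would argue that the subroutine distributes $D(S_1, X_1)$ in its entirety among the members of $S_1$. Because $S_1 \in \arg\min_{S' \subseteq B} \mathrm{avg}(S', X_1)$, each buyer in $S_1$ can be assigned exactly $\mathrm{avg}(S_1, X_1) = \mathrm{len}(D(S_1, X_1))/|S_1|$ units of his desired cake without conflict, and the subroutine in fact performs this equal split over the full region $D(S_1, X_1)$. No part of $D(S_1, X_1)$ is left unallocated, since otherwise the subroutine would have had to stop early on $S_1$, contradicting the way Mechanism~1 of \citet{CLPP12a} processes the first bottleneck set. Hence $G_1 = D(S_1, X_1)$.

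Combining the two steps gives the lemma: an item is desirable to some buyer in $S_1$ if and only if it lies in $D(S_1, X_1) = G_1$. I do not anticipate any technical obstacle; the only point requiring a little care is the second step, namely checking that SimpleCCEA truly exhausts $D(S_1, X_1)$ rather than leaving a sliver behind, which follows directly from the definition of the bottleneck subroutine.
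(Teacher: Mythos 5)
Your proof is correct, but it follows a genuinely different route from the paper's. You take $G_1$ by its stated definition (the goods actually distributed to the buyers in $S_1$) and argue from the high-level description of Mechanism~1 of \citet{CLPP12a}: the subroutine hands out the entire region $D(S_1,X_1)=\bigcup_{i\in S_1}D_i$ among the members of $S_1$ in equal desired shares, and since for piecewise uniform valuations this region is a union of whole goods, the allocated goods coincide exactly with the goods desired by members of $S_1$. The paper instead works with the parametric max-flow implementation of the subroutine, in which $S_1$ and $G_1$ arise as the source side of a min cut: desirability edges have infinite capacity, so none can cross a min cut, giving $nbr(S_1)\subseteq G_1$, and minimality of the cut forces $G_1\subseteq nbr(S_1)$. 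The difference is what each argument takes as primitive. If one regards the cut $\{s\cup S_1\cup G_1,\dots\}$ as the definition of $G_1$ (as the paper's description of the subroutine does), the min-cut reasoning is the substantive step and your identification of $G_1$ with the allocated region still has to be connected back to the flow formulation. If one regards $G_1$ as the distributed goods (the definition given just before the lemma), your argument is the more direct and elementary one, but it leans on the exhaustiveness and feasibility of the equal split over $D(S_1,X_1)$ --- i.e., that each member of $S_1$ can simultaneously receive $avg(S_1,X_1)$ length of cake he desires and that nothing of $D(S_1,X_1)$ is left over --- which is a Hall-type property of the bottleneck set established in \citet{CLPP12a} and is essentially the same flow fact that the paper's cut argument makes explicit. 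So both proofs are sound; the paper's is self-contained relative to its flow-network description of the subroutine, while yours is shorter at the cost of importing that algorithmic property.
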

\begin{proof}
SimpleCCEA finds $S_1$ by solving a parametrized max flow problem on a bipartite network. The network has a node for every buyer and a node for every good. There is a directed edge from a buyer $i$ to a good $j$ with infinite capacity if good $j$ is desired by buyer $i$. In addition, there is a source $s$ and a sink $t$. There is a directed edge from $s$ to each buyer $i$ with capacity $\lambda$ and a directed edge from edge good $j$ to $t$ with capacity equaling the quantity of good $j$. $\lambda$ is set to $0$ initially so that the unique min cut in the network for $\lambda = 0$ is $\{s, B \cup G \cup t\}$. $\lambda$ is gradually raised until $\{s, B \cup G \cup t\}$ is no longer the unique min cut, at which point $S_1$ and $G_1$ are found by looking for another min cut of the form $\{s \cup S_1 \cup G_1, B \backslash S_1 \cup G \backslash G_1 \cup t\}$. Since $\{s \cup S_1 \cup G_1, B \backslash S_1 \cup G \backslash G_1 \cup t\}$ is a min cut, it must be the case that $G_1 = nbr(S_1)$ in the network, which proves the lemma. (If $nbr(S_1) \backslash G_1 \neq \emptyset$, then there will be an infinite capacity edge crossing the cut. On the other hand, if $ G_1 \backslash nbr(S_1) \neq \emptyset$, then replacing $G_1$ with $nbr(S_1)$ on the $s$ side of the cut would give a cut with a smaller capacity.)
\end{proof}
Since the agents have piecewise uniform valuation, in the setting of the convex program, each buyer in $S_1$ has the same utility for the items that he desire in $G_1$. Moreover, given that the prices of goods are identical in $G_1$, each buyer is indifferent between choosing among his desirable items for the best item in terms of bang per buck. Hence, for all buyers in $S_1$, the allocation given by SimpleCCEA maximizes buyer's utility given the prices $p_k$'s. Moreover, notice that the money of buyers in $S_1$ and goods in $G_1$ are exhausted by the allocation given by SimpleCCEA. \\

After the goods in $G_1$ are allocated to the buyers in $S_1$ we repeat the same argument for the remaining buyers and goods and the inductive hypothesis allows to conclude that for every $i$ and for all buyers in $S_i$, the allocation given by SimpleCCEA maximizes buyer's utility given the prices $p_k$'s. Moreover, the money of buyers in $S_i$ and goods in $G_i$ are exhausted by the allocation given by SimpleCCEA.  There is a slight difference between the inductive step and the base case, as it is possible that some buyer $b$ in $B \backslash S_i$ also desires certain goods in $G_j$ for some $j < i$. However, \citet{CLPP12a}  state that $avg(S_i, X_i)$ is a weakly increasing function of $i$, which means that $p_i = 1/avg(S_i, X_i)$ is a weakly decreasing function of $i$. Since we are dealing with piecewise uniform valuations, the utility of a buyer over his desirable goods are identical, this means that for any buyer in $S_i$, the goods that maximize his bang for buck are in $G_i$. \\

Putting everything together and we have shown that $p_k$'s constitute the set of equilibrium prices and SimpleCCEA gives an equilibrium allocation, which is an optimal solution to the convex program.
\end{proof}

\section*{Proof of Proposition~\ref{prop: uniform allocation is not SP}}
Consider the following profile of two agents.

Profile 1:
\begin{itemize}
\item[]$v_1(x) = 1 \ \text{if} \ x \in [0, 0.2], \ v_1(x) = 0 \ \text{if} \ x \in (0.2, 1].$
\item[]$v_2(x) = 0 \ \text{if} \ x \in [0,0.6],  \ v_2(x) = 1 \ \text{if} \ x \in (0.6, 1].$
\end{itemize}

\noindent The uniform allocation rule gives us the allocation:  
\begin{itemize}
\item[]$Y_1: \{\frac{1}{2}[0,0.2],  \frac{1}{2}(0.6,1]\}$. \\
\item[] $Y_2:  \{\frac{1}{2}[0,0.2],  \frac{1}{2}(0.6,1]\}$.
\end{itemize}
Let $A \subset (0.6,1]$ be the allocation that agent $2$ receives in this case. \\

Now consider profile 2:

\begin{itemize}
\item[]$v_1(x) = 1 \ \text{if} \ x \in [0, 0.2], \ v_1(x) = 0 \ \text{if} \ x \in (0.2, 1].$
\item[]$v_2(x) = 0 \ \text{if} \ x \in [0,0.1] \backslash A,  \ v_2(x) = 1 \ \text{if} \ x \in A.$
\end{itemize}

\noindent The uniform allocation rule gives us the allocation:  
\begin{itemize}
\item[]$Y_1: \{\frac{1}{2}[0,0.2],  \frac{1}{2}A\}$. \\
\item[] $Y_2:  \{\frac{1}{2}[0,0.2],  \frac{1}{2}A\}$.
\end{itemize}

\noindent Hence, agent $2$ in profile $2$ would misreport so that the reported profile is profile $1$.

\section*{Proof of Proposition~\ref{prop:CSD-well-defined}}
%
\begin{proof}

			We first prove that CSD is well-defined and results in a feasible allocation in which each agent gets $1/n$ size of the cake. Let $\mathcal{J'}=\{J_1,\ldots, J_{\ell}\}$ be a partitioning of the interval $[0,1]$ induced by the discontinuities in agent valuations and the cake cuts in the $n!$ cake allocations. We make a couple of claims about $\mathcal{J'}$ that following from the way 
	$\mathcal{J'}$ is constructed. 
	
	\begin{claim}
An agent is completely indifferent over each subinterval in $\mathcal{J'}$.
	\end{claim}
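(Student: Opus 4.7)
The plan is to exploit directly the way $\mathcal{J'}$ is constructed, namely as a common refinement of two finite sets of cut points on $[0,1]$: the discontinuity points of the reported valuation functions $v_1,\ldots,v_n$, and the endpoints of the pieces produced in the $n!$ runs of the inner CRSD-style subroutine. First I would observe that every consecutive point of discontinuity of every $v_i$ appears as an endpoint of some subinterval in $\mathcal{J'}$ by construction, so no subinterval $J_j \in \mathcal{J'}$ has any discontinuity of any agent's valuation function in its interior.

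Next, since each $v_i$ is piecewise constant and has no discontinuity in the interior of $J_j$, the function $v_i$ must take a single constant value $k_{ij}$ on $J_j$. Consequently, for any measurable subset $A \subseteq J_j$, we have $V_i(A) = \int_A v_i(x)\,dx = k_{ij}\cdot \mathrm{len}(A)$. In particular, any two equal-length sub-pieces of $J_j$ yield exactly the same value to agent $i$, which is the precise sense in which agent $i$ is ``completely indifferent'' over $J_j$ when comparing equal-length portions. Since this holds for every agent $i$ simultaneously, the claim follows.

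I do not anticipate any real obstacle here; the only subtlety is being precise that ``indifferent over $J_j$'' means ``indifferent between any two equal-length sub-pieces of $J_j$'' (an agent obviously prefers a larger sub-piece of $J_j$ to a smaller one whenever $k_{ij}>0$). This is the notion that is needed later when we argue that the randomized subroutine of Algorithm~\ref{CSD: subroutine} does not affect any agent's utility: on each $J_j$, every agent cares only about the total length of the portion he receives, not about which specific sub-piece of $J_j$ he gets.
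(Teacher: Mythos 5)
Your argument is correct and matches the paper's intent exactly: the paper simply asserts the claim as following from the construction of $\mathcal{J'}$, whose endpoints by definition include all discontinuity points of the reported piecewise constant valuations, so each $v_i$ is constant on every $J_j$ and the value of any sub-piece depends only on its length. Your added precision that ``indifferent'' means indifference among equal-length sub-pieces is a reasonable clarification, not a deviation.
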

	
	
		\begin{claim}
Let $X_i^{\pi}$ denote a maximum preference cake piece of size $1/n$ chosen by agent $i$ in the serial order $\pi$. For each $J\in \mathcal{J'}$ either $X_i^{\pi}$ contains $J$ completely or it does not contain any part of $J$.
		\end{claim}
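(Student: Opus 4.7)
The plan is to derive the claim directly from the definition of $\mathcal{J'}$. Recall that $\mathcal{J'}$ is the partition of $[0,1]$ whose boundary points are exactly the union of (a) the finitely many discontinuity points of the agents' valuation functions and (b) every cut point that appears in any of the $n!$ constrained-serial-dictatorship allocations $\{X_1^{\pi'},\ldots,X_n^{\pi'}\}_{\pi' \in \Pi^N}$. In particular, the endpoints of every piece $X_i^{\pi}$ occur as boundary points of subintervals of $\mathcal{J'}$, since these endpoints are precisely the points where agent $i$ ``cuts'' the remaining cake when it is his turn in the permutation $\pi$.

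First, I would note that by construction each $X_i^{\pi}$ is a finite union of closed subintervals of $[0,1]$, and the set $E(X_i^{\pi})$ of endpoints of these subintervals is a subset of the set of boundary points used to define $\mathcal{J'}$. Second, I would fix an arbitrary $J = [a,b] \in \mathcal{J'}$ and observe that, by the defining property of $\mathcal{J'}$, no element of $E(X_i^{\pi})$ lies strictly between $a$ and $b$: if such a point existed it would be a boundary point of $\mathcal{J'}$ and so $(a,b)$ could not be a single cell of the partition.

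From this the claim follows by a short case analysis. Since $X_i^{\pi}$ is a union of closed intervals whose endpoints avoid the open interval $(a,b)$, the intersection $X_i^{\pi} \cap (a,b)$ is either all of $(a,b)$ or empty. If it is all of $(a,b)$ then $X_i^{\pi}$ contains $J$ (up to endpoints, which have measure zero and are irrelevant to allocations under non-atomic valuations); if it is empty then $X_i^{\pi}$ contains no part of $J$. This dichotomy is exactly the content of the claim.

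The main obstacle, if any, is purely notational: reconciling the fact that the ``intervals'' used in the algorithm are treated as measure-theoretic sets (so that overlap at a single endpoint is harmless), with the fact that the claim as stated is a pointwise containment statement. I would handle this by working modulo measure-zero boundary sets throughout, which is consistent with the non-atomicity assumption on valuations made in the preliminaries.
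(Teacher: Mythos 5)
Your proof is correct and follows the same route as the paper, which simply asserts that the claim follows from the way $\mathcal{J'}$ is constructed (its cell boundaries include every cut point of every $X_i^{\pi}$, so no cut falls strictly inside a cell); your write-up just makes this explicit, including the harmless measure-zero endpoint issue.
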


		Now consider a  matrix of dimension $n! \times \ell$:  $B=(b_{ij})$ such that $b_{ij}=1$ if $J_j\subset X_i^{\pi}$ and $b_{ij}=0$ if $J_j\nsubset X_i^{\pi}$. Since for each $\pi\in \Pi^N$, each agent $i\in N$ gets $1/n$ of the cake in $X_i^\pi$, then it follows that $\sum_{i=1}^{n!}\sum_{j=1}^{\ell} b_{ij}{len(J_j)}=n!/n$. Hence, 
		\begin{align*}
		\sum_{j=1}^{\ell} \sum_{i=1}^{n!}b_{ij}{len(J_j)}/n!=1/n. 
		\end{align*}

Also consider a  matrix of dimension $n \times \ell$:  $M=(m_{ij})$ such that $m_{ij}$ denotes the fraction of $J_j$ that agent $i$ gets in $Y_i$. From the algorithm CSD, we know that $m_{ij}=\frac{count(i,J_j)}{n!}J$ where $count(i,J_j)$ is the number of permutations in which $i$ gets $J_j$. It is immediately seen that each column sums up to $1$. Hence each $J_j$ is complete allocated to the agents. We now prove that each agent gets a total cake piece of size $1/n$. We do so by showing that $\sum_{j=1}^{\ell} m_{ij} len(J_j)=1/n$. 
		\begin{align*}
			1/n &= \sum_{j=1}^{\ell} \sum_{i=1}^{n!}b_{ij}{len(J_j)}/n!
			= \sum_{j=1}^{\ell}(\sum_{i=1}^{n!}b_{ij})len(J_j)/n!
			= \sum_{j=1}^{\ell}(count(i,J_j))len(J_j)/n!\\
			&= \sum_{j=1}^{\ell}(\frac{count(i,J_j)}{n!})len(J_j)
			=\sum_{j=1}^{\ell} m_{ij} len(J_j).
\end{align*}

Hence $Y=(Y_1,\ldots, Y_n)$ the allocation returned by CSD is a proper allocation of the cake in which each agent gets a total cake piece of size $1/n$.
\end{proof}

\section*{Proof of Proposition~\ref{prop: CSD is robust prop}}
	\begin{proof}

		We first argue for proportionality of CSD. In the case where all agents have the same valuations as the valuation of $i$, $i$ is guaranteed $1/n$ of the value of the whole cake because of anonymity of CSD. First note that for each $\pi\in \Pi^N$ and preferences $V_{-i}$ of all agents other than $i$,
	$
		V_i(\text{CSD}^{\pi}(V_i,V_{-i}))\geq V_i(\text{CSD}^{\pi}(V_i,(V_i,\ldots, V_i))).
	$

		The reason is that when valuations are not identical, predecessors of $i$ in $\pi$ leave weakly better cake for $i$ as when their valuations are same as agent $i$. Hence, 
	$
		V_i(\text{CSD}(V_i,V_{-i}))\geq V_i(\text{CSD}(V_i,(V_i,\ldots, V_i)))=V_i([0,1])/n.
	$


			Finally, note that when an agent selects his best possible cake piece in each permutation, the exact height of the valuation function is not relevant and only the relative height matters. Hence, CSD in fact satisfies robust proportionality. Symmetry for CSD follows directly from symmetry for RSD.
	\end{proof}



\section*{Proof of Remark~\ref{remark:csd-not-sp}}

CSD would not be strategyproof if the fraction of each subinterval of $\mathcal{J'}$ is allocated deterministically and the allocation is made public information before the agents submit their valuation function. To see this, consider the following example of two agents.

Profile 1:
\begin{itemize}
\item[]$v_1(x) = 1 \ \text{if} \ x \in [0, 0.5], \ v_1(x) = 0 \ \text{if} \ x \in (0.5, 1].$
\item[]$v_2(x) = 1 \ \text{if} \ x \in [0, 0.5],  \ v_2(x) = 0 \ \text{if} \ x \in (0.5, 1].$
\end{itemize}

Running CSD gives us the allocation:  
\begin{itemize}
\item[]$Y_1: \{\frac{1}{2}[0,0.5],  \frac{1}{2}(0.5,1]\}$. \\
\item[] $Y_2:  \{\frac{1}{2}[0,0.5],  \frac{1}{2}(0.5,1]\}$.
\end{itemize}

Profile 2:
\begin{itemize}
\item[]$v_1(x) = 1 \ \text{if} \ x \in [0, 0.5], \ v_1(x) = 0 \ \text{if} \ x \in (0.5, 1].$
\item[]$v_2(x) = 0  \ \text{if} \ x \in [0,0.25], \ v_2(x) = 1 \ \text{if} \ x \in (0.25, 0.75], v_2(x) = 0  \ \text{if} \ x \in (0.75, 1].$
\end{itemize}

Running CSD gives us the allocation:  
\begin{itemize}
\item[]$Y_1 = \{[0,0.25], \frac{1}{2}(0.25,0.5], \frac{1}{2}(0.75,1]\}$. \\
\item[] $Y_2 = \{\frac{1}{2}[0.25,0.5], (0.5,0.75],  \frac{1}{2}(0.75,1]\}$.
\end{itemize}

Now it is possible that the CSD mechanism decides to gives $[0,0.25] \cup (0.75,1]$ to agent 1 and $[0.25,0.75]$ to agent 2 in profile 1. Consequently, knowing this, agent 2  in profile 2 would misreport his valuation function to be $v_2(x) = 1 \ \text{if} \ x \in [0, 0.5], v_2(x) = 0 \ \text{if} \ x \in (0.5, 1]$ in order to receive the allocation given in profile 1 and gain utility in doing so.\\


\section*{Proof of Proposition~\ref{CSD: SP}}

Consider the profiles $P$ and $P^i$, where $P$ is a profile where every agent reports truthfully and $P^i$ is a profile where agent $i$ misreports while fixing every other agent's report to be the same as that in $P$. Let $\sigma$ denote a permutation of $[n] = \{1,\ldots,n\}$ and let $S$ denote the set of all permutations of $[n]$.  Let $J_1,\ldots, J_k$ denote the intervals whose fractional allocations are specified to each agent by CSD in profile $P$ and $J'_1,\ldots, J'_{k'}$ denote the intervals  whose fractional allocations are specified to each agent by CSD in profile $P^i$. Let $V_i(J)$ denote agent $i$'s total utility derived from receiving the interval $J$ and $V_i(A(\sigma))$ and $V_i(A'(\sigma))$ denote the agent $i$'s total utility derived from his allocated pieces when the serial ordering of the agents is $\sigma$ in profile $P$ and $P^i$ respectively. Let $p_{ij}$ denote the probability that interval $J_j$ is assigned to agent $i$. Since random serial dictatorship is strategyproof in expectation, we have that
\[\sum_{j=1}^{k} p_{ij} V_i(J_j) = \sum_{\sigma \in S} \frac{1}{n!} V_i(A(\sigma)) \geq \sum_{\sigma \in S} \frac{1}{n!} V_i(A'(\sigma)) = \sum_{j=1}^{k'} p_{ij} V_i(J'_j) \]
Now CSD views $p_{ij}$ as allocating a $p_{ij}$ fraction of interval $J_j$ to agent $i$. In order for a deviating agent to properly evaluate the utility derived from his allocation in the deviating profile $P^i$, we have to come up with an allocation rule that actually \emph{attains} the utility $\sum_{j=1}^{k'} p_{ij} V_i(J'_j)$ for agent $i$ (either deterministically or in expectation) when the profile of reports is $P^i$. In particular, say if we want to allocate a subinterval of $J_i$ with length $p_{ij}$ times that of $J_i$ to agent $i$ at random, then this random allocation rule must satisfy the property that $\sum_{j=1}^{k'}E[V_i(A_{ij})] = p_{ij}V_i(J'_j)$, where $A_{ij} \subset J'_j$ such that $|A_{ij}| = p_{ij}|A_{ij}|$.  In order to do so, we will show that the randomized allocation rule for CSD satisfy the property that every agent $i$ and interval $J_j$, we have that $E[V_i(A_{ij})] = p_{ij}V_i(J'_j)$ for all valuation functions $V_i$, where  $A_{ij} \subset J'_j$ such that $|A_{ij}| = p_{ij}|A_{ij}|$. \\

Notice that $mod(U_j + \sum_{k = 1}^{i-1}p_{nj}(b_j-a_j))$ is uniformly distributed on $[a_j,b_j]$ and $A_{ij}$ has length $p_{ij}(b_j-a_j)$. The fact that $E[V_i(A_{ij})] = p_{ij}V_i(J'_j)$ follows from the following lemma.

\begin{lemma}\label{lemma: CSD-random}
Let $U$ be uniformly distributed on the interval $[a,b]$ and let $0 \leq \alpha \leq 1$. Let $A = [U, U + \alpha (b-a)]$ if $U + \alpha (b-a) \leq b$ and $A = [a, U - (1 - \alpha)(b-a)] \cup [U, b]$ if $U + \alpha (b-a) > b$, then we have that 
$E_U[V_i(A)] = \alpha V_i([a,b])$, where $V_i(X) = \int_{X} v_i(x) dx$ for any integrable function $v_i$.
\end{lemma}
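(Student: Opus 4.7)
The plan is to reduce the claim to a Fubini-style swap, where the only nontrivial step is identifying the pointwise probability that an arbitrary fixed point of $[a,b]$ lands in the random set $A$. Concretely, if I can show that for every fixed $x \in [a,b]$ the probability (over the uniform choice of $U$) that $x \in A$ equals exactly $\alpha$, then Fubini/Tonelli (applicable since $v_i$ is integrable and the indicator $\mathbf{1}_{x \in A(U)}$ is bounded) immediately gives
\[
E_U[V_i(A)] \;=\; E_U\!\left[\int_a^b v_i(x)\,\mathbf{1}_{x \in A}\,dx\right] \;=\; \int_a^b v_i(x)\,P_U(x \in A)\,dx \;=\; \alpha\int_a^b v_i(x)\,dx,
\]
which is exactly $\alpha V_i([a,b])$. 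So the whole lemma boils down to the pointwise probability computation.

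The cleanest way I would establish $P_U(x \in A) = \alpha$ is a geometric reinterpretation: identify the endpoints $a$ and $b$ of the interval to form a circle of circumference $b-a$. Unwinding the two cases in the definition of $A$, the set $A$ is precisely the arc of length $\alpha(b-a)$ starting at $U$ and proceeding clockwise around the circle; the two-piece case in the definition is exactly what happens when this arc wraps past $b$ back to $a$. Since $U$ is uniform on the circle and the uniform distribution is translation-invariant, the probability that a fixed point $x$ lies in a random arc of length $\alpha(b-a)$ equals the arc's length divided by the circumference, namely $\alpha$.

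If a more elementary argument is desired, the same fact can be verified by direct bookkeeping: fix $x \in [a,b]$, partition the range of $U$ according to whether $U \leq a + (1-\alpha)(b-a)$ (the one-piece case) or $U > a + (1-\alpha)(b-a)$ (the two-piece case), and in each case compute the Lebesgue measure of the subset of $U$-values for which $x \in A(U)$. The two measures add up to $\alpha(b-a)$, giving the probability $\alpha$ after dividing by $b-a$.

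The main obstacle is really only the pointwise probability step; everything after that is formal. I expect the circular interpretation to be the cleanest presentation and the case analysis to serve as a verification. No regularity assumption beyond integrability of $v_i$ is needed, since the Fubini application only requires $|v_i|$ to be integrable on $[a,b]$, which is part of the hypothesis.
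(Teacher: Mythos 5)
Your proposal is correct and is essentially the paper's own argument in probabilistic dress: the paper extends $v_i$ periodically (your circle identification) and swaps the order of integration in the double integral, and the inner integral $\frac{1}{b-a}\int_{y-\alpha(b-a)}^{y}dx=\alpha$ it evaluates is exactly the pointwise inclusion probability $P_U(y\in A)$ that you compute via translation invariance. No gap; the Fubini step and the length-$\alpha(b-a)$ observation are the same two ingredients in both write-ups.
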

\begin{proof}
Define $\bar{v_i} = v_i$ for $a \leq x \leq b$ and $\bar{v_i}(x) = \bar{v_i}(x+(b-a))$ for $x$ outside $[a,b]$. Since $\bar{v_i}$ is periodic, then it suffices to show that 
\[ E_U[V_i(A)] = \int_{a}^{b} \int_{x}^{x+\alpha (b-a)} \bar{v_i}(y) dy \frac{1}{b-a} dx = \alpha \int_{a}^{b} \bar{v_i}(x) dx = \alpha V_i([a,b]).\]
By drawing a picture of the region that we evaluate the integral over and due to the fact that $\bar{v_i}$ is periodic, we have that 
\[ \frac{1}{b-a} \int_{a}^{b} \int_{x}^{x+\alpha (b-a)} \bar{v_i}(y) dy dx =   \frac{1}{b-a} \int_{a}^{b}\bar{v_i}(y) \int_{y-\alpha (b-a)}^{y}  dx dy =\alpha \int_{a}^{b} \bar{v_i}(x) dx,\]
which proves the lemma.
\end{proof}

\section*{Proof of Proposition~\ref{CSD: not GSP for PWC}}

\begin{proof}
Let $a = [0,0.25], b = (0.25, 0.5], c = (0.5, 0.75], d=(0.75,1]$. Consider the following two profiles of valuations. \\
Profile 1:
\begin{itemize}
\item[]$v_1(x) = 4 \ \text{if} \ x \in a, \ v_1(x) = 3 \ \text{if} \ x \in b, \ v_1(x) = 2 \ \text{if} \ x \in c, \ v_1(x) = 1 \ \text{if} \ x \in d.$
\item[]$v_2(x) = 3  \ \text{if} \ x \in a, \ v_2(x) = 4  \ \text{if} \ x \in b, \ v_2(x) = 1 \ \text{if} \ x \in c, \ v_2(x) = 2 \ \text{if} \ x \in d.$
\end{itemize}

Running CSD gives us the allocation:  
\begin{itemize}
\item[] $Y_1 = \{\frac{1}{2}a , \frac{1}{2}b , \frac{1}{2}c, \frac{1}{2}d\}$. \\
\item[] $Y_2 = \{\frac{1}{2}a, \frac{1}{2}b, \frac{1}{2}c, \frac{1}{2}d\}$.
\end{itemize}

Profile 2:
\begin{itemize}
\item[]$v_1(x) = 4 \ \text{if} \ x \in a, \ v_1(x) = 2 \ \text{if} \ x \in b, \ v_1(x) = 3 \ \text{if} \ x \in c, \ v_1(x) = 1 \ \text{if} \ x \in d.$
\item[]$v_2(x) = 2  \ \text{if} \ x \in a, \ v_2(x) = 4  \ \text{if} \ x \in b, \ v_2(x) = 1 \ \text{if} \ x \in c, \ v_2(x) = 3 \ \text{if} \ x \in d.$
\end{itemize}

Running CSD gives us the allocation:  
\begin{itemize}
\item[]$Y_1 = \{a,  c\}$. \\
\item[] $Y_2 = \{b, d\}$.
\end{itemize}
Hence, agents with true valuation in profile 1 would misreport together to profile 2, which means that CSD is not group strategyproof for 2 agents.
\end{proof}

\section*{Proof of Proposition~\ref{prop: CSD is not envy-free}}

\begin{proof}
	There are three agents, each with piecewise uniform valuation function.
	$v_1(x) = 1.5$ for $x \in [0, 2/3]$ and $0$ otherwise. \\
	$v_2(x) = 1.5$ for $x \in [0, 1/3] \cup (2/3,1]$ and $0$ otherwise. \\
	$v_3(x) = 1.5$ for $x \in (1/3, 1]$ and $0$ otherwise. \\
 
	Let $a = [0,1/3]$, $b = (1/3, 2/3]$, $c = (2/3, 1]$. 

	We adopt the following implementation of CSD: when it is agent i's turn to pick, out of the pieces of the remaining cake that he likes, he takes the \emph{left-most} such piece with length 1/n, where n is the number of agents.

	If the priority ordering were $1,2,3$, then a feasible assignment that respects the preferences is 1 $\leftarrow$ a, 2 $\leftarrow$ c, 3 $\leftarrow$ b.

	If the priority ordering were 1,3,2, then a feasible assignment that respects the preferences is $1 \leftarrow a,$ $3 \leftarrow b$, $2 \leftarrow c$.

	If the priority ordering were $2,1,3$, then a feasible assignment that respects the preferences is $2 \leftarrow a$, $1 \leftarrow b$, $3 \leftarrow c$.

	If the priority ordering were $2,3,1$, then a feasible assignment that respects the preferences is $2 \leftarrow a$, $3 \leftarrow b$, $1 \leftarrow c$.

	If the priority ordering were $3,1,2$, then a feasible assignment that respects the preferences is $3\leftarrow b$, $1 \leftarrow a$, $2 \leftarrow c$.

	If the priority ordering were $3,2,1$, then a feasible assignment that respects the preferences is $3 \leftarrow b$, $2 \leftarrow a$, $1 \leftarrow c$.
	\noindent
	Then, the CSD allocation is as follows.
	\begin{align*}
		Y_1 = &\quad \{\frac{1}{2}[0,1/3], \frac{1}{6}(1/3, 2/3], \frac{1}{3}(2/3, 1]\}\\
		Y_2 = &\quad \{\frac{1}{2}[0,1/3], \frac{1}{2}(2/3,1]\}\\
		Y_3 = &\quad \{\frac{5}{6}(1/3, 2/3], \frac{1}{6}(2/3,1]\}
	\end{align*}
	
	%

	Clearly, agent 1 envies agent 3 in this case.
\end{proof}

\section*{Proof of Proposition~\ref{prop: CCEA envy for claims}}

\begin{proof}[Sketch]Consider a fractional assignment $p$ returned by the CC algorithm. Without private endowments CC is equivalent to the EPS algorithm of \citet{KaSe06a}. Assignment $p$ satisfies justified envy-freeness in presence of variable eating rates: $u_i(p_i) \geq (c_i/c_j){u_i(p_{j})}$ for all utilities $u$ consistent with preferences of $i$ over the houses. The informal intuition is that at any point during the running of CC, an agent $i$ with a higher eating rate than $j$ will be `eating' his most favoured object(s) faster than $j$ even if $j$ is also has the eating the same object(s). Hence, for all $v_i'\in \hat{V_i}$, it is the case that for $j\neq i$, $\int_{X_i}v_i'(x)dx\geq (c_i/c_j)\int_{X_j}v_i'(x)dx$. 
\end{proof}

\section*{Proof of Proposition~\ref{prop:sp-with-speed}}

\begin{proof}[Sketch]
Given a variable claim instance $I$ of $n$ agents where agent $i$ has claim rate $c_i$ for $i = 1, \ldots, n$. We may assume without lost of generality that the claim rates are integral. If they are not, then we can simply multiple each claim rate $c_i$ by a common denominator to make each $c_i$ integral. Doing so will not change the allocation given by the algorithm since only relative claim rates matter to the algorithm. \\
       Now consider a cake cutting instance $I'$ of $\sum_{i=1}^{n} c_i$ agents, where the agents have piecewise uniform utility function and there are no private endowments or variable claims. Moreover, for every $i = 1,\ldots,n$, there are $c_i$ agents in $I'$ each of whom has the same utility function as that of agent $i$ in $I$. It is not difficult to see that if one aggregates the allocation that the $c_i$ agents in $I'$ who share agent $i$'s valuation in $I$, then one would get an equivalent allocation (in terms of utility) to agent $i$'s allocation.\\ 
Suppose for the sake of contradiction that CCEA is not GSP for the case of variable claims, then in some instance $I$, there exists some coalition $S$ of the agents that weakly gains in utility by misreporting their preference. Now consider the equivalent instance $I'$ with no variable claims under the aforementioned transformation, then there exists some coalition $S'$ of the agents in $I'$ that weakly gains in utility by misreporting their preference, which implies that CCEA is not group-strategyproof for the no variable claims case, contradicting the result of Proposition~\ref{prop:gsp1}.
\end{proof}

\section*{Proof of Proposition~\ref{th:maintheorem3b}}

\begin{proof}
The allocation can be obtained by solving the following convex program.
\begin{align}
	\text{min} \quad
	-\sum_{j =1}^{n} c_jlog(u_j) \notag \\
	\text{s.t.} \quad
	u_j = \sum_{i=1}^{k} v_{ij}x_{ij} &\quad \forall j = 1, \ldots, n \notag \\
	\sum_{j=1}^{n} x_{ij} \leq l_i &\quad \forall i = 1, \ldots, k \notag \\
	x_{ij} \geq 0 &\quad \forall i, j. \notag \notag
	\end{align}

The proof of the desired properties is similar to the case where $c_j = 1$.
\end{proof}

\section*{Proof of Proposition~\ref{prop:CCEA endowments}}

\begin{proof}[Sketch]Consider a fractional assignment $p$ returned by the CC algorithm.
We know that $p$ satisfies justified envy-freeness for the random/fractional assignment problem~\citep[Prop. 4, ][]{AtSe11a}. If $p_i \mathrel{\succsim_i^{SD}} p_{j}$, then $\forall v_i'\in \hat{V_i} \midd \int_{X_i}v_i'(x)dx\geq \int_{X_j}v_i'(x)dx$. If $\neg(p_i \mathrel{\succsim_{j}^{SD}} e_{j})$, then $ \exists v_j'\in \hat{v_j}\midd \int_{X_i}v_j'(x)dx<\int_{\omega(j)}v_j'(x)dx$.  Hence $X$ satisfies justified envy-freeness for private endowments. 
\end{proof}

\end{document}